\let\argmin\undefined
\let\argmax\undefined
\newcommand{\qedhere}{\qed}
\newtheorem{question}{Question}
\newtheorem{fact}{Fact}
\newtheorem{axiom}{Axiom}
\newtheorem{construction}{Construction}
\newcommand{\Comments}{1}
\definecolor{gray}{gray}{0.5}
\definecolor{darkgreen}{rgb}{0,0.5,0}
\newcommand{\mynote}[2]{\ifnum\Comments=1\textcolor{#1}{#2}\fi}
\newcommand{\tr}{\top}
\newcommand{\id}{\mathop{\mathrm{id}}}
\newcommand{\A}{\mathcal{A}}
\newcommand{\B}{\mathcal{B}}
\newcommand{\D}{\mathcal{D}}
\newcommand{\E}{\mathbb{E}}
\newcommand{\F}{\mathcal{F}}
\renewcommand{\O}{\mathcal{O}}
\renewcommand{\P}{\mathcal{P}}
\newcommand{\R}{\mathcal{R}}
\let\oldS\S 
\newcommand{\sect}{\mbox{\oldS\hspace{-.1mm}}}
\renewcommand{\S}{\mathcal{S}}
\newcommand{\T}{\mathcal{T}}
\newcommand{\V}{\mathcal{V}}
\newcommand{\X}{\mathcal{X}}
\renewcommand{\vec}[1]{{\mathbf{#1}}}
\newcommand{\0}{\vec{0}}
\renewcommand{\o}{\mathit{o}}
\newcommand{\convhull}{\mathsf{Conv}}
\newcommand{\conv}{\convhull}
\newcommand{\ext}{\mathrm{ext}}
\newcommand{\linear}{\mathsf{Lin}}
\newcommand{\affine}{\mathsf{Aff}}
\newcommand{\subgrad}[1]{d #1}
\newcommand{\selsubgrad}[2]{\in \partial {#1}}
\newcommand{\toto}{\rightrightarrows}
\newcommand{\lsc}{l.s.c.}
\newcommand{\dom}{\mathrm{dom}}
\newcommand{\defeq}{\doteq}
\newcommand{\ones}{\mathbbm{1}}
\newcommand{\abs}[1]{\left\lvert #1 \right\rvert}
\def\reals{\mathbb{R}}
\def\integers{\mathbb{Z}}
\def\extreals{\mathbb{\overline{R}}}
\newcommand{\argmin}{\mathop{\mathrm{argmin}}}
\newcommand{\argmax}{\mathop{\mathrm{argmax}}}
\newcommand{\argsup}{\mathop{\mathrm{argsup}}}
\newcommand{\inprod}[1]{\left\langle #1 \right\rangle}
\newcommand{\frobeniusprod}{\!:\!}
\newcommand{\shortcite}{\cite}
\newcommand{\scorename}{affine score\xspace}
\newcommand{\scorenames}{affine scores\xspace}
\newcommand{\scorevar}{\ensuremath{\mathsf{A}}}
\newcommand{\ASS}{\scorevar}
\newcommand{\SRS}{\ensuremath{\mathsf{S}}}
\newcommand{\MDS}{\ensuremath{\mathsf{M}}}
\newcommand{\AS}[2][]{\scorevar#1\parenargs{#2}}
\newcommand{\SR}[2][]{\SRS#1(#2)}
 \renewcommand{\ones}{\mathds{1}}
\renewcommand{\Comments}{0}
\renewcommand{\AS}[2][]{\ASS#1(#2)}
\newcommand{\ASdef}[1][]{\ASS#1 : \T \times \T \to \extreals}
\newcommand{\ASpropdef}[1][]{\ASS#1 : \R \times \T \to \extreals}
\title{General Truthfulness Characterizations Via Convex Analysis}
\author{
RAFAEL M. FRONGILLO\affil{University of Colorado, Boulder}
IAN A. KASH \affil{University of Illinois at Chicago}}
\begin{abstract}
We present a model of truthful elicitation which generalizes and extends mechanisms, scoring rules, and a number of related settings that do not qualify as one or the other.  Our main result is a characterization theorem, yielding characterizations for all of these settings, including a new characterization of scoring rules for non-convex sets of distributions.  We generalize this model to eliciting some property of the agent's private information, and provide the first general characterization for this setting.  We combine this characterization with duality to give a simple construction to convert between scoring rules and randomized mechanisms.  We also show how this characterization gives a new proof of a mechanism design result due to Saks and Yu.
\end{abstract}
\begin{document}

\maketitle

\section{Introduction}

Information elicitation, the gathering of information from an agent by a principal, is a key problem in economics, statistics, machine learning, and finance.  In these settings, one is interested in obtaining the preferences of an agent, a probability distribution from an expert, the desired prediction from an algorithm, and summary statistics of the risk of a portfolio, respectively.  They key challenge in all of these settings is to reward the agent in such a way that the agent will truthfully reveal his knowledge rather than being encouraged to reveal some incorrect version.  A central question in all these literatures has been to characterize all the ways this can be achieved; that is, to characterize all the truthful mechanisms, proper scoring rules, proper loss functions, or elicitable properties, respectively.  Many variants on such characterization theorems have been proved. (See Section~\ref{sec:prior} for a partial list.)  Moreover, the proofs of these results all use various tools from convex analysis; in particular, the characterizations tend to be in terms of properties of convex functions and their subgradients.

Despite this commonality of question and technique, the literature on mechanism design has proceeded essentially independent of these other literatures and vice versa.  Further, many characterizations are presented for a particular setting, and it is not immediately clear how they would apply to others.  As a result, there are many theorems in the literature whose proofs are slight variations on existing results to adapt them to a new setting.  While ex post the variations are slight, ex ante the needed changes were frequently not obvious and required significant effort to pin down.  An excellent example of this is the scoring rules characterization by Gneiting and Raftery~\citeyear{gneiting2007strictly}, which has had at least three variants used in various settings~\cite{boutilier2012eliciting,chen2011information,cid-sueiro2012proper}.

In this paper we address these two problems by formulating a model of information elicitation which is general enough to encompass all these variants, and providing a characterization for it.  This obviates the need for additional new theorems in any future application that fits into our framework.  Further, it provides a clearer understanding of the connections between mechanism design and scoring rules which allows us to translate results from one domain to the other.

Our model consists of a single agent endowed with some type $t$ known only to the agent, who is asked to reveal his type to the principal.  After doing so, the principal gives the agent a score $\AS{t',t}$ that depends on both the agent's reported type $t'$ and his true type $t$.
We allow \ASS\ to be quite general, with the main requirement being that $\AS{t',\cdot}$ is an affine
function (linear transformation plus a constant) of the true type $t$, and seek to understand when it is optimal for the agent to truthfully report his type.  Given this truthfulness condition, it is immediately clear why convexity plays a central role---when an agent's type is $t$, we desire the score for telling the truth to satisfy $\AS{t,t} = \sup_{t'} \AS{t',t}$, which means this ``consumer surplus'' function $G(t) := \AS{t,t}$ must be convex as the pointwise supremum of affine functions.

One special case of our model is mechanism design with a single agent,
where the designer wishes to select an outcome based on the agent's type.
In this setting, $\AS{t',\cdot}$ can be thought of as the allocation and payment given a report of $t'$, which combine to determine the utility of the agent as a function of his type.
In this context, $\AS{t,t}$ is the consumer surplus function (or indirect utility function), and Myerson's well-known characterization~\shortcite{myerson1981optimal} states that, in single-parameter settings, a mechanism is truthful if and only if the consumer surplus function is convex and its derivative (or subgradient at points of non-differentiability) is the allocation rule.  More generally, this remains true in higher dimensions (see~\cite{rochet1985taxation}).  Note that here the restriction that $\AS{t',\cdot}$ be affine is without loss of generality, because we view types as functions and function application is a linear operation. (See Section~\ref{chap:general-modelmechanism-design} for more details.)

Another special case is a scoring rule, also called a \emph{proper loss} in the machine learning literature, where an agent is asked to predict the distribution of a random variable and given a score based on the observed realization of that variable.  In this setting, types are distributions over outcomes, and $\AS{t',t}$ is the agent's subjective expected score for a report that the distribution is $t'$ when he believes the distribution is $t$.  As an expectation, this score is linear in the agent's type.
Gneiting and Raftery~\shortcite{gneiting2007strictly} unified and generalized existing results in the scoring rules literature by characterizing proper scoring rules in terms of convex functions
and their subgradients.

Further, the generality of our model allows it to include settings that do not quite fit into the standard formulations of mechanisms or scoring rules.
These include counterfactual scoring rules for decision-making~\cite{othman2010decision,chen2011information,chen2011decision}, proper losses for machine learning with partial labels~\cite{cid-sueiro2012proper}, mechanism design with partial allocations~\cite{cai2013designing}, responsive lotteries~\cite{feige2010responsive}, and mechanisms for crowdsourcing categorical information~\cite{shah2015double}.

In many settings, it is difficult, or even impossible, to have agents report an entire type $t\in\T$.  For example, when allocating a divisible good (e.g. water), a mechanism that needs to know how much an agent would value each possible allocation requires him to submit an infinite-dimensional type.  Even type spaces which are exponential in size, such as those that arise in combinatorial auctions, can be problematic from an algorithmic perspective.  Moreover, in many situations, the principal is \emph{uninterested} in all but some small aspect of an agent's private type.  For example, the information is often to be used to eventually make a specific decision, and hence only the information directly pertaining to the decision is actually needed---why ask for the agent's entire probability distribution of rainfall tomorrow if a principal wanting to choose between $\{$umbrella, no umbrella$\}$ would be content with its expected value, or even just whether she should carry an umbrella or not?

It is therefore natural to consider an indirect elicitation model where agents provide some sort of summary information about their type.  Such a model has been studied in the scoring rules literature, where one wishes to elicit some statistic, or \emph{property}, of a distribution, such as the mean or quantile~\cite{savage1971elicitation,osband1985information,lambert2008eliciting,gneiting2011making}.  We follow this line of research, and extend the \scorename framework to accept reports from a different (intuitively, much smaller) space than $\T$.

\subsection{Our Contribution}

Our main theorem (Theorem~\ref{thm:main-char}) is a general characterization theorem that generalizes and extends known characterization theorems for proper scoring rules (substantially) and truthful mechanisms (slightly).  We also survey applications to related settings and show our theorem can be used to provide characterizations for them as well, including new results about mechanism design with partial allocation and responsive lotteries.  Thus, our theorem eliminates the need to independently derive characterizations for such settings.  We further generalize our theorem to
the case where the desired report is a function of the private information (a {\em property}) rather than the full information (Theorem~\ref{thm:prop-char}) and provide a variant of this characterization which makes use of convex duality (Theorem~\ref{thm:dual-report-char}).  Finally, we conclude by examining cases where the set of possible reports is finite, which Lambert and Shoham~\citeyear{lambert2009eliciting} showed correspond to power diagrams, a generalization of Voronoi diagrams (see Section~\ref{sec:props-finite}).  We extend their result to settings where the private information need not be a probability distribution, and give a tight characterization for a particular restricted ``simple'' case.  We also give an explicit construction for generating power diagrams from other measures of distances via a connection to {\em Bregman Voronoi diagrams}~\cite{boissonnat2007bregman}.

\subsubsection{Scoring Rules}

Our contribution to the scoring rules literature is a characterization of proper scoring rules for non-convex sets of distributions, the first of its kind.  As motivation, note that it is very natural to ask for scoring rules which are proper with respect to a particular family of distributions, e.g. if the principal was convinced that the agent's belief came from such a family, but many common families are non-convex, including most exponential families (Gaussian, Poisson, exponential, Laplace, Pareto, etc.)\,\cite{nielsen2009statistical}.  Moreover, general non-convex sets of distributions have proven useful as a way of separating informed and uninformed experts~\cite{babaioff2011only,fang2010proper}.  Despite how natural and useful it is to restrict to the non-convex case, no characterization was known.

We give such a characterization, which surprisingly shows that the only scoring rules which are proper for a non-convex $\P$ are those which can be extended to a proper scoring rule on $\conv\P$; in other words, one does not gain flexibility by ruling out distributions in the ``interior'' of $\P$.  Interestingly, the main technical tool we need to extend the proof to this case comes from the mechanism design literature, where characterizations for non-convex type spaces have been previously established.  Additionally, we show that properness of a scoring rule is a local property, in the sense that it suffices to verify it in a neighborhood around each distribution. See Section~\ref{sec:local} and Corollary~\ref{thm:locally-proper}. 

\subsubsection{Property Elicitation}

The two subsequent generalizations of our main theorem provide the first general characterizations for arbitrary elicitable properties, which capture only part of the agent's private information (e.g. the mean of a probability distribution rather than the distribution itself).  The first is essentially a direct generalization of Theorem~\ref{thm:main-char}, which keeps the same general structure but adds the constraint that the convex function must be flat on sets of types which share an optimal report. In addition to serving as our main tool to derive the remainder of our results, this theorem provides several ways to show that a property is not elicitable (by showing that no such convex function can exist).  This allows us to show that the smallest confidence interval containing a given amount of mass is not elicitable, thus settling an open question.

The second result is a transformation of this theorem using {\em duality}, which shows that there is a strong sense in which properties {\em are} subgradients of convex functions.  Relatedly, we introduce the notion of {\em direct elicitability}, where the the property to be elicited is the subgradient rather than a link function applied to it.  We also use this result to introduce notions of dual properties and scores, which gives a new construction to convert between scoring rules and randomized mechanisms (see Construction~\ref{construction}).
Subsequent work has made use of these theorems to derive new results on linear and vector-valued properties~\cite{frongillo2015vector}, elicitation complexity~\cite{frongillo2015elicitation}, and peer prediction~\cite{frongillo2016geometric}.

\subsubsection{Mechanism Design}

For mechanism design, our contributions are more modest.  Our characterization is a minor extension of Archer and Kleinberg's characterization, removing a technical assumption~\cite[Theorem 6.1]{archer2008truthful}.  We show how many previous results about implementability and revenue equivalence can be translated into our framework, but do not introduce significant new results.  Instead, the main interest of our approach is that by translating economic questions into convex analysis questions, we can simplify some of the proofs and expose the underlying intuition.  Additionally, we show how known results about scoring rules yield a new proof of an implementability theorem due to Saks and Yu~\citeyear{saks2005weak}.

\subsubsection{Novel Elicitation Settings}

Perhaps the most useful direct application of our main theorem is to elicitation settings that do not quite match the standard frameworks of scoring rules or mechanism design, as it immediately provides a characterization for such settings.  In Section~\ref{sec:other-applications} we demonstrate the versatility of our characterization by surveying five recent such examples (on decision rules, proper losses for partial labels, mechanism design with partial allocations, responsive lotteries, and crowdsourcing), and showing how our results could have been applied.

\subsubsection{Summary of Novel Results}

Since our results cover a range of applications and include many reframings or small extensions of existing results, we summarize the main novel results here.

\begin{enumerate}
\item A general characterization theorem for many non-standard applications;
\item A characterization of scoring rules for non-convex sets of distributions;
\item Scoring rules are proper iff they are locally proper;
\item A new geometric proof of the Saks--Yu~\citeyear{saks2005weak} result on implementable mechanisms;
\item A characterization of elicitability for arbitrary properties;
\item The smallest confidence interval is not elicitable;
\item A new construction to convert between scoring rules and randomized mechanisms.
\end{enumerate}

\subsection{Related Work}
\label{sec:prior}

The similarities between mechanisms and scoring rules were noted by (among others) Fiat et al.~\shortcite{fiat2013approaching}, who gave a construction to convert mechanisms into scoring rules and vice versa, and Feige and Tennenholtz~\shortcite{feige2010responsive}, who gave techniques to convert both to ``responsive lotteries.''  Further, techniques from convex analysis have a long history in the analysis of both models (see~\cite{gneiting2007strictly,vohra2011mechanism}).  However, we believe that our results use the ``right'' representation and techniques, which leads to more elegant characterizations and arguments.  For example, the construction used by Fiat et al. has the somewhat awkward property that the scoring rule corresponding to a mechanism has one more outcome than the mechanism did, a complication absent from our results.
Similarly, the constructions used by Feige and Tennenholtz only handle special cases and they claim ``there is no immediate equivalence between lottery rules and scoring rules,'' while we can give such an equivalence.
So while prior work has understood that there is a connection, the nature of that connection has been far from clear.

A large literature in mechanism design has explored characterizations of when allocation rules can be truthfully implemented; see e.g.~\cite{mcafee1988multidimensional,jehiel1996how,jehiel1999multidimensional,jehiel2001efficient,saks2005weak,bikhchandani2006weak,muller2007weak,archer2008truthful,ashlagi2010monotonicity,carroll2012when}.
Similarly, work on revenue equivalence can be cast in our framework as well
\cite{myerson1981optimal,krishna2001convex,heydenreich2009characterization,carbajal2012mechanism}.
For scoring rules, our work connects to a literature that has used non-convex sets of probability distributions to separate (usefully) informed exports from uninformed experts
\cite{fang2010proper,babaioff2011only}.

The study of indirect elicitation in scoring rules can be traced to Savage in 1971, who considered the problem of eliciting expected values of random variables~\cite{savage1971elicitation}.  Osband and Reichelstein~\citeyear{osband1985information} go on to provide a rigorous version, generalizing to expected values of functions of the underlying variable.  Since then several similar results have appeared with varying assumptions (e.g.~\cite{banerjee2005optimality,gneiting2011making,abernethy2012characterization}), all of which have recently been unified in~\cite{frongillo2015vector}.  More generally, many authors have considered other common statistics as well, such as quantiles, ratios of expectations, and expectiles~\cite{osband1985information,gneiting2007strictly,gneiting2011making,grant2013consistent}.

While these and many other examples of specific statistics have appeared in the literature, it was perhaps Osband~\citeyear{osband1985providing} and Lambert, Pennock, and Shoham~\citeyear{lambert2008eliciting} who first considered the following general problem: given an outcome space $\O$ and an \emph{arbitrary} map $\Gamma : \Delta(\O) \to \reals$, under what circumstances can we construct a proper scoring rule to elicit $\Gamma(t)$?  Moreover, what is the full classification of functions $\Gamma$ which can be elicited in this way?

Several authors have made significant contributions toward answering this general question for the case where $\Gamma$ is real-valued~\cite{lambert2008eliciting,lambert2018elicitation,gneiting2011making,gneiting2014probabilistic,steinwart2014elicitation} and vector-valued~\cite{osband1985providing,lambert2008eliciting}.
Lambert and Shoham~\citeyear{lambert2009eliciting} also characterized elicitable properties $\Gamma$ which take on finitely many values, showing a connection to power diagrams from computational geometry.

We focus primarily on properties in the scoring rule context because, while non-direct-revelation mechanisms are often studied, the focus is not typically on the report itself but on the outcome of the mechanism.  
However, properties are somewhat more natural in mechanism design with a finite set of allocations.  In particular, mechanisms that elicit a ranking over outcomes rather than a utility for each outcome (common in, e.g., matching contexts) are a form of property elicitation, and our results are related to characterizations due to Carroll~\citeyear{carroll2012when}.  Our results about finite properties also provide a new proof of a theorem due to Saks and Yu that characterizes when allocation rules that select from a finite set of allocations have payments that make them truthful~\cite{saks2005weak}.

Subsequent to our work, our properties characterizations have been applied in several contexts.  They have been used to provide a new unified characterization of linear properties and several results towards to a characterization of vector-valued properties~\cite{frongillo2015vector}, as the basis to move from the question of whether a property is elicitable to how difficult it is to elicit~\cite{frongillo2015elicitation}, and to characterize minimal peer prediction mechanisms~\cite{frongillo2016geometric}.

\subsection{Notation}
We define $\extreals = \reals\cup\{-\infty,\infty\}$ to be the extended real numbers.
Given a set of measures $M$ on a space $X$ with $\sigma$-algebra $\B$, a function $f:X\to \extreals$ is $M$-quasi-integrable if $\int_X f(x) d\mu(x) \in \extreals$ for all $\mu\in M$.  Let $\Delta(X)$ be the set of all probability measures on $X$.  We denote by $\affine(X\to Y)$ and $\linear(X\to Y)$ the set of functions from $X \subseteq \V_1$ to $Y \subseteq \V_2$ which are restrictions of affine and linear functions (respectively) from vector space $\V_1$ to vector space $\V_2$.  We write $\convhull(X)$ to denote the convex hull of a set of vectors $X$, the set of all (finite) convex combinations of elements of $X$. Some useful facts from convex analysis are collected in Appendix~\ref{sec:cvx-primer}.

\section{Affine Scores}
\label{sec:main}

We consider a very general model with an agent who has a given type $t\in\T$ and reports some possibly distinct type $t'\in\T$, at which point the agent is rewarded according to some score $\AS{t',t}$ which is affine in the true type $t$.  This reward we call an \scorename.  We wish to characterize all \emph{truthful} \scorenames, which incentivize the agent to report her true type $t$.

\begin{definition}
  \label{def:affine-score}
Let $\T\subseteq\V$ for some vector space $\V$ over $\reals$.  A function $\ASdef$ is an \emph{\scorename} with \emph{score set} $\A \defeq \{ \AS{t,\cdot} ~|~ t \in \T \} \subseteq \affine(\T\to\extreals)$.\footnote{To define linear functions to $\extreals$, we adopt the convention $0\cdot\infty = 0\cdot(-\infty) = 0$.  Thus, any $\ell \in \linear(\V\to\extreals)$ can be written as $\ell_1 + \infty \cdot \ell_2$ for some $\ell_1,\ell_2 \in \linear(\V\to\reals)$.}  We say $\ASS$ is \emph{truthful} if for all $t,t' \in \T$,
  \begin{equation}
    \AS{t',t} \leq \AS{t,t}.
  \end{equation}
  If this inequality is strict for all $t \neq t'$, then \ASS\ is \emph{strictly truthful}.
\end{definition}

Our characterization uses convex analysis, a central concept of which is the subgradient of a function, which is a generalization of the gradient yielding a linear approximation that is always below the function.

\begin{definition}
  Given some function $G:\T\to\reals$, a function $d \in \linear(\V\to\extreals)$ is a \emph{subgradient} to $G$ at $t$ if for all $t'\in \T$,
  \begin{equation}
    \label{eq:subgradient}
    G(t') \geq G(t) + d(t'-t).
  \end{equation}
  We denote by $\partial G:\T\toto\linear(\V\to\extreals)$ the multivalued map such that $\partial G_t$ is the set of subgradients to $G$ at $t$.
  \footnote{It is important to note a deviation from standard terminology here: typically, a ``vertical'' subgradient, which can take value $\pm\infty$, is called a \emph{subtangent} (cf.~\cite{gneiting2007strictly}); here we use \emph{subgradient} for both cases.}
  We will occasionally overload the $\partial G$ notation to mean $\partial G = \cup_{t\in\T} \partial G_t$.
We say a parameterized family of linear functions $\{ d_t \in \linear(\V\to\extreals) \}_{t  \in \T'}$ for $\T'\subseteq\T$ is a \emph{selection of subgradients} if $d_t \in \partial G_t$ for all $t\in\T'$; we denote this succinctly by $\{ d_t \}_{t  \in \T'} \selsubgrad{G}{\T'}$.
\end{definition}

For mechanism design, it is typical to assume that utilities are always real-valued.  However, the log scoring rule (one of the most popular scoring rules) has the property that if an agent reports that an event has probability 0, and then that event does occur, the agent receives a score of $-\infty$.  Essentially solely to accommodate this, we allow \scorenames and subgradients to take on values from the extended reals.  In the next paragraph we provide the relevant definitions, but for most purposes it suffices to ignore these and simply assume that all \scorenames are real-valued.

It is standard (cf.~\cite{gneiting2007strictly}) to restrict consideration to the ``regular'' case, where intuitively only things like the log score are permitted to be infinite.  In particular, an \scorename $\ASS$ is \emph{regular} if $\AS{t,t} \in \reals$ for all $t \in \T$, and $\AS{t',t} \in \reals \cup \{-\infty\}$ for $t' \neq t$.  Similarly, a parameterized family of linear functions (e.g. a family of subgradients) $\{ d_t \in \linear(\V\to\extreals) \}_{t  \in \T}$ is \emph{$\T$-regular} if $d_t(t) \in \reals$ for all $t \in \T$, and $d_{t'}(t) \in \reals \cup \{-\infty\}$ for $t' \neq t$.  Likewise, $\T$-regular affine functions have $\T$-regular linear parts with finite constants (i.e. we exclude the constant functions $\pm \infty$).  For the remainder of the paper we assume all \scorenames and parameterized families of linear or affine functions are $\T$-regular, where $\T$ will be clear from context.

We now state, and prove, our characterization theorem.  The proof takes Gneiting and Raftery's~\shortcite{gneiting2007strictly} proof for the case of scoring rules on convex domains and extends it to the non-convex case using a variant of a technique Archer and Kleinberg~\shortcite{archer2008truthful} introduced for mechanisms with non-convex type spaces.  This technique is essentially that used in prior work on extensions of convex functions~\cite{peters1987convex,yan2012extension}.
\begin{theorem}
  \label{thm:main-char}
  Let an \scorename $\ASdef$ with score set $\A$ be given.  \ASS\ is truthful if and only if there exists some convex $G : \convhull(\T)\to\extreals$ with $G(\T) \subseteq\reals$, and some selection of subgradients $\{d_t\}_{t\in\T} \selsubgrad{G}{\T}$, such that
  \begin{equation}
    \label{eq:main-char}
    \AS{t',t} = G(t') + d_{t'}(t-t').
  \end{equation}
\end{theorem}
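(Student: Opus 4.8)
The plan is to prove both directions. For the easy ($\Leftarrow$) direction, suppose $\AS{t',t} = G(t') + d_{t'}(t-t')$ for a convex $G$ and a selection of subgradients $\{d_t\}$. Then the defining inequality for a subgradient, namely $G(t') \geq G(t) + d_t(t'-t)$, rearranges to $G(t) + d_t(t'-t) \leq G(t')$, which says precisely $\AS{t,t'} \leq \AS{t',t'}$ after relabeling; equivalently $\AS{t',t}\le G(t) = \AS{t,t}$. So truthfulness is immediate, and strictness of the subgradient inequality corresponds to strict truthfulness. I would also note that $\AS{t',\cdot}$ as written is indeed affine in $t$ (linear part $d_{t'}$, constant $G(t')-d_{t'}(t')$), and that $\T$-regularity of $\scorevar$ corresponds to $\T$-regularity of $\{d_t\}$ together with $G(\T)\subseteq\reals$, so the object produced is a legitimate \scorename.

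For the hard ($\Rightarrow$) direction, suppose $\scorevar$ is truthful. Following Gneiting and Raftery, the natural candidate is $G_0(t) \defeq \AS{t,t} = \sup_{t'} \AS{t',t}$, which is a pointwise supremum of affine functions of $t$ and hence convex on $\T$; truthfulness guarantees the supremum is attained at $t'=t$. Writing $\AS{t',\cdot} = a_{t'} + c_{t'}$ with $a_{t'}$ linear and $c_{t'}$ a constant, the inequality $\AS{t',t}\le\AS{t,t}$ becomes $G_0(t) \ge \AS{t',t} = \AS{t',t'} + a_{t'}(t-t') = G_0(t') + a_{t'}(t-t')$ for all $t,t'\in\T$, i.e.\ $a_{t'}$ behaves like a subgradient of $G_0$ at $t'$ — but only with respect to points of $\T$, not all of $\convhull(\T)$. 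The crux of the proof is to extend $G_0$ from $\T$ to a genuinely convex function $G$ on $\convhull(\T)$ for which each $a_{t'}$ is an actual subgradient at $t'$. This is exactly the step where the Archer–Kleinberg technique for non-convex type spaces enters.

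The main obstacle, then, is this convex extension. I would define $G$ on $\convhull(\T)$ by the lower envelope of the affine functions induced by the reports: $G(x) \defeq \sup_{t'\in\T}\big( \AS{t',t'} + a_{t'}(x - t') \big) = \sup_{t'\in\T}\big(G_0(t') + a_{t'}(x-t')\big)$ for $x\in\convhull(\T)$. This is convex as a supremum of affine functions. One must then check: (i) $G$ agrees with $G_0$ on $\T$ — the inequality $\AS{t',t}\le\AS{t,t}$ gives $G(t)\le G_0(t)$, while the term $t'=t$ in the supremum gives $G(t)\ge G_0(t)$; (ii) $G(\T)\subseteq\reals$, which follows from $\T$-regularity (the finite values on $\T$ control the supremum, the only $-\infty$ contributions coming from "vertical" parts that vanish on $\T$); (iii) for each $t\in\T$, $a_t$ is a subgradient of $G$ at $t$, i.e.\ $G(x)\ge G(t) + a_t(x-t)$ for all $x\in\convhull(\T)$ — but this is immediate since the $t'=t$ term of the defining supremum is exactly $G_0(t) + a_t(x-t) = G(t) + a_t(x-t)$; and (iv) the regularity/finiteness bookkeeping for the extended-real-valued linear parts, handled via the convention $0\cdot(\pm\infty)=0$ and the decomposition $\ell = \ell_1 + \infty\cdot\ell_2$ from the paper's footnote, observing as in the authors' note that improperness forces $\T\subseteq\relint(\convhull(\T))$ so that no finite point of $\T$ sees a $-\infty$ value. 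Setting $d_t \defeq a_t$ then yields $\AS{t',t} = \AS{t',t'} + a_{t'}(t-t') = G(t') + d_{t'}(t-t')$, which is \eqref{eq:main-char}. The delicate points are precisely verifying (ii) and (iv) in the presence of extended-real values, and making sure that taking the supremum over the possibly non-convex set $\T$ (rather than over a nice dense subset) still yields a function that restricts correctly to $\T$ — which is where truthfulness is used in an essential way.
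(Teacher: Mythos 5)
Your proof is correct and takes essentially the same route as the paper's: you affinely extend each $\scorevar(t')$ to $\convhull(\T)$, define $G$ as the pointwise supremum of these affine functions over $t'\in\T$, and read off the subgradient inequality at $t$ from the $t'=t$ term of the supremum, exactly as Gneiting--Raftery adapted via Archer--Kleinberg. (One tiny slip of no consequence: a pointwise \emph{supremum} of affine functions is an upper envelope, not a lower envelope.)
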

\begin{proof}
  It is immediate from the subgradient inequality~\eqref{eq:subgradient} that the proposed form is in fact truthful, as
\[\AS{t',t} = G(t') + d_{t'}(t-t') \leq G(t) = G(t) + d_t(t-t) = \AS{t,t}.\]
For the converse, we are given some truthful $\ASdef$.

By definition, each $\AS{t,\cdot}$ is the restriction of an affine function defined on all of $\V$, and we now argue that $\AS{t,\cdot}$ is therefore well-defined on $\convhull(\T)$.
While in general there may be many different affine functions on $\V$ whose restriction to $\T$ is $\AS{t,\cdot}$,
given any $\hat t \in \convhull(\T)$ we may represent $\hat t$ as a finite convex combination $\hat t = \sum_{i=1}^m \alpha_i t_i$ where $t_i \in \T$.
Thus by their affineness all of these yield the same values on $\convhull(\T)$ as one easily sees that $\AS{t,\hat t} = \sum_{i=1}^m \alpha_i \AS{t,t_i}$.
Thus, $\AS{t,\cdot}$ is both determined and well-defined on $\convhull(\T)$.

  Now we let $G(\hat t) \defeq \sup_{t\in\T} \AS{t,\hat t}$, which is convex as the pointwise supremum of convex (in our case affine) functions.  Since \ASS\ is truthful, we in particular have $G(t) = \AS{t,t} \in \reals$ for all $t\in\T$ by our regularity assumption.  Let  $\AS[_\ell]{t,\cdot}$ denote the linear part of $\AS{t,\cdot}$.  Then, also by truthfulness, we have for all $t'\in \T$ and $\hat t \in \convhull(\T)$,
  \begin{align*}
    G(\hat t)
    \defeq \sup_{t\in\T} \AS{t,\hat t}
    \geq \AS{t',\hat t}
    = \AS{t',t'} + \AS[_\ell]{t',\hat t-t'}
    = G(t') +  \AS[_\ell]{t',\hat t-t'}.
  \end{align*}
Hence, $\AS[_\ell]{t',\cdot}$ satisfies~\eqref{eq:subgradient} for $G$ at $t'$, so $\ASS$ is of the form~\eqref{eq:main-char}.
\end{proof}

In the remainder of this section, we show how scoring rules, mechanisms, and other related models fit comfortably within our framework.

\subsection{Scoring Rules for Non-Convex $\P$}
\label{sec:scoring}

In this section, we show that the Gneiting and Raftery characterization is a simple special case of Theorem~\ref{thm:main-char}, and moreover that we \emph{generalize} their result to the case where the set of distributions $\P$ may be non-convex.  We also give a result about local properness derived using tools from mechanism design in Appendix~\ref{sec:local}.  To begin, we formally introduce scoring rules and show that they fit into our framework.  The goal of a scoring rule is to incentivize an expert who knows a probability distribution to reveal it to a principal who can only observe a single sample from that distribution.

\begin{definition}
  Given outcome space $\O$ and set of probability measures $\P \subseteq \Delta(\O)$, a \emph{scoring rule} is a function $\SRS:\P\times\O\to \extreals$ such that $\SR{p,\cdot}$ is $\P$-quasi-integrable for all $p \in \P$ (see below).  We say $\SRS$ is \emph{proper} if for all $p,q \in \P$,
  \begin{equation}
    \label{eq:gen-model-proper}
    \E_{\o\sim p} [ \SR{q,\o} ] \leq \E_{\o\sim p} [ \SR{p,\o} ].
  \end{equation}
  If the inequality in~\eqref{eq:gen-model-proper} is strict for all $q \neq p$, then $\SRS$ is \emph{strictly proper}.
\end{definition}

To incorporate this into our framework, take the type space $\T = \P$.  Thus, we need only construct the correct score set of affine functions available to the scoring rule as payoff functions.  Intuitively, these are the functions that describe what payment the expert receives given each outcome, but we have a technical requirement that the expert's expected utility be well defined.  Thus, following Gneiting and Raftery, we take $\F$ to be the set of $\P$-quasi-integrable
functions $f:\O\to\extreals$,
meaning $\int_\O f(\o)dp(\o) \in \extreals$ for all $p\in\P$,
and the score set $\A = \{p \mapsto  \E_{\o\sim p} \SR{q,\o} \mid q \in P\} \subseteq \{p\mapsto \int_\O f(\o)\,dp(\o) \mid f \in \F\}$.
Note that in this case $\A$ actually contains \emph{linear} functions of $p$, which are trivially affine.

We now apply Theorem~\ref{thm:main-char}, which yields the following generalization of Gneiting and Raftery~\shortcite{gneiting2007strictly}.
\begin{corollary}
  \label{thm:scoring-rule-char}
  For an arbitrary set $\P \subseteq \Delta(\O)$ of probability measures, a regular\footnote{This is the same concept as with \scorenames: scores cannot be $\infty$ and only incorrect reports can yield $-\infty$.} scoring rule $\SRS:\P\times\O\to\extreals$ is proper if and only if there exists a convex function $G:\convhull(\P)\to\reals$ with functions $G_p \in \F$ such that
  \begin{equation}
    \label{eqn:sr}
    \SR{p,\o} = G(p) + G_p(\o) - \int_\O G_p(\o')\,dp(\o'),
  \end{equation}
  where $G_p : q\mapsto \int_\O G_p(\o')\,dq(\o')$ is a subgradient of $G$ for all $p\in\P$.
\end{corollary}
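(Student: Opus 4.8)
The plan is to realize Corollary~\ref{thm:scoring-rule-char} as a direct instance of Theorem~\ref{thm:main-char}, with type space $\T = \P$ and affine maps $\A = \{p \mapsto \int_\O f\,dp : f \in \F\}$ exactly as set up in the text. First I would record the dictionary between scoring rules and \scorenames: a scoring rule $\SRS$ gives rise to the \scorename $\scorevar$ with $\AS{q,p} = \E_{\o\sim p}[\SR{q,\o}] = \int_\O \SR{q,\o}\,dp(\o)$, i.e.\ $\scorevar(q)$ is integration against the function $\SR{q,\cdot}$, and conversely any $\P$-regular \scorename into $\A$ arises from a regular scoring rule by reading $\SR{q,\o} = f_q(\o)$ off of $\scorevar(q) = (p\mapsto\int_\O f_q\,dp)$. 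Here regularity of $\SRS$ is precisely what guarantees $\SR{q,\cdot}\in\F$ (i.e.\ $\P$-quasi-integrability) and that the induced $\scorevar$ is $\P$-regular. Under this dictionary, properness of $\SRS$ (inequality~\eqref{eq:gen-model-proper}) is literally truthfulness of $\scorevar$ in the sense of Definition~\ref{def:affine-score}, so the two characterizations are the same statement transported across the dictionary.

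For the ``only if'' direction, given a proper $\SRS$ the associated $\scorevar$ is truthful, so Theorem~\ref{thm:main-char} supplies a convex $G:\conv(\P)\to\extreals$ with $G(\P)\subseteq\reals$ and a selection of subgradients realizing~\eqref{eq:main-char}. The one thing needing care is that these subgradients can be taken of the ``integration'' form appearing in~\eqref{eqn:sr}: since $\scorevar(p)$ is itself linear (integration against $\SR{p,\cdot}$ has no translation part), the construction in the proof of Theorem~\ref{thm:main-char} uses exactly $d_p = \scorevar_\ell(p) = \scorevar(p)$, so I may simply take $G_p := \SR{p,\cdot}\in\F$ and have $q\mapsto\int_\O G_p\,dq$ be a subgradient of $G$ at $p$. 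If one prefers Theorem~\ref{thm:main-char} purely as a black box, this subgradient inequality can be verified directly: $G(q)\ge\AS{p,q} = G(p) + \int_\O G_p\,dq - \int_\O G_p\,dp$, using $G(p)=\AS{p,p}$. With $G_p = \SR{p,\cdot}$ in hand, the right-hand side of~\eqref{eqn:sr} is $G(p) + \SR{p,\o} - \int_\O \SR{p,\o'}\,dp(\o')$, which collapses to $\SR{p,\o}$ because $G(p) = \AS{p,p} = \E_{\o\sim p}[\SR{p,\o}]$; this gives~\eqref{eqn:sr} pointwise in $\o$.

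For the ``if'' direction, suppose $\SRS$ has the form~\eqref{eqn:sr} for some convex $G$ on $\conv(\P)$ and functions $G_p\in\F$ such that $d_p : q\mapsto\int_\O G_p(\o)\,dq(\o)$ is a subgradient of $G$ at $p$. Then for all $p,q\in\P$, $\AS{q,p} = \int_\O\SR{q,\o}\,dp(\o) = G(q) + \int_\O G_q(\o)\,dp(\o) - \int_\O G_q(\o)\,dq(\o) = G(q) + d_q(p-q)$, which is exactly the form~\eqref{eq:main-char}; hence the induced \scorename is truthful by the trivial forward implication of Theorem~\ref{thm:main-char} (equivalently, $\AS{q,p}\le G(p) = \AS{p,p}$ by the subgradient inequality~\eqref{eq:subgradient}), so $\SRS$ is proper.

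The main obstacle is the bookkeeping in the ``only if'' direction around subgradients: Theorem~\ref{thm:main-char} only promises \emph{some} linear functional on the ambient space of signed measures, whereas Corollary~\ref{thm:scoring-rule-char} wants one represented by a $\P$-quasi-integrable function on $\O$. The resolution is the observation above that $\scorevar(p)$ is already linear, so the concrete choice $G_p=\SR{p,\cdot}$ works and no abstract extension of a functional is needed. Secondary points to handle carefully are the regularity translation (so that $\F$ and $\P$-quasi-integrability are indeed the right notions and the induced $\scorevar$ is $\P$-regular, allowing Theorem~\ref{thm:main-char} to apply) and the fact that Theorem~\ref{thm:main-char} only guarantees $G$ real-valued on $\P$ rather than on all of $\conv(\P)$ — which is all that~\eqref{eqn:sr} uses, so it is harmless here.
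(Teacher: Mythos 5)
Your proof is correct and follows essentially the same route as the paper's: instantiate Theorem~\ref{thm:main-char} with $\T=\P$ and $\A$ the integration functionals, observe that properness is truthfulness under this dictionary, and for the converse note that each $\scorevar(p)$ is already the linear functional $q\mapsto\int_\O \SR{p,\o}\,dq(\o)$, so $G_p = \SR{p,\cdot}\in\F$ serves as the subgradient representative. The paper's proof is terser (it simply plugs $f_p$ into the proof of Theorem~\ref{thm:main-char}) but makes exactly the same observation; your extra verifications that the right-hand side of~\eqref{eqn:sr} collapses pointwise and that the form is proper are fine and correctly carried out.
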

\begin{proof}
  Truthfulness of the given form follows immediately from Theorem~\ref{thm:main-char} and our definition of $\A$.
  For the converse, let $\ASdef$ be a given truthful \scorename.
  From the theorem, $\AS{p,\cdot} = G(p) + d_{p}(\,\cdot\, - p) \in \A$, so we must have $d_p \in \F$; the subgradients are then of the form $G_p : q\mapsto \int_\O d_p(\o)\,dq(\o)$ as desired.
\end{proof}

Importantly, Corollary~\ref{thm:scoring-rule-char} immediately generalizes the characterization of Gneiting and Raftery~\citeyear{gneiting2007strictly} to the case where $\P$ is not convex, which is new to the scoring rules literature.  One direction of this extension is obvious (if $\SRS$ is truthful on the convex hull of a set then it is truthful on that set), but the other is not, and is an important negative result in that it rules out the possibility of new scoring rules arising by restricting the set of distributions (as long as the restriction does not change the convex hull of the set).

In the absence of a characterization, several authors have worked in the non-convex $\P$ case.  For example, Babaioff et al.~\shortcite{babaioff2011only} examine when proper scoring rules can have the additional property that uninformed experts do not wish to make a report (have a negative expected utility), while informed experts do wish to make one.  They show that this is possible in some settings where the space of reports is not convex.  Our characterization shows that, despite not needing to ensure properness on reports outside $\P$, essentially the only possible scoring rules are still those that are proper on all of $\Delta(\O)$.  We state the simplest version of such a characterization, for perfectly informed experts, here.

\begin{corollary}
\label{cor:valuable}
Let a non-convex set $\P \subseteq \Delta(\O)$ and $\bar{p} \in \Delta(\O) - \P$ be given.  A scoring rule $\SRS$ is proper and guarantees that experts with a belief in $\P$ receive a score of at least $\delta_A$ while experts with a belief of $\bar{p}$ receive a score of at most $\delta_R$ if and only if $\SRS$ is of the form~\eqref{eqn:sr} with $G(p) \geq \delta_A \forall p \in \P$ and $G(\bar{p}) \leq \delta_R$.
\end{corollary}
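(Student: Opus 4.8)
The plan is to derive Corollary~\ref{cor:valuable} from Corollary~\ref{thm:scoring-rule-char} by simply adding the two threshold constraints and checking they are equivalent to the stated conditions on $G$. First I would handle the ``if'' direction: suppose $\SRS$ has the form~\eqref{eqn:sr} with $G(p) \geq \delta_A$ for all $p\in\P$ and $G(\bar p)\leq \delta_R$. Properness is immediate from Corollary~\ref{thm:scoring-rule-char}. For the score guarantees, note that for a report $p\in\P$ the expert's expected score when his belief is $p$ is exactly $\E_{\o\sim p}[\SR{p,\o}] = G(p) + \int_\O G_p\,dp - \int_\O G_p\,dp = G(p) \geq \delta_A$, using that he reports truthfully (which is optimal by properness), so an expert with belief in $\P$ who reports optimally gets at least $\delta_A$. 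Symmetrically, an expert with belief $\bar p$ reporting optimally gets $\sup_{p\in\P}\E_{\o\sim\bar p}[\SR{p,\o}]$; but this supremum, together with the extension of $\SRS$ to all of $\conv(\P)$ as in the proof of Theorem~\ref{thm:main-char}, is exactly $G(\bar p)$ (the pointwise supremum defining $G$), hence at most $\delta_R$. So both guarantees hold.

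For the ``only if'' direction: suppose $\SRS$ is proper and satisfies the two score guarantees. By Corollary~\ref{thm:scoring-rule-char}, $\SRS$ has the form~\eqref{eqn:sr} for some convex $G:\conv(\P)\to\reals$ and subgradients $G_p$. As computed above, the optimal expected score of an expert with belief $p\in\P$ equals $G(p)$ (since truthful reporting is optimal and the ``diagonal'' score is $G(p)$), so the assumption that this is at least $\delta_A$ gives $G(p)\geq\delta_A$ for all $p\in\P$. For $\bar p$, the optimal expected score is $\sup_{q\in\P}\E_{\o\sim\bar p}[\SR{q,\o}]$; I would argue this equals $G(\bar p)$ by showing $G$ as produced by Theorem~\ref{thm:main-char} is precisely $\hat t\mapsto \sup_{t\in\T}\scorevar(t)(\hat t)$ evaluated at $\bar p$, which is the supremum of the (extended) expected scores. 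Hence $G(\bar p)\leq\delta_R$.

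The one subtlety — and the step I expect to be the main obstacle — is the identification of the optimal expert payoff at $\bar p$ with the value $G(\bar p)$. In Corollary~\ref{thm:scoring-rule-char} the convex function $G$ is only asserted to exist; it need not literally be the upper envelope $\sup_p \scorevar(p)(\cdot)$ everywhere on $\conv(\P)$, only to agree with the diagonal on $\P$ and to have the $G_p$ as subgradients. So I must be careful: from the form~\eqref{eqn:sr}, for any $q\in\P$ and the belief $\bar p$ we get $\E_{\o\sim\bar p}[\SR{q,\o}] = G(q) + G_q(\bar p - q) \leq G(\bar p)$ by the subgradient inequality, so the optimal payoff at $\bar p$ is $\leq G(\bar p) \leq \delta_R$ — which is all the ``only if'' direction needs, and in the ``if'' direction the same inequality gives the $\delta_R$ bound directly. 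Thus I can sidestep proving equality and only use the subgradient inequality in the direction required, which keeps the argument clean. I would also remark that because $\bar p\notin\P$, no constraint is placed on $\SRS(\bar p,\cdot)$ itself, consistent with the discussion preceding the corollary.
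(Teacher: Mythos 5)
Your ``if'' direction is correct, and your paragraph outlining the ``only if'' direction via the explicit upper-envelope construction of $G$ in the proof of Theorem~\ref{thm:main-char} is also the right idea. But the ``subtlety'' paragraph, which tries to replace that argument with a bare inequality, introduces a genuine logical gap: for the ``only if'' direction you are \emph{given} that the optimal payoff at $\bar p$ is at most $\delta_R$, and you must \emph{conclude} $G(\bar p)\leq\delta_R$. The inequality you derive, namely $\sup_{q\in\P}\E_{\o\sim\bar p}[\SR{q,\o}]\leq G(\bar p)$, points the wrong way for that implication (it gives a lower bound on $G(\bar p)$ by the optimal payoff is what you'd want, but it actually gives an \emph{upper} bound on the optimal payoff in terms of $G(\bar p)$, so knowing the optimal payoff is $\leq\delta_R$ tells you nothing about $G(\bar p)$). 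Writing ``the optimal payoff at $\bar p$ is $\leq G(\bar p)\leq\delta_R$'' assumes the very thing the ``only if'' direction must prove.

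The fix is exactly what you proposed before backing off: for the ``only if'' direction, do not take an arbitrary $G$ from Corollary~\ref{thm:scoring-rule-char}, but instead take the specific $G$ constructed in the proof of Theorem~\ref{thm:main-char}, i.e. $G(\hat t)=\sup_{q\in\P}\scorevar(q)(\hat t)$, where $\scorevar(q)$ has been extended affinely to $\conv(\P)$. For this $G$ one has the \emph{equality} $G(\bar p)=\sup_{q\in\P}\E_{\o\sim\bar p}[\SR{q,\o}]$, and the hypothesis that this supremum is at most $\delta_R$ gives $G(\bar p)\leq\delta_R$ directly; likewise $G(p)=\E_{\o\sim p}[\SR{p,\o}]\geq\delta_A$ by properness and the hypothesis. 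The subgradient inequality $\E_{\o\sim\bar p}[\SR{q,\o}]\leq G(\bar p)$ is the right tool only for the ``if'' direction, where $G(\bar p)\leq\delta_R$ is assumed; it cannot do double duty for both directions.
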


With a similar goal to Babaioff et al., Fang et al.~\shortcite{fang2010proper} find conditions on $\P$ for which every continuous ``value function'' $G:p\mapsto \SR{p,p}$ on $\P$ can be attained by some $\SRS$ with the motivation of eliciting the expert's information when it is known to come from some family of distributions (which in general will not be a convex set).    As such, they provide sufficient conditions on particular non-convex sets, as opposed to our result which provides necessary and sufficient conditions for all non-convex sets.  Beyond these specific applications, our characterization is useful for answering practical questions about scoring rules.  For example, suppose we assume that people have beliefs about probabilities in increments of 0.01.  Does that change the set of possible scoring rules? No. What happens if they have finer-grained beliefs but we restrict them to such reports? They will end up picking a ``nearby'' report (see the discussion of convexity in Section~\ref{sec:notelicitable}).

In Appendix~\ref{sec:local}, we show how local truthfulness conditions, where one verifies that an \scorename is truthful by checking that it is truthful in a small neighborhood around every point, from mechanism design generalize to our framework.  In particular Corollary~\ref{cor:wlt} shows that local properness (i.e. properness for distributions in a neighborhood) is equivalent to global properness for scoring rules on convex $\P$, an observation that is also new to the scoring rules literature.  See Appendix~\ref{sec:local} for the precise meaning of (weak) local properness (i.e. truthfulness).

\begin{corollary}
  \label{thm:locally-proper}
  For a convex set $\P \subseteq \Delta(\O)$ of probability measures, a scoring rule $\SRS:\P\times\O\to\extreals$ is proper if and only if it is (weakly) locally proper.
\end{corollary}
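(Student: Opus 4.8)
The plan is to handle the two implications separately; the forward one is free, and the converse carries all the content. If $\SRS$ is proper then \eqref{eq:gen-model-proper} holds for every pair $p,q\in\P$, in particular whenever $q$ is close to $p$, so $\SRS$ is (weakly) locally proper. For the converse I would show that when $\P$ is convex, weak local properness already forces $\SRS$ into the form~\eqref{eqn:sr}, so that Corollary~\ref{thm:scoring-rule-char} delivers (global) properness; thus the real work is a local-to-global promotion of the properness inequality.

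To set this up, recast the hypothesis in the affine-score language of Section~\ref{sec:scoring}: with $\T=\P$ and $\A=\{q\mapsto\int_\O f(\o)\,dq(\o):f\in\F\}$, write $g_p\defeq\scorevar(p)\in\A$ for the expected-score function $q\mapsto\E_{\o\sim q}[\SR{p,\o}]$, which is affine (indeed linear) in $q$, and let $\ell_p$ denote its linear part. By regularity $g_p(p)=\SR{p,p}\in\reals$. Weak local properness --- in the precise sense developed in Appendix~\ref{sec:local} --- unwinds, at two nearby points $p,q$, to the local inequalities $g_q(p)\le g_p(p)$ and $g_p(q)\le g_q(q)$, which together give the local ``cyclic monotonicity'' relation $(\ell_p-\ell_q)(p-q)\ge 0$. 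The goal is to promote $g_q(p)\le g_p(p)$ from nearby $q$ to all $q\in\P$; this is exactly global properness, and then the convex envelope $G\defeq\sup_{p\in\P}g_p$ is finite on $\convhull(\P)=\P$ with each $g_p$ a subgradient of $G$ at $p$, i.e.\ $\SRS$ is of the form~\eqref{eqn:sr}.

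For the promotion I would invoke the local-to-global machinery of Appendix~\ref{sec:local}, our variant of Archer and Kleinberg's Theorem~6.1. Fix $p,q\in\P$; since $\P$ is convex the segment $p_\lambda\defeq(1-\lambda)p+\lambda q$ lies in $\P$. Take a partition $p=p_0,\dots,p_n=q$ of this segment, fine enough that consecutive breakpoints lie in one another's neighborhoods, and apply weak local properness on each step to obtain $\ell_{p_i}(p_{i+1}-p_i)\le g_{p_{i+1}}(p_{i+1})-g_{p_i}(p_i)\le\ell_{p_{i+1}}(p_{i+1}-p_i)$. Summing the upper bounds telescopes $g_q(q)-g_p(p)$, and comparing the result against $\ell_q(q-p)$ using the local monotonicity relation gives $g_q(p)=g_q(q)+\ell_q(p-q)\le g_p(p)$, as desired. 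I expect this local-to-global step to be the only real obstacle, and it genuinely requires $\P$ convex: the connecting segments must stay inside $\P$ for the telescoping to run, and without convexity the statement is simply false --- for a two-point set $\P=\{p_1,p_2\}$ of isolated distributions every scoring rule is vacuously (weakly) locally proper, yet not every one is proper.
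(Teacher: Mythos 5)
Your proof is correct and follows the same strategy as the paper, which derives Corollary~\ref{thm:locally-proper} from Corollary~\ref{cor:wlt} via Theorem~\ref{thm:wlsg}, itself proved by the chain-and-telescope argument (adapted from Carroll~\shortcite{carroll2012when}) that you reproduce inline. The only cosmetic difference is in the telescoping algebra: the paper's proof of Theorem~\ref{thm:wlsg} telescopes a weighted combination of both local inequalities at each step of the chain, while you sum only the upper bounds and then invoke the chained local monotonicity $\ell_{p_i}(q-p)\le\ell_{p_{i+1}}(q-p)$; both are valid and equivalent in substance.
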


\subsection{Mechanism Design}
\label{chap:general-modelmechanism-design}

We now show how to view a mechanism as an \scorename.  First, we formally introduce mechanisms in the single agent case. (See below for remarks about multiple agents.)  Then we show how known characterizations of truthful mechanisms follow easily from our main theorem.  This allows us to relax a minor technical assumption from the most general such theorem.

\begin{definition}
\label{def:mechanism}
  Given outcome space $\O$ and a type space $\T \subseteq (\O\to\reals)$, consisting of functions mapping outcomes to reals, a (direct) \emph{mechanism} is a pair $(f,p)$ where $f:\T\to\O$ is an \emph{allocation rule} and $p:\T\to\reals$ is a \emph{payment}.  The utility of the agent with type $t$ and report $t'$ to the mechanism is $U(t',t) = t(f(t')) - p(t')$; we say the mechanism $(f,p)$ is \emph{truthful} if $U(t',t) \leq U(t,t)$ for all $t,t' \in \T$.
\end{definition}

Here we suppose that the mechanism can choose an allocation from some set $\O$ of outcomes, and there is a single agent whose type $t\in\T$ is itself the valuation function.  That is, the agent's net utility upon allocation $\o$ and payment $p$ is $t(\o) - p$.  Thus, following Archer and Kleinberg~\citeyear{archer2008truthful}, we view the type space $\T$ as lying in the vector space $\V=\reals^{\O}$.  The advantage of this representation is that while agent valuations in mechanism design can generally be complicated functions, viewed this way they are all linear: for any $v_1,v_2 \in \V$, we have $(v_1 + \alpha v_2)(\o) = v_1(\o) + \alpha v_2(\o)$.
Thus, we have an \scorename $\AS{t',t} \defeq U(t',t)$, with score set $\A = \{U(t',\cdot) \,|\, t' \in \T\} \subset \{t \mapsto t(\o) + c\,|\, \o\in\O, c\in\reals\}$, so that every combination of outcome and payment a mechanism could choose is potentially an element of $\A$.

As an illustration of our theorem, consider the following characterization, due to Myerson~\shortcite{myerson1981optimal}, for a single parameter setting (i.e. when the agent's type can be described by a single real number).  The result states that an allocation rule is implementable, meaning there is some payment rule making it truthful, if and only if it is \emph{monotone} in the agent's type.

\begin{corollary}[Myerson~\shortcite{myerson1981optimal}]
  \label{cor:myerson}
  Let $\T = \reals_+$, $\O \subseteq \reals$, so that the agent's valuation is $t \cdot \o$.  Then a mechanism $f,p$ is truthful if and only if
  \begin{enumerate}
  \item $f$ is monotone non-decreasing in $t$,
  \item $p(t) = t f(t) - \int_{0}^t f(t') dt' + p_0$.
  \end{enumerate}
\end{corollary}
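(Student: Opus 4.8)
The plan is to derive Myerson's characterization as a direct specialization of Theorem~\ref{thm:main-char}. First I would set up the identification: with $\T = \reals_+$ and $\O \subseteq \reals$, the valuation $t \mapsto t\cdot\o$ is linear in $t$, so the \scorename is $\AS{t',t} = U(t',t) = t\cdot f(t') - p(t')$, which lies in $\A = \{t \mapsto t\cdot\o + c \mid \o\in\O, c\in\reals\}$ as required. Since $\T = \reals_+$ is convex, $\convhull(\T) = \T$, so Theorem~\ref{thm:main-char} says $(f,p)$ is truthful iff there is a convex $G : \reals_+ \to \reals$ and a selection of subgradients $d_t \in \partial G(t)$ with $U(t',t) = G(t') + d_{t'}(t - t')$. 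Here each linear map on $\V = \reals$ is just multiplication by a scalar, so I would write $d_t$ as multiplication by some real number $g(t)$, i.e. $d_t(s) = g(t)\cdot s$; the subgradient condition becomes $g(t) \in \partial G(t)$ in the usual scalar sense.

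Next I would match this to the mechanism. Comparing $U(t',t) = t\cdot f(t') - p(t')$ with $G(t') + g(t')(t - t') = g(t')\cdot t + \bigl(G(t') - t' g(t')\bigr)$, and using that these are equal as affine functions of $t$ for each fixed $t'$, I read off $f(t') = g(t')$ and $p(t') = t' g(t') - G(t') = t' f(t') - G(t')$. So truthfulness of $(f,p)$ is equivalent to: there exists a convex $G:\reals_+\to\reals$ such that $f(t) \in \partial G(t)$ for all $t$ and $p(t) = t f(t) - G(t)$. It remains to translate ``$\exists$ convex $G$ with $f$ a selection of its subgradients'' into the two stated conditions. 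One direction: if such a $G$ exists, then since a real-valued function on an interval has a (single-valued, monotone non-decreasing) subgradient selection only where it is convex, and the subgradients of a convex function on $\reals$ are monotone non-decreasing in $t$, condition (1) holds; and integrating, $G(t) = G(0) + \int_0^t g(s)\,ds = G(0) + \int_0^t f(s)\,ds$ (using that a convex function on $\reals$ is the integral of any of its subgradient selections), which gives $p(t) = tf(t) - \int_0^t f(s)\,ds - G(0)$, i.e. condition (2) with $p_0 = -G(0)$.

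For the converse I would assume (1) and (2) and construct $G(t) \defeq t f(t) - p(t) = \int_0^t f(s)\,ds - p_0$. Monotonicity of $f$ makes this $G$ convex (its derivative, where it exists, is the non-decreasing function $f$; more carefully, for $t_1 < t_2$ one checks $G(t_2) - G(t_1) = \int_{t_1}^{t_2} f(s)\,ds$ lies between $f(t_1)(t_2-t_1)$ and $f(t_2)(t_2-t_1)$, which is exactly the slope-monotonicity characterization of convexity), and the same sandwich inequality shows $f(t)$ is a subgradient of $G$ at every $t$. Then $\AS{t',t} = G(t') + f(t')(t-t')$ by construction, so Theorem~\ref{thm:main-char} gives truthfulness.

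I expect the main technical obstacle to be the clean handling of the one-dimensional convex-analysis facts at the level of rigor the paper wants: specifically, that a real-valued function $f$ on $\reals_+$ is a selection of subgradients of \emph{some} convex function if and only if it is monotone non-decreasing, together with the fundamental-theorem-of-calculus identity $G(t) = G(0) + \int_0^t f(s)\,ds$ for such selections (this needs $f$ to be, e.g., the derivative of $G$ almost everywhere and $G$ absolutely continuous, which holds for convex $G$). These are standard and can be cited from the convex-analysis primer in Appendix~\ref{sec:cvx-primer}, so the argument should go through smoothly once that citation is in place; everything else is the bookkeeping of reading off $f$ and $p$ from the affine decomposition in~\eqref{eq:main-char}.
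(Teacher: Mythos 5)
Your proposal is correct and follows essentially the same route as the paper's own proof: apply Theorem~\ref{thm:main-char}, identify $f$ with the subgradient selection and $p(t) = tf(t) - G(t)$, then invoke the one-dimensional facts (a scalar function is a selection of subgradients of a convex function iff it is monotone non-decreasing, and a convex function on $\reals$ is recovered from such a selection by integration). The paper's proof is just a terser version of yours, citing the monotonicity fact from the primer and reading off $G(t) = U(t,t) = tf(t) - p(t)$ directly rather than deriving it by matching affine coefficients; your more explicit bidirectional argument and the sandwich inequality for convexity in the converse direction are the bookkeeping the paper leaves implicit.
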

\begin{proof}
  By elementary results in convex analysis $f$ is a subgradient of a convex function on $\reals$ if and only if it is monotone non-decreasing.  By  Theorem~\ref{thm:main-char}, the mechanism is truthful if and only if $f$ is a subgradient of the particular function $G(t) = U(t,t) = t(f(t)) - p(t)$, which is equivalent to (i) and the condition $G(t) = \int_{0}^t f(t')dt' + C$.
\end{proof}

More generally, applying our theorem gives the following characterization.  It is essentially equivalent to that of Archer and Kleinberg~\shortcite{archer2008truthful} (their Theorem 6.1), although our approach allows the relaxation of a technical assumption they term ``outcome compactness'' which their version requires when the set of types is non-convex.

\begin{corollary}
  \label{cor:archerkleinberg}
A mechanism $f,p$ is truthful if and only if there exists a convex function $G : \convhull(\T)\to\reals$ and some selection of subgradients $\{\subgrad{G}_t\}_{t\in\T}$, such that for all $t \in \T$, $f(t) = \subgrad{G}_t$ and $G(t) = t(f(t)) - p(t)$.
\end{corollary}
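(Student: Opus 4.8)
The plan is to derive Corollary~\ref{cor:archerkleinberg} as an essentially immediate consequence of Theorem~\ref{thm:main-char} applied to the \scorename\ that encodes the mechanism. First I would recall the encoding from Section~\ref{chap:general-modelmechanism-design}: set $\V = \reals^\O$, view each type $t\in\T$ as the linear functional $v\mapsto v(\o)$ evaluated appropriately, and define $\AS{t',t} \defeq U(t',t) = t(f(t')) - p(t')$. The key observation is that for fixed $t'$, the map $t\mapsto t(f(t')) - p(t')$ is affine in $t$ (indeed, its linear part is the evaluation functional $t\mapsto t(f(t'))$ and its constant is $-p(t')$), so $\scorevar$ is a genuine \scorename\ with $\A = \{t\mapsto t(\o)+c \mid \o\in\O,\ c\in\reals\}$ as stated. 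Truthfulness of the mechanism $(f,p)$ in the sense of Definition~\ref{def:mechanism} is literally truthfulness of $\scorevar$ in the sense of Definition~\ref{def:affine-score}.

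Next I would invoke Theorem~\ref{thm:main-char} directly. It tells us that $\scorevar$ is truthful if and only if there is a convex $G:\convhull(\T)\to\extreals$ with $G(\T)\subseteq\reals$ and a selection of subgradients $\{\subgrad{G}_t\}_{t\in\T}\selsubgrad{G}{\T}$ with $\AS{t',t} = G(t') + \subgrad{G}_{t'}(t-t')$. To translate this back into mechanism language, I unpack both sides: the right-hand side equals $G(t') + \subgrad{G}_{t'}(t) - \subgrad{G}_{t'}(t')$, which is affine in $t$ with linear part the functional $t\mapsto\subgrad{G}_{t'}(t)$; the left-hand side is affine in $t$ with linear part $t\mapsto t(f(t'))$. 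Since an affine function is determined by its linear part and its value at one point, matching linear parts forces $\subgrad{G}_{t'}$ to be (the linear functional) ``evaluation against $f(t')$,'' i.e.\ $\subgrad{G}_{t'} = f(t')$ under the identification of $\reals^\O$ with its own dual coordinate-wise (here is where $\O\subseteq\reals$-style finiteness or the linear-algebra identification is used, exactly as in the surrounding text's framing of valuations as linear). Evaluating the matched affine functions at $t=t'$ then gives $G(t') = \AS{t',t'} = t'(f(t')) - p(t')$, i.e.\ $G(t) = t(f(t)) - p(t)$ for all $t\in\T$, which is precisely the claimed relation together with $f(t) = \subgrad{G}_t$. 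Conversely, given such a $G$ and selection, reversing these substitutions produces exactly the form~\eqref{eq:main-char}, so Theorem~\ref{thm:main-char} yields truthfulness.

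I would then remark on the one genuinely nontrivial point, which is also the headline of the corollary: unlike Archer and Kleinberg's Theorem~6.1, we need no extra technical hypothesis when $\T$ is non-convex, because $G$ is already defined on all of $\convhull(\T)$ by Theorem~\ref{thm:main-char} (whose proof extends the score to the convex hull via~\eqref{eq:extended-score} and takes the pointwise supremum). Thus the main ``obstacle'' has in fact been absorbed into Theorem~\ref{thm:main-char}; what remains here is bookkeeping. The one place to be slightly careful is the identification of a subgradient $\subgrad{G}_t\in\linear(\V\to\extreals)$ with an element of $\O\subseteq\V$ (or more precisely with the ``allocation'' $f(t)$): this is the statement that the linear part of an element of $\A$ is exactly evaluation at some $\o\in\O$, so that ``$f(t)=\subgrad{G}_t$'' makes sense as written. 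I expect to spend one or two sentences making that identification explicit, and to note that regularity of $\scorevar$ corresponds to the mechanism-design convention that utilities are real-valued, so $G(\T)\subseteq\reals$ is automatic and the $\extreals$-valued subtleties do not arise in this application.
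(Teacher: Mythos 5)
The paper gives no explicit proof of Corollary~\ref{cor:archerkleinberg}; it simply says "applying our theorem gives the following characterization." Your proposal is the correct routine unpacking of that remark: encode the mechanism as the affine score $\AS{t',t}=t(f(t'))-p(t')$, invoke Theorem~\ref{thm:main-char}, match the linear parts of both sides to get $f(t)=\subgrad{G}_t$, and evaluate at $t=t'$ to get $G(t)=t(f(t))-p(t)$. The one point you gesture at but could make slightly sharper is why $G$ is real-valued on all of $\convhull(\T)$ (not just on $\T$) as the corollary asserts: in the mechanism setting every $\scorevar(t)$ is a real-valued affine function, so the constructed $G(\hat t)=\sup_{t\in\T}\scorevar(t)(\hat t)$ satisfies $G(\hat t)\ge\scorevar(t_0)(\hat t)>-\infty$ for any fixed $t_0$, and $G(\hat t)\le\sum_i\alpha_i G(t_i)<+\infty$ by convexity, giving $G:\convhull(\T)\to\reals$. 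Otherwise this is exactly what the paper intends and is correct.
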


Note that we have written $f(t) = \subgrad{G}_t$, but the allocation is an outcome $f(t) \in \O$ while the subgradient is a linear function $\subgrad{G}_t \in (\O \to \reals) \to \reals$.
Recall the natural isomorphism between outcomes $o$ and functions $t \mapsto t(o)$, which means that we can always represent an outcome as a function sending types to their value for that outcome.
In that sense, we mean that $\subgrad G_t$ is the function $\subgrad G_t : t' \mapsto t'(o)$ where $o = f(t)$.
Note that the intuition that the subgradient is the allocation function still holds up to this technicality.

While we have thus far dealt with single-agent deterministic mechanisms implemented in dominant strategies, this characterization actually applies significantly more broadly.  In a sense, extending our characterizations to multiple agents is trivial: a mechanism is truthful if and only if it is truthful for agent $i$ when fixing the reports of the other agents.  Hence, we merely apply our characterization to each single-agent mechanism induced by reports of the other agents.  This is sufficient for our present study, but there are certainly reasons to take a more nuanced approach to the multi-agent setting---see Section~\ref{sec:discussion} for further discussion.  To extend to randomized mechanisms, one can take $f:\T\to\Delta(\O)$ and define $U(t',t) = \E_{\o\sim f(t')}[t(\o)] - p(t')$, which is still affine in $t$.  We can even extend to non-risk-neutral agents by taking the outcome space to be $\O' \defeq \Delta(\O)$. Finally, we can extend to Bayesian agents; in the above discussion of the multi-agent setting, take expectations instead of fixing specific types for the other agents.

Of course, mechanism design asks many questions beyond whether a particular mechanism is truthful, and some of these can be reframed as questions in convex analysis.  Implementability focuses on the question of when there exist payments that make a given allocation rule truthful.  
Figure~\ref{fig:oldnewproofs} (a) illustrates known characterizations and how they were proved.  As it shows, several of them rely on showing equivalence to a condition from convex analysis known as {\em cyclic monotonicity}.  Instead, in Appendix~\ref{sec:convex}, we reprove them in our more general framework by showing equivalence to the condition of being subgradients of a convex function (see Figure~\ref{fig:oldnewproofs} (b)).  This has three main benefits.
First, by exposing the essential convex analysis question, we are able to {\em greatly} simplify the proofs of some of these results.  For example, the original proof of Theorem~\ref{thm:wmon} relies on representing the allocation rule using a graph and arguing about the limit behavior of a process of creating paths in that graph.  In contrast, our proof simply requires defining a function and showing it is convex with the correct subgradients by elementary arguments.  
Second, this reframing reveals that these results actually yield, we believe, new results in convex analysis (in particular, Theorems~\ref{thm:cmon},~\ref{thm:wmon}, and~\ref{thm:wlsg} and Corollary~\ref{cor:lwmon}).
Third, this approach shows us how to translate known results from mechanism design into new results about scoring rules, as we saw in Section~\ref{sec:scoring}.
While elements of a subgradient-based approach can be found in a variety of work on characterizing implementability (see, e.g.,\cite{mcafee1988multidimensional,jehiel1996how,jehiel1999multidimensional,jehiel2001efficient,krishna2001convex,milgrom2002envelope,bikhchandani2006weak}),
this work has tended to use individual facts applied to particular settings, in contrast to our approach of translating mechanism design questions into convex analysis questions.  Nevertheless, as these are essentially reframings of known results that do not directly provide new insights for mechanism design, we defer this material to Appendix~\ref{sec:convex}.

\tikzset{
 char/.style={rectangle, draw=black, outer sep=4pt, align=center},
 who/.style={outer sep=6pt},
 iff/.style={implies-implies,double equal sign distance, thick, draw}
}

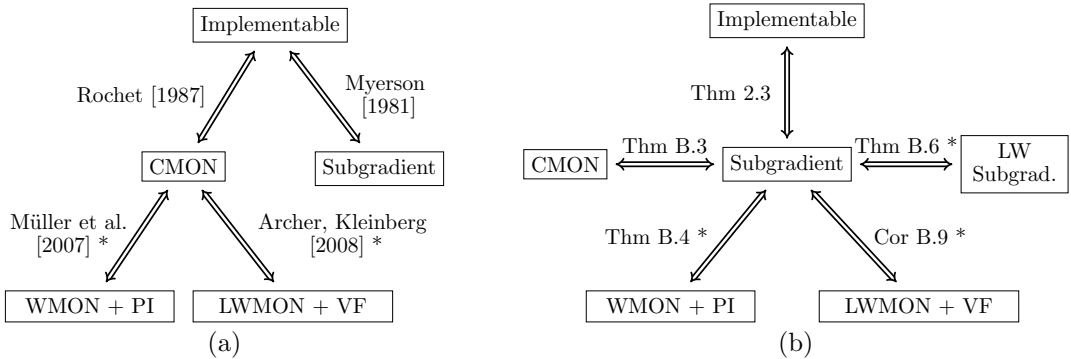
\begin{figure}[!h]
  \hspace*{-10pt}
  \begin{tabular}{ccc}
    \scalebox{0.82}{\begin{tikzpicture}[node distance=1.5cm and -0.8cm,
        every edge/.style={iff}]
        \node[char,rectangle] (imp) {Implementable};
        \node[char,rectangle] (cmon) [below left= of imp] {CMON} edge node[left,who] {Rochet~\shortcite{rochet1987necessary}} (imp);
        \node[char,rectangle] (sg) [below right= of imp] {Subgradient} edge node[right,who] {\shortstack{Myerson\\[-2pt]~\shortcite{myerson1981optimal}}} (imp);
        \node[char] (wmon) [below left= of cmon, text width=2.5cm] {WMON + PI} edge node[left,who] {\shortstack{\!\!\!M\"uller et al.\!\!\!\\~\shortcite{muller2007weak} *}}(cmon);
        \node[char] (lwmon) [below right= of cmon, text width=3cm] {LWMON + VF} edge node[right,who] {\shortstack{Archer, Kleinberg\\[-2pt]~\shortcite{archer2008truthful} *}} (cmon);
      \end{tikzpicture}}
 &\hspace*{6pt}&
   \scalebox{0.82}{\begin{tikzpicture}[node distance=1.5cm and -0.8cm, every edge/.style={iff}]
       \node[char,rectangle] (imp) {Implementable};
       \node[char,rectangle] (sg) [below = 1.5cm of imp] {Subgradient} edge node[left=-2pt,who] {Thm~\ref{thm:main-char}} (imp);
       \node[char] (wmon) [below left= of sg, text width=2.5cm] {WMON + PI} edge node[left,who] {Thm~\ref{thm:wmon} *} (sg);
       \node[char] (lwmon) [below right= of sg, text width=3cm] {LWMON + VF} edge node[right,who] {Cor~\ref{cor:lwmon} *} (sg);
       \node[char] (wlsg) [right= 1.5cm of sg, text width=1.5cm] {\centering LW Subgrad.} edge node[above=-4pt,who] {Thm~\ref{thm:wlsg} *} (sg);
       \node[char,rectangle] (cmon) [left= 1.6cm of sg] {CMON} edge node[above=-4pt,who] {Thm~\ref{thm:cmon}} (sg);
     \end{tikzpicture}}
\\
(a) && (b)    
  \end{tabular}
 \caption{Proof structure of existing mechanism design literature (a), and the new proof structure presented in this paper (b).  Asterisks (*) denote the requirement that $\T$ be convex.  We write CMON for cyclic monotonicity, WMON for weak monotonicity, and PI for path independence.  For the other abbreviations, L is local, W is weak, and VF is vortex freeness, a condition weaker than path independence introduced in Archer and Kleinberg~\protect\citeyear{archer2008truthful}.  Definitions of these conditions can be found in Appendix~\ref{sec:convex}.
}
 \label{fig:oldnewproofs}
\end{figure}

Revenue equivalence is the question of when all mechanisms with a given allocation rule charge the same prices (up to a constant).  Translating this into convex analysis terms, when is the convex function associated with a given selection of a subgradient unique up to a constant?  We ask the more general question, what are all the convex functions consistent with a given selection of a subgradient? The result is a theorem, extending a result due to Kos and Messner~\shortcite{kos2012extremal}, that characterizes the possible payments of every truthful mechanism, even those that do not satisfy revenue equivalence.  As their analysis essentially applies the natural convex analysis technique, we again defer this material to Appendix~\ref{sec:rev-eq}.

\section{Additional Models as Special Cases}
\label{sec:other-applications}

A number of other application domains do not quite fit into mechanism design or scoring rules, forcing researchers to adapt results to their particular setting.
For example, one may wish to elicit several distributions at once, or partially allocate items or rewards.
Fortunately, many such settings can be easily expressed in terms of the more general framework of affine scores.
We now briefly survey five such domains, and in each show how our main theorem could have directly provided the characterization ultimately used, rather than requiring effort to conceptualize and prove it.

\subsection{Decision Rules}

A line of work has considered a setting where a decision maker needs to select from a finite set $\mathcal{D}$ of decisions and so desires to elicit the distribution over outcomes conditional on selecting each alternative~\cite{othman2010decision,chen2011information,chen2011decision}.  Since only one decision will be made and so only one conditional distribution can be sampled, simply applying a standard proper scoring rule does not result in truthful behavior.
Applying Theorem~\ref{thm:main-char} to this setting characterizes what expected scores must be, from which many of the results in these papers follow.

For example, consider the model of proper scoring rules for decision rules~\cite{chen2011information}, which we now describe:
\begin{itemize}
\item The decision maker will select an action from $\mathcal{A} = \{ 1, \ldots, n \}$;
\item After $i\in\mathcal{A}$ is chosen, an outcome from $\mathcal{O} = \{ o_1, \ldots o_m \}$ will be realized, where intuitively the probability of each outcome depends on the action chosen;
\item The decision maker asks an expert for the probabilities $P_{i,o} = \Pr[o \text{ realized} | i \text{ chosen}]$; we will denote by $\P$ the set of all allowable such probability matrices;
\item The action is chosen according to a fixed decision rule $D : \mathcal{P} \rightarrow \Delta(\mathcal{A})$ selected in advance by the decision maker, where $D_i(P)$ is the probability of choosing action $i$ given the expert reported the matrix $P$;
\item The decision maker scores the expert's report based on the chosen action $i$ and materialized outcome $o$, according to the function $\SRS_{i,o} : \mathcal{P} \to \mathbb{R} \cup \{-\infty\}$.
\end{itemize}

Given a belief $P$ and report $Q$, we can therefore write the expert's expected score as
$V(Q, P) \defeq \sum_{i, o} D_i(Q) P_{i,o} \SRS_{i,o}(Q)$.
The definition of (strict) properness for a particular decision rule then follows naturally:
a regular scoring rule $\SRS$ is {\em proper} for a decision
rule $D$ if $V(P,P) \geq V(Q,P)$ for all $P$ and all $Q \neq P$.
It
is {\em strictly proper} for the decision rule if the inequality is
strict. 

A key novelty of this setting is the \emph{type} of the agent: a matrix of conditional probabilities.  This is a different object from a distribution over $\O$ or even $\O\times\mathcal{A}$, and thus one cannot use standard scoring rule characterizations.  Similarly, the type does not describe the utility of the agent, and hence mechanism design characterizations cannot be applied.  For this reason, Chen and Kash~\citeyear{chen2011information} derive a characterization from scratch.
Fortunately, the utility of the expert in this setting is an \scorename, as the expected score $V(Q,P)$ can be written as a linear combination of the entries of $P$, and therefore is affine in $P$.
We can therefore immediately apply our theorem to derive Chen and Kash's~\shortcite{chen2011information} characterization, and even extend it to the case where the set $\P$ of probability matrices is not convex.
In the theorem statement we make use of the Frobenius inner product $P \frobeniusprod Q \defeq \sum_{i,o} P_{i,o}Q_{i,o}$.

\begin{corollary}
\label{thm:decision}
Given a set of probability matrices $\mathcal{P} \subseteq \Delta(\mathcal{O})^n$
A regular scoring rule is (strictly) proper for a decision rule $D$
if and only if
\begin{equation*}
\SRS_{i,o}(Q) = 
\begin{cases}G(Q) - G_Q \frobeniusprod Q +
\frac{G_{Q,i,o}}{D_i(Q)} & D_i(Q) > 0\\
\Pi_{i,o}(Q) 
 & D_i(Q)=0\end{cases}
 \end{equation*}
where $G: \convhull(\mathcal{P}) \rightarrow \mathbb{R} \cup \{ -\infty \}$
is a (strictly) convex function,
$G_Q$ is a subgradient of $G$ at the point $Q$
with $G_{Q,i,o} = 0$ when $D_i(Q) = 0$,
and $\Pi_{i,o} : \mathcal{P} \rightarrow R\cup
\{ -\infty \}$ is an arbitrary function that can take a value of
$-\infty$ only when $Q_{i,o}=0$.     
\end{corollary}

\begin{proof}
By Theorem~\ref{thm:main-char}, $\SRS$ is (strictly) proper for $D$ if and only if there exists a (strictly) convex $G$ such that $V(Q,P) = G(Q) + dG_Q(P-Q)$.  That is,
$$\sum_{i,o} D_i(Q) P_{i,o} \SRS_{i,o}(Q) = G(Q) - G_Q \frobeniusprod Q + \sum_{i,o} G_{Q,i,o} P_{i,o}~,$$
or for all $i$ such that $D_i(Q) \neq 0$,
$$\SRS_{i,o}(Q) = G(Q) - G_Q \frobeniusprod Q + \frac{G_{Q,i,o}}{D_i(Q)}~.$$
When $D_i(Q) = 0$, $\SRS$ is unconstrained (other than the minimal requirements regarding $-\infty$ for regularity).  However, note that our \scorename is restricted in that, because $D_i(Q)$ is fixed, some choices in $\mathcal{A}$ are not possible to select as subgradients.  In particular, it must be that $G_{Q,i,o} = 0$ when $D_i(Q) = 0$.
\end{proof}

\subsection{Proper Losses for Partial Labels}
\label{sec:partial-labels}

Several variants of proper losses have appeared in the machine learning literature, one of which is the problem of estimating the probability distribution of labels for a new data point when the training data may contain several \emph{noisy} labels, possibly not even including the correct label.  (This is frequently the case, for example, when using crowdsourced labels for training data.)   More formally, one wishes to estimate $p\in\Delta_n$ where the true label $y\in \{1,\ldots,n\}$ is drawn from $p$.  However, instead of observing a sample $y \sim p$ and designing a proper loss $\ell(\hat p,y)$, one instead only observes some noisy set of labels $S \subseteq \{1,\ldots,n\}$.  Hence, the task is to design a loss $\ell(\hat p,S)$ which when minimized over one's data yields accurate estimates of the true $p$.

Recently this problem was studied by Cid-Sueiro~\citeyear{cid-sueiro2012proper} under the assumption that the labels in $S$ are drawn i.i.d.\ from distribution $q = M p$ for some known $M\in\reals^{2^n\times n}$, where $p$ is the true label distribution.  That is, if the actual label is drawn from $p$, the noisy set of labels is drawn from $M p$ (using some indexing of the sets, say lexographical).  Cid-Sueiro provides a characterization (his Theorem 4.3) of all proper losses for an even more general version of this setting where $M$ is not known exactly but assumed to be a member of some known class; the loss should be proper for any $M$ in this class. Note that the (negative) payoff $\E_{S\sim Mp}[\ell(\hat p, S)] = \ell(\hat p, \cdot)^\tr M p$ is linear in the underlying distribution $p$, so our Theorem~\ref{thm:main-char} applies and allows us to recover his characterization result.  Note that this is essentially a latent observation setting, and the fact that what we observe is a set of labels is in no way necessary; any observed outcome whose distribution has a linear (or affine) relationship with the latent outcome would suffice to apply our theorem.

Rather than introducing the full, general model used by Cid-Sueiro, we show how our theorem applies to yield a characterization for a single, fixed $M$.  This is a special case of his Theorem 4.3, generalized to allow restricted sets of probability distributions $\P$.
\begin{corollary}
  \label{thm:partial-label-char}
  Let number of labels $n$, matrix $M \in\reals^{2^n\times n}$ , and $\P \subseteq \Delta_n$ be given such that $M \P$ has full dimension in the column space of $M$.
  Let outcome set $\O = 2^{\{1,\ldots,n\}}$ be the power set of $\{1,\ldots,n\}$.
  A regular score $\SRS:\P \times \O \to\extreals$ is proper if and only if there exists a convex function $G:\convhull(\P)\to\reals$ such that
  \begin{equation}
    \label{eqn:sr-pl}
    \ell(\hat p,S) =  - G(\hat p) - G_{\hat p}^\tr (M^+e_S) + G_{\hat p}^\tr \hat p,
  \end{equation}
  where $e_S$ is a one-hot encoding (indicator vector) of $S$, $M^+$ is a left inverse of $M$,  and $G_p$ is a subgradient of $G$ for all $p\in\P$.
\end{corollary}
\begin{proof}
For the form given, the expected loss is
\begin{align*}
\E_{S\sim Mp}[\ell(\hat p,S)
&=  \E_{S\sim MP} [- G(\hat p) - G_{\hat p}^\tr (M^+e_S) + G_{\hat p}^\tr \hat p]\\
&=  - G(\hat p) - G_{\hat p}^\tr M^+M p + G_{\hat p}\tr \hat p\\
&=  - G(\hat p) - G_{\hat p} \cdot (p - \hat p)~.
\end{align*}
As observed above take $\A = \{p \mapsto  -\E_{S\sim Mp} \ell(q,S) \mid q \in \P\}$, each of which is linear in $p$.
  Propriety of the given form then follows immediately from Theorem~\ref{thm:main-char}.
  For the converse, let $\ASdef$ be a given truthful \scorename for $\T = \P$.
  From the theorem, $\AS{p,\cdot} = G(p) + d_{p}(\,\cdot\, - p) \in \A$.
  Thus we can write $\ell(\hat p, \cdot)^\tr M p = \E_{S\sim Mp} [\ell(\hat p,S)] = - G(\hat p) - G_{\hat p}\cdot (p - \hat p)$.
  Taking $\hat \ell(\hat p, \cdot) =  \ell(\hat p, \cdot) + (G(\hat p) - G_{\hat p} \cdot \hat p)\ones$, this means $\hat \ell(\hat p, \cdot)^\tr M p = - G_{\hat p}\cdot p$ for all $p \in \P$.  By our assumption that $M \P$ has full dimension in the column space of $M$,  $\hat \ell(\hat p, \cdot)^\tr = - M^+ G_{\hat p}$ for some left inverse $M^+$, showing $\ell$ is of the desired form.  (If $M \P$ does not have full dimension there may be additional choices of $\ell$ with the correct expected value on $\P$, but the proof otherwise applies.)
\end{proof}

\subsection{Mechanism Design with Partial Allocation}

Several mechanism design settings considered in the literature have some form of \emph{exogenous} randomization, in that ``nature'' chooses some outcome $\omega$ according to some (often unknown) distribution, and this distribution may depend on the allocation chosen by the mechanism.
Examples include sponsored search auctions~\cite{feldman2008algorithmic}, multi-armed bandit mechanisms~\cite{babaioff2009characterizing}, and recent work on daily deals~\cite{cai2013designing}.
The work of Cai, Mahdian, Mehta, and Waggoner~\shortcite{cai2013designing} introduces a very general model for such settings, which will be our focus in this subsection.

In the setting of Cai, et al., the mechanism designer wants to elicit two pieces of information: the agent's (expected) value for an item in an auction and the probability distribution of a random variable conditional on that agent winning.
Their goal is to understand how the organizer of a daily deal site can take into account the value that will be created for users (as opposed to just the advertiser) when a particular deal is chosen to be advertised.
(E.g. the site operator may prefer deals that sell to many users over equally profitable deals that sell only to a few because this keeps users interested for future days.)
The authors characterize the possible (implementable) ways of quantifying user welfare as a function of the agent's probabilistic belief as to the outcome (of nature) resulting from each allocation.
We will show how to recover this characterization as a special case of a more general setting in which a mechanism designer wishes to elicit two pieces of information, but the second need not be restricted to probability distributions.

We begin with a description of the setting of Cai et al.~\shortcite{cai2013designing}.
Let $\O$ be a set of outcomes, and for each outcome $\o$ and each agent $i$, let $\Omega^{i,\o}$ be some set of events.
For example, $\o$ could determine which agent wins an auction for the opportunity to advertise a special offer from its business and $\Omega^{i,\o}$ could represent the set of numbers of customers that may purchase the deal.
Agents each have a valuation function $v^i:\O\to\reals$ and a set of beliefs $p^{i,\o} \in \Delta(\Omega^{i,\o})$ for each allocation $\o\in\O$, e.g., an expected value for getting to advertise and a probability distribution over the number of customers who accept the deal.
The mechanism aggregates all of this information into a single outcome $\o$, and additionally choses some payoff function $s^i : \Omega^{i,\o} \to \reals$, so that the final utility of agent $i$ is $v^i(\o) + \E_{p^{i,\o}}[s^i]$; that is, the winning agent both gets to advertise and accepts a scoring rule contract regarding its prediction of the number of customers.
A mechanism is truthful if for all values of $v$ and $p$ for the other agents, agent $i$ maximizes her total utility by reporting $v^i$ and $p^i \defeq (p^{i,\o})_{\o\in\O}$ truthfully.
For additional examples, the standard sponsored search setting has $\Omega^{i,\o} = \{\text{click},\text{no click}\}$ for $\o$ such that $i$ is allocated a slot, and the probabilities $p^{i,\o}$ are assumed to be public knowledge.
Moreover, the decision rules framework discussed above is a single-agent special case with $v \equiv 0$ and $\Omega^\o = \Omega^{\o'} = \Omega$ for all $\o\in\O$; of course, unlike the interpretation above, in this setting $\o\in\O$ is the allocation/decision while $\Omega$ is the set of outcomes.

Motivated by incorporating the utilities of the end consumers in a daily deal setting, Cai et al.~\shortcite{cai2013designing} ask when one can implement an allocation rule of the form $f(v,p) = \argmax_{\o\in\O} v(\o) + g^\o(p^\o)$, which they interpret as maximizing welfare of the winner plus a term that captures something about the welfare of consumers.
In other words, when does there exist some choice of payment making $f$ truthful.
The authors conclude that this can be done if and only if $g^\o$ is convex for each $\o\in\O$.
In what follows, we will recover this result using our \scorename framework.

We first observe that this model can easily be cast as an \scorename, as follows.
For simplicity, we fix some agent $i$ and focus on the single-agent case; as discussed several times above, this is essentially without loss of generality.
The type space is simply (as subset of) the combined private information of the agent,
\begin{equation}
\label{eq:general-model-1}
\T \subseteq \left\{(v,p) : v \in \O\to\reals,\; p \in \prod_{\o\in\O} \Delta(\Omega^{i,\o}) \right\}.
\end{equation}
By assumption, the utility of the agent upon allocation $\o$ and payoff $s$ is simply 
$v(\o) + \E_{p^\o}[s]$,
 which is linear in the type $t=(v,p)$ and therefore affine.
Thus the net payoff
$\AS{t',t} = v(\o(t')) + \E_{p^{\o(t')}}[s(t')]$
is an \scorename, with score set $\A = \{(v,p) \mapsto v(\o) + \E_{p^\o}[s] : \o\in\O, s\in \Omega^{i,\o}\to\reals\}$, where again we have fixed $i$.

To answer the implementability question of Cai, et al.~\cite{cai2013designing}, we consider a general type space of the form $\T \subseteq  \V = \V^X \times \V^Y$ which can be expressed as a partition into two (subsets of) subspaces.
In the daily deal setting, we would have $\V^X = \reals^\O$ and $\V^Y = \prod_{\o\in\O} \Delta(\Omega^{i,\o})$, but in general $\V^Y$ need not be restricted to probability distributions.

We wish to know when a function $f^X:\T\to\linear(\V^X\to\reals)$ is \emph{extendable}, in the sense that there exists some truthful affine score $\ASdef$ with score set  $\A\subseteq\affine(\V\to\reals)$, and some $f^Y:\T\to\affine(\V^Y\to\reals)$ such that $\AS{t',t} = f^X(t')(t^X) + f^Y(t')(t^Y)$, where of course $t = (t^X,t^Y)$.
In the daily deals context, $f^X$ selects the outcome $\o$, formally represented as the linear map $v \mapsto v(\o)$, and $f^Y$ is an expected score of the form $\E_{\omega\sim p^\o}[s]$, which is affine (here linear) in $p$.
Thus, asking whether $f^X$ is extendable is equivalent to asking which rules for selecting the outcome $\o$ are implementable.

To answer this question, let us introduce some notation.
For each $a \in \A$ we write $X(a) \in \linear(\V^X\to\reals)$ to be the linear part of $a$ on $\V^X$, and $Y(a)$ to be the \emph{affine} part of $a$ on $\V^Y$.
We will write $X(\A) \defeq \{X(a) : a\in\A\}$.
In this very general framework, we can show the following.

\begin{theorem}[Informal]
  Partial allocation rule $f^X$ is extendable if and only if
  \begin{equation}
    \label{eq:general-model-3}
    f^X(t) \in \argsup_{x \in X(\A)} \left\{ x(t^X) + \sup_{\substack{a\in\A\\X(a) = x}} \left\{ Y(a)(t^Y) \right\} \right\}~.
  \end{equation}
\end{theorem}
To see this, note that one direction follows from the fact that an \scorename is truthful if and only if $\ASS(t) \in \argsup \left\{ a(t) : {a\in\A} \right\}$; we simply take the supremum first over $X(\A)$ and then over the rest.
For the other direction, note that taking $\AS{t',t} = f(t')(t^X) + y(t')(t^Y)$ where $y$ is in the $\argsup$ of the supremum of eq.~\eqref{eq:general-model-3} gives a truthful \scorename.

Returning to the special case of daily deals, let us denote by $a_{\o,s}\in\A$ the function $(v,p)\mapsto v(\o)+\E_{p^{\o}}[s]$.  We now see that $f(v,p)$ is implementable if and only if it satisfies
\begin{equation}
  \label{eq:general-model-4}
  f(v,p) \in \argsup_{\o\in \O} \left\{ v(\o) + \sup_{s \,:\, a_{\o,s} \in \A} \{ \E_{p^\o}[s] \} \right\}.
\end{equation}
Thus, letting $g^\o(p^\o) = \sup \left\{ \E_{p^\o}[s] : a_{\o,s} \in \A\right\}$, we see that $g^\o$ is convex as the supremum of affine functions.  Moreover, given any collection of convex functions $\{g^\o\}_{\o\in\O}$, where $g^\o:\Delta(\Omega^{i,\o})\to\reals$, we can define $S^\o \defeq \{\omega \mapsto g(p) + dg(\ones_\omega - p) : p\in\dom(g)\}$ and $\A \defeq \{a_{\o,s} : \o\in\O,\,s\in S^\o\}$, thus recovering each $g^\o$ in the above expression.  It then only remains to show that no other nonconvex function can serve in the $\argsup$; for this one may appeal to the argument of Cai et al.~\shortcite{cai2013designing} which observes that the indifference points between different allocations are fixed, thus determining the function in the $\argsup$ up to a constant.

A special case of the above, but closer to classical mechanism design, is captured in the following scenario.
The mechanism designer has two distinct sets of goods to allocate and wants to design a truthful mechanism that is consistent with a partial allocation rule that determines how the primary goods should be allocated given the agent's preferences over both types of goods.
Such mechanisms are characterized by the following informal theorem.

\begin{theorem}[Informal]
\label{thm:partial-informal}
Consider an agent with type $t = (t_1,t_2)$.  A truthful \scorename $\ASdef$ with score set $\A$ (which we represent as a set of triples $(o_1,o_2,p)$ of first outcome, second outcome, and price) is consistent with a partial allocation rule $f : t \mapsto o_1$, if and only if
\begin{equation}
  \label{eq:general-model-3-informal}
  f(t) \in \argsup_{o_1'} \left\{ t_1(o_1') + \sup_{\substack{(o_1,o_2,p) \in \A\\o_1' = o_1}} \left\{ t_2(o_2)+p \right\} \right\}
\end{equation}
\end{theorem}
In particular, analogous to the daily deals setting, the mechanism designer is restricted to mechanisms that make decisions based on a convex function of $t_2$ (the inner supremum is a pointwise supremum over affine functions and thus convex).

We conclude by noting similarity to recent work of Chambers and Lambert~\citeyear{chambers2014dynamically}, where a center wishes to elicit an agent's belief about some future event, and to do so multiple times as the event gets closer.  The solution proposed is essentially to have the agent choose a menu of scoring functions at time $0$ from a menu of menus, and then at time $1$ choose a score from the menu chosen at time $0$.  Both the present setting and theirs share this sense of ``menu of menus'', however one can easily check that the two are incomparable.  In particular, the relationship between the final score and the beliefs of the agent can be nonlinear in the Chambers--Lambert model.

\subsection{Responsive Lotteries}
\label{sec:affine-responsive-lotteries}

Feige and Tennenholtz~\shortcite{feige2010responsive} study the problem of how an agent can be incentivized to indirectly reveal his utility function over outcomes by being given a choice of lotteries over those outcomes, an approach with applications to experimental psychology, market research, and multiagent mechanism design.
Given a finite set of possible outcomes, the authors give several examples of effective lotteries under the assumption of risk neutrality.
In contrast, our approach allows us to give a complete characterization, which highlights the relationship between natural desiderata and underlying geometric properties of the set of possible lotteries, as we now describe.

The authors ask when a responsive lottery is \emph{truthful dominant}, which is defined as having the following three properties: \emph{incentive compatibility}, meaning the true utility is among the optimal reports, \emph{rational uniqueness}, meaning the optimal lottery for a given belief is unique, \emph{rational invertibility}, meaning every report can be optimal for at most one utility.
We would like to relate these notions to simple geometric properties, which we introduce now informally, and formalize later in Definition~\ref{def:strictly-convex-smooth}.
A convex set $K$ is \emph{strictly convex} if no point on its boundary can be expressed as a convex combination of other points in $K$, and $K$ is \emph{smooth} if each point on its boundary has a unique unit normal vector.
Using the results to follow, we can show that strict truthfulness and continuity of the lottery rule jointly correspond to strict convexity of the lottery set, and uniqueness of the utility given the optimal lottery corresponds to smoothness of the boundary.

\begin{corollary}
  \label{cor:affine-lotteries}
  A lottery rule $f$ satisfies incentive compatibility and rational uniqueness if and only if $f(x) = \argmax_{p\in K} \inprod{x,p}$ for $K\subset \Delta_n$ compact and strictly convex relative to $\Delta_n$.
  Moreover, $f$ additionally satisfies rational invertibility (and thus is truthful dominant) if and only if $K$ is additionally smooth.
\end{corollary}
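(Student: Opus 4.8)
The plan is to exhibit responsive lotteries as an instance of affine scores and then read off Corollary~\ref{cor:affine-lotteries} from Theorem~\ref{thm:affine-linear}. Given a lottery rule $f$ defined on a set $\T$ of utility functions on the unit sphere, let the induced score $S$ send $x\in\T$ to the linear functional $y\mapsto\inprod{y,f(x)}$, i.e.\ the agent's expected utility when his true utility is $y$ and he reports $x$. Since expected utility $\inprod{y,p}$ is linear in $y$, $S$ is an affine (indeed linear) score, and identifying each functional $y\mapsto\inprod{y,p}$ with the vector $p$ places us in the setting of Theorem~\ref{thm:affine-linear}, with $\A$ identified with the offered lottery set $f(\T)\subseteq\Delta_n$. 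Under this dictionary incentive compatibility of $f$ is literally truthfulness of $S$, so the backward directions are immediate: if $f(x)=\argmax_{p\in K}\inprod{x,p}$ for $K$ compact and strictly convex relative to $\Delta_n$, a linear functional has a unique maximizer over $K$, so $f$ is well defined, reporting truthfully is optimal, and the optimal lottery is unique; smoothness handles rational invertibility below.

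For the forward direction I would match the remaining desiderata to the hypotheses of Theorem~\ref{thm:affine-linear}: rational uniqueness (every fully rational agent with utility $x$ selects exactly $f(x)$) corresponds to $S$ being strictly truthful, and together with $f$ being defined on all of $\T$ also supplies, by a routine compactness argument on $f(\T)$, the continuity and surjectivity onto $\A$ that the theorem uses; rational invertibility (recovering $x$, up to the affine-transformation ambiguity in utilities, from $f(x)$) corresponds to injectivity of $S$, and truthful dominance then follows as in the theorem. Theorem~\ref{thm:affine-linear} now yields that $\A=f(\T)$ is the boundary of a compact strictly convex set $K$, smooth exactly when $S$ is injective. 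Since for a strictly convex body every boundary point is exposed (a supporting hyperplane meets $K$ in a face that, containing no segment, is a point), each boundary point of $K$ is the unique maximizer of some linear functional, so $f(\T)=\partial K$ and $f(x)=\argmax_{p\in K}\inprod{x,p}$; and $K=\conv(\partial K)$ with $\partial K\subseteq\Delta_n$ gives $K\subseteq\Delta_n$.

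The one genuinely delicate step, and the one I expect to cost the most care, is reconciling the ambient $\reals^n$ of Theorem~\ref{thm:affine-linear} with the simplex $\Delta_n$ where lotteries live. Utilities representing a fixed preference are determined only up to positive affine transformations $y\mapsto ay+b\ones$, and since $\inprod{ay+b\ones,p}=a\inprod{y,p}+b$ for $p\in\Delta_n$, the ranking of lotteries ignores the $\ones$-direction. I would therefore recenter utilities into the hyperplane $\{y:\sum_i y_i=0\}\cong\reals^{n-1}$, which changes no preference and no lottery ranking, apply Theorem~\ref{thm:affine-linear} there, and transport the result back: ``strictly convex'' and ``smooth'' inside this hyperplane are, by the footnote's definitions, precisely ``strictly convex relative to $\Delta_n$'' and ``smooth relative to $\Delta_n$''. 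Carrying out this change of coordinates carefully (tracking the translation between the reduced lottery set and $K\subseteq\Delta_n$), together with pinning down the exact correspondence between rational uniqueness/invertibility and the continuity, surjectivity, and injectivity hypotheses above, is the part that must be done by hand; the remainder is a direct appeal to Theorem~\ref{thm:affine-linear}.
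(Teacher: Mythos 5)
Your proposal matches the paper's proof in its essentials: the paper's argument is precisely to project utilities and lotteries onto the hyperplane $V=\{y:\sum_i y_i=0\}\cong\reals^{n-1}$, normalize utilities to the unit sphere there, and then invoke Theorem~\ref{thm:affine-linear}, which is exactly the reduction you lay out (and you fill in the bookkeeping — identifying $\A$ with $f(\T)$, matching IC/rational uniqueness/rational invertibility to truthfulness/strict truthfulness/injectivity — more explicitly than the paper, which leaves that correspondence implicit).
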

\begin{proof}
  In this setting, we can see that utility functions are equivalent up to positive-affine transformations.
  (Since there are no payments, multiplying the utility of each outcome by a positive constant or adding a constant to the utility for each outcome has no effect on the optimal lottery for an agent.)
  Thus, we may project the utilities and probability simplex onto the set $V = \{x\in\reals^n : \sum_i x_i = 0\}$, which only changes the expected utilities by a constant.
  We then write these vectors in a basis for $V \cong \reals^{n-1}$, normalize the utilities (only scaling them) to the unit sphere in $V$, and apply Theorem~\ref{thm:affine-linear}.
\end{proof}

We now turn to the geometric statements needed to establish Corollary~\ref{cor:affine-lotteries},
beginning with some formal definitions.
We denote by $\partial K$ the boundary of the set $K \subseteq \reals^n$.

\begin{definition}
  \label{def:exposed-normal}
  Given a compact convex set $K \subset \reals^n$, we define the \emph{exposed face} $F_K(t)$ in direction $t\neq 0$ and the \emph{normal cone} $N_K(k)$ at point $k\in\partial K$ by
  \begin{equation}
    \label{eq:affine-1}
    F_K(t) = \argmax_{k\in K} \inprod{t,k}, \qquad N_K(k) = \{ t\in\reals^n : k \in F_K(t) \}.
  \end{equation}
\end{definition}

An exposed face $F_K(t)$ is simply the set of points as far in direction $t$ as possible; for example, on a triangle $ABC$ the exposed faces are vertices $A$,$B$,$C$ and the edges $\overline{AB},\overline{BC},\overline{CA}$.  The normal cone $N_K(k)$ is simply the set of all valid normal vectors to $K$ at a point $k$ on the boundary of $K$; for the triangle, the normal cone at a point on an edge is simply a ray perpendicular to it, but on a vertex there is a closed cone spanning the exterior angle of the triangle.

\begin{definition}
  \label{def:strictly-convex-smooth}
  We say $K$ is \emph{strictly convex} if $F_K(t)$ is a singleton for all $t\neq 0$.  Dually, we say $K$ is \emph{smooth} if $N_K(k)$ is a ray (i.e. $\{\alpha t : \alpha \geq 0\}$ for some $t\neq 0$) for all $k\in\partial K$.
\end{definition}

To illustrate: a disc is smooth and strictly convex, a triangle is neither, a rounded rectangle is smooth but not strictly convex, and the intersection of two discs is strictly convex but not smooth.  We now establish the tight connection between these geometric concepts and the truthfulness conditions in this setting, which will imply Corollary~\ref{cor:affine-lotteries}.

\begin{theorem}
  \label{thm:affine-linear}
  Let $\T = \{t\in\reals^n : \|t\|_2=1\}$ be the unit sphere in $\reals^n$, and let a truthful affine score $\ASdef$ with score set $\A \subseteq \linear(\reals^n\to\reals) \cong \reals^n$ be given.  Then $S(t) \defeq \AS{t,\cdot}$ is surjective and continuous (as a function to $\reals^n$) and $\ASS$ is strictly truthful if and only if $\A$ is the boundary of a compact and strictly convex set $K\subset\reals^n$.  $S$ is additionally injective if and only if $K$ is additionally smooth.
\end{theorem}

\begin{proof}
  We begin with the first part of the theorem.  Let $K$ be compact and strictly convex, and $\A = \partial K$.  Then as $\ASS$ is truthful, we must have $S(t) \in \argsup_{a\in \A} \inprod{t,a}$.  As $\A = \partial K$, we may also write $S(t) \in \argmax_{k\in K} \inprod{t,k}$.  Now by strict convexity of $K$, we have for every $a\in\A = \partial K$, there exists some $t\in\T$ such that $\{a\} = \argmax_{k\in K} \inprod{t,k}$, giving us both surjectivity and strict truthfulness (as $S(t) = a$).  Continuity follows immediately from Berge's Maximum Theorem~\cite{ok2007real}.

  For the converse, let $S$ be strictly truthful, surjective, and continuous.  By standard arguments, since $\T$ is a compact subset of $\reals^n$, we have $\A = S(\T)$ is compact as a continuous image of a compact set.  Thus, $K\defeq \conv(\A)$ is a compact convex set.
  Letting $F_K(t) \defeq \argmax_{k\in K} \inprod{t,k}$ be the exposed face of $K$ in direction $t$, we will now show $F_K(t) = \{S(t)\}$.
  First, observe that the extreme points of $K$, $\ext(K)$, are a subset of $\A$; otherwise we have $k \in \ext(K)\setminus \A$, so $K\setminus\{k\}$ is a convex set containing $\A$, contradicting the definition of $K = \conv(\A)$.
  Now we may apply~\cite[Proposition A.2.4.6]{urruty2001fundamentals} to express the $\argmax$ in terms of the extreme points of $K$, giving us
\begin{equation*}
  F_K(t) \defeq \argmax_{k\in K} \inprod{t,k} = \conv\left(\argmax_{k\in \ext(K)} \inprod{t,k}\right) \subseteq \conv\left(\argmax_{a\in \A} \inprod{t,a}\right) = \{S(t)\}.
\end{equation*}
As $K$ is compact, $F_K(t)$ is nonempty, and thus $F_K(t) = \{S(t)\}$, and additionally we conclude $S(t) \in \ext(K)$.  Hence $\A = S(\T) \subseteq \ext(K)$, and as we concluded the reverse conclusion above, we have $\A = \ext(K)$.  We now apply~\cite[Proposition C.3.1.5]{urruty2001fundamentals} to obtain $\partial K = \bigcup_{t\in \T} F_K(t)$, which in turn gives $\partial K = \A$ by surjectivity.  Finally, as $\ext(K) = \A = \partial K$, we have strict convexity of $K$.

For the final statement of the theorem, we note that by~\cite[Proposition C.3.1.4]{urruty2001fundamentals}, we have $k \in F_K(t) \iff t \in N_K(k)$.  By the above, we already have $F_K(t) = \{S(t)\}$ for all $t\in \T$, which implies $N_K(k)\cap\T = \{t : S(t)=k\}$.  Hence, $N_K(a)$ is a ray for all $a\in\A$ if and only if $S$ is injective.
\end{proof}

\subsection{Crowdsourcing}

A common technique to provide incentives in crowdsourcing is to ask agents to answer a series of questions, the answers to which a small fraction are known, and then score agents based on how well their responses match the known answers.
A classic problem, however, is to design mechanisms which incentivize agents to tell the truth but sufficiently deter ``spammers'' who may wish to submit low-quality or even useless information in return for small but nonzero rewards.
Witkowski et al.~\citeyear{witkowski2013dwelling} were perhaps the first to address this problem, by incentivizing agents to ``pass'' on a question if they do not know the answer with sufficient confidence.
(In practice, only a small subset of the questions have known answers, and the incentives for these questions are leveraged to obtain worker responses to the other questions.  Following standard arguments, if the worker believes this subset to be chosen randomly, as can be achieved by a random permutation, then the incentives are the same as if answers were known for every question.)

Similarly, Shah and Zhou~\citeyear{shah2015double} introduce a mechanism which satisfies a simple axiom: if every attempted question is answered incorrectly (among those whose answers are known), the agent should receive 0 payoff.
They then show that their mechanism, which is based on thresholds, is the only such mechanism satisfying this axiom.
The authors also extend the ideas of Witkowski et al.~\cite{witkowski2013dwelling} to allow for multiple ``confidence'' levels, and define an attempt to be any answer with non-zero confidence.

The proofs in Shah and Zhou~\citeyear{shah2015double} are quite involved, spanning over four pages.
Using our framework, we can recover this characterization result in a few paragraphs, as we now demonstrate.
The crux of the argument is in translating the axiom to geometric constraints on the mechanism and then applying general results about the geometry of mechanisms with a finite set of allocations.  While we apply these results in the proof here, we will discuss them in Section~\ref{sec:props-finite}.

To begin, let us introduce the setting of Shah and Zhou~\citeyear{shah2015double} in terms of affine scores.
We will take the type of the worker to be a vector $t = (p_1,\ldots,p_n) \in [0,1]^n$ which encodes the worker's probability of correctly answering each question if it is not skipped, with the outcome being an $n$-dimensional binary vector $o\in\{-1,+1\}^n$ encoding for each question whether the agent answered correctly ($+1$) or not ($-1$).  The ``report'' of the worker in this formulation would then be to answer or skip, denoted $r\in\{-1,0,+1\}^n$ where $0$ denotes skipping.  Note that in the case of a skip, the outcome is both meaningless and irrelevant.  (Of course, in practice a worker would report an actual answer, but the analysis simplifies to the above by abstracting away the separate choices for each answer.)  
We can think of a payment function $f$ then as taking whether the agent reported and whether the agent would have been correct had she reported and computing a payment based on this; formally, $f:\{0,1\}^n \times \{-1,+1\}^n \to \reals$.  To capture the fact that the designer does not actually know whether the agent was right or wrong when skipping an answer, we require that $f((0,r_{-i}),(-1,o_{-i})) = f((0,r_{-i}),(+1,o_{-i}))$. Note that while $f$ takes whether the agent reported rather than what the report was, in a slight abuse of notation we use $r$ for both.  The expected payment to the worker is $\E_{o_i\sim p_i} f(r,o) = f(r,\cdot) \cdot t$, which is linear in $t$ and therefore an affine score.  Given a confidence threshold parameter $T$, the payment function is deemed \emph{incentive-compatible (IC)} if for each question $i$, the worker is incentivized to choose $r_i=0$ (skip) if $p_i < T$, and to choose $r_i=1$ (answer the question) if $p_i > T$.

We now show how to use our framework to rederive the core results of Shah and Zhou~\citeyear{shah2015double}.  Let us first restate their no-free-lunch axiom and mechanism in our framework.

\begin{axiom}[No-free-lunch]
  If all the answers attempted (not skipped) by the worker are wrong, then the payment is zero.
\end{axiom}

Define the following \emph{multiplicative mechanism} given a threshold $T\in(0,1)$ and budget $\mu>0$ which must be fully spent if all questions are answered correctly: 
\begin{equation}
  \label{eq:double-or-nothing-mechanism}
  f(r,o) = \mu T^{n-C} \ones\{W=0\}~,
\end{equation}
where $C = |\{i\in\{1,\ldots,n\}:r_i=1,o_i=1\}|$ is the number of correct answers, and $W = |\{i\in\{1,\ldots,n\}:r_i=1,o_i=-1\}| = \|r\|_1 - C$ is the number of incorrect answers (among those not skipped).  The name ``multiplicative'' comes from viewing $f$ as the product of scores for each question: $0$ if answered incorrectly, $1/T$ if answered correctly (recall $T<1$), and $1$ if skipped.

\begin{theorem}
  \label{thm:double-or-nothing-sufficient}
  The multiplicative mechanism~\eqref{eq:double-or-nothing-mechanism} is IC and satisfies the no-free-lunch axiom.
\end{theorem}

\begin{proof}
  The axiom is clearly satisfied by the $\ones\{W=0\}$ term.  For IC, consider the ``property'' $\Gamma:[0,1]^n\to\{0,1\}^n$ given by $\Gamma(t) = \max_{r\in\{0,1\}^n} f(r,\cdot)\cdot t$.  By definition, the multiplicative mechanism elicits $\Gamma$.  Combining Corollary~\ref{cor:wlt} and the results on finite properties in Section~\ref{sec:props-finite}, we see that it suffices to check IC at the boundary of each level set of $\Gamma$, corresponding to the types that are indifferent between reporting and skipping for at least one question.  This we can check directly, because answering the question ($r_i=1$) multiplies the payment by $1/T$ with probability $T$ and makes it 0 with probability $1-T$, a worker is indeed indifferent when $p_i=T$ as desired.
  
Verifying local truthfulness also requires checking types in the the interiors of the level sets but close to the boundary.  It follows easily from the indifference of types on the boundary that these types strictly prefer their assigned allocation.
\end{proof}

\begin{theorem}
  \label{thm:double-or-nothing-necessary}
  The multiplicative mechanism is the only IC mechanism satisfying the no-free-lunch axiom.
\end{theorem}

\begin{proof}
Let $f$ be any IC payment function satisfying the no-free-lunch axiom.  Again consider the finite property $\Gamma$ from Theorem~\ref{thm:double-or-nothing-sufficient}, which by Theorem~\ref{thm:power-diag} is a power diagram.  From the proof of Theorem~\ref{thm:power-diag}, without loss of generality we may take the sites of the corresponding power diagram to be the subgradients of the corresponding ``consumer surplus'' function given by $G(t) = \max_{r\in\{0,1\}^n} f(r,\cdot)\cdot t$, which corresponds to the maximum expected payment a worker can receive given their type $t$.

From the constraint $f((0,r_{-i}),(-1,o_{-i})) = f((0,r_{-i}),(+1,o_{-i}))$ that the score for skipping does not depend on the true answer, the subgradient for all skipped must be all zeroes.

Now observe that by the structure of the indifference sets where $p_i=T$ for exactly one $i$, and the fact that the line segment connecting sites in adjacent cells must be perpendicular to their boundary, we can conclude that the sites must lie on the vertices of an $n$-orthogon (generalization of a rectangle to $n$ dimensions).  To see this, consider that for any $r,r'\in\{0,1\}^n$ differing in entry $i$, the line segment connecting the corresponding sites $s_r$ and $s_{r'}$ must be perpendicular to the (affine) hyperplane $\{t:t_i=T\}$.  Piecing these constraints together shows that fixing the site for $r^0=(0,\ldots,0)$ and each $r^i$, which is zero in every entry but $r_i=1$, uniquely determines the corresponding $n$-orthogon on which the sites reside as vertices.  Moreover, each of these sites is non-zero only in the dimensions corresponding to the questions answered.  By the no-free-lunch axiom, all of the weights of the power diagram are 0.  (Note that this also shows the no-free-lunch axiom could be (apparently) further weakened to only apply when a single question is answered.) The requirement to pay the budget $\mu$ if all questions are answered correctly then fixes the scaling. 
\end{proof}

\section{Property Elicitation}
\label{sec:properties}

We wish to generalize the notion of truthful elicitation from eliciting private information from some set $\T$  to accept reports from a space $\R$ which is different from $\T$.  To even discuss truthfulness in this setting, we need a notion of a truthful report $r$ for a given type $t$.  We encapsulate this notion by a general multivalued map which specifies all (and only) the correct values for $t$.

\begin{definition}
Let  $\T$ be a given type space, where $\T\subseteq\V$ for some vector space $\V$ over $\reals$, and $\R$ be some given report space.  A \emph{property} is a multivalued map $\Gamma : \T \toto \R$ which associates a nonempty set of correct report values to each type.  We let $\Gamma_r \defeq \{t\in \T \,|\, r\in\Gamma(t)\}$ denote the set of types $t$ corresponding to report value $r$.
\end{definition}
One can think of $\Gamma_r$ as the ``level set'' of $\Gamma$ corresponding to value $r$.  This concept will be especially useful when we consider finite-valued properties in Section~\ref{sec:props-finite}.  A natural constraint to impose on these level sets is that they be \emph{non-redundant}, meaning no property value $r$ has a level set entirely contained in another.
\begin{definition} \label{def:props-redundant}
  Property $\Gamma:\T\toto \R$ is \emph{redundant} if there exist $r,r'\in \R$ such that $\Gamma_{r'}\subseteq \Gamma_{r}$.  Otherwise, $\Gamma$ is \emph{non-redundant}.
\end{definition}
The non-redundancy condition is essentially a bookkeeping tool.  If one adds report elements $r'$ which are (weakly) dominated by another report $r$, then any time $r'$ would be correct, an agent could safely report $r$ instead.  Hence, one could think of imposing this condition then as simply ``pre-processing'' $\Gamma$ to remove any dominated reports.

We extend the notion of an \scorename to this setting, where the report space is $\R$ instead of $\T$ itself.  Note that the score set $\A = \{\AS{r,\cdot} ~|~ r \in \R \}$ is still a subset of $\affine(\V\to\extreals)$.
\begin{definition}
  An \scorename $\ASpropdef$ \emph{elicits} a property $\Gamma:\T\toto \R$ if for all $t$,
  \begin{equation}
    \label{eq:gamma-truthful}
    \Gamma(t) = \argsup_{r\in \R} \AS{r,t}.
  \end{equation}
  If we merely have $\Gamma(t) \subseteq \argsup_{r\in \R} \AS{r,t}$, we say \ASS\ \emph{weakly elicits} $\Gamma$.
  A property $\Gamma:\T\toto \R$ is \emph{elicitable} if there exists some \scorename $\ASpropdef$ eliciting $\Gamma$.
\end{definition}
Note that it is certainly possible to write down $\ASS$ such the $\argsup$ in~\eqref{eq:gamma-truthful} is not well defined.  This corresponds to some types not having an optimal report, which we view as violating a minimal requirement for a sensible affine score.  Thus, as we have defined properties to be nonempty, in order for \ASS\ to be an \scorename, we require \eqref{eq:gamma-truthful} to be nonempty for all $t \in \T$.

As before, we allow $\AS{r,t}$ to take on values in the extended reals to capture scoring rules such as the log score, so we need a notion of regularity --- an \scorename \ASS\ is \textsl{$\Gamma$-regular} if $\AS{r,t} < \infty$ always and $\AS{r,t} \in \reals$ whenever $r\in\Gamma(t)$.  We define $\Gamma$-regular linear and affine families similarly.\footnote{The family $\{ \ell_r \in \linear(\V\to\extreals) \}_{r  \in \R}$ is \emph{$\Gamma$-regular} if $\ell_r(t) \in \reals$ for all $t \in \Gamma_r$, and $\ell_{r}(t') \in \reals \cup \{-\infty\}$ for $t' \in \T \setminus \Gamma_r$.  Likewise for $\Gamma$-regular affine functions.}

\subsection{A First Characterization}

The simplest way to come up with an elicitable property is to induce one from an \scorename.  For any $\ASpropdef$ with score set $\A \subset \affine(\V\to\extreals)$, the property
\begin{equation} \label{eq:properties-3}
  \Gamma^\ASS : t \to \argsup_{r\in \R} \AS{r,t}
\end{equation}
is trivially elicited by \ASS\ if this $\argsup$ is well defined.

Observe also that any \scorename \ASS\ eliciting $\Gamma$ gives rise to a truthful \scorename in the original sense --- in fact, this is a version of the \emph{revelation principle} from mechanism design.  For each $t$ let $r_t \in \Gamma(t)$ be a report choice for $t$; then the \scorename $\AS[^\T]{t',t} \defeq \AS{r_{t'},t}$ is truthful.  Moreover, by our choices of $\{r_t\}$, we have
\begin{equation}
  \label{eq:revelation-g}
  G(t)\defeq \sup_{t'\in\T} \AS[^\T]{t',t} = \sup_{r\in \R} \AS{r,t}.
\end{equation}
Of course, in general, $\ASS^\T$ will not be strictly truthful, since by definition, any reports $t',t''$ with $r_{t'} = r_{t''}$ will have $\AS[^\T]{t',\cdot} \equiv \AS[^\T]{t'',\cdot}$.  Thus we may think of a property as \emph{refining} the notion of strictness for a truthful \scorename.  The connection we draw in Theorem~\ref{thm:prop-char} is that, in light of~\eqref{eq:revelation-g}, a property $\Gamma$ therefore specifies the portions of the domain of $\T$ where $G$ must be ``flat''.
To get at the connection between properties and ``flatness'', we start with a technical lemma which shows that having the same subgradient at two different points is equivalent to $G$ being flat in between.

\begin{lemma}\label{lem:subgradients}
  Let $G:\convhull(\T)\to\extreals$ be convex with $G(\T)\subseteq \reals$, and let $d\in \partial G_t$ for some $t\in\T$.  Then for all $t'\in \T$,
\[ d \in \partial G_{t'} \; \iff \; G(t) - G(t') = d(t-t'). \]
\end{lemma}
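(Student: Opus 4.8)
The plan is to prove both directions directly from the subgradient inequality~\eqref{eq:subgradient}, keeping careful track of which quantities are allowed to be $-\infty$. Throughout, recall that $G(t),G(t')\in\reals$ since $G(\T)\subseteq\reals$, and that $d(t)\in\reals$ since $d\in\partial G_t$ forces $d(t-t)=0$ and more generally $d$ is $\T$-regular at $t$; the only possible infinity is a value $-\infty$.

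For the ($\Leftarrow$) direction, suppose $G(t)-G(t') = d(t-t')$. I want to show $d\in\partial G_{t'}$, i.e. $G(t'') \geq G(t') + d(t''-t')$ for all $t''\in\T$. Writing $d(t''-t') = d(t''-t) + d(t-t')$ by linearity, and substituting the hypothesis $d(t-t') = G(t)-G(t')$, the desired inequality becomes $G(t'') \geq G(t') + d(t''-t) + G(t) - G(t')= G(t) + d(t''-t)$, which is exactly the subgradient inequality for $d$ at $t$ — true by assumption $d\in\partial G_t$. So this direction is a one-line rearrangement; the only care needed is that $d(t-t')$ is finite (which it is, since $G(t)-G(t')\in\reals$), so the algebra with the extended reals is legitimate.

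For the ($\Rightarrow$) direction, suppose $d\in\partial G_{t'}$ as well as $d\in\partial G_t$. Applying the subgradient inequality for $d$ at $t$ with the point $t'$ gives $G(t') \geq G(t) + d(t'-t)$, i.e. $G(t')-G(t) \geq d(t'-t) = -d(t-t')$. Applying the subgradient inequality for $d$ at $t'$ with the point $t$ gives $G(t) \geq G(t') + d(t-t')$, i.e. $G(t)-G(t') \geq d(t-t')$. Combining, $d(t-t') \leq G(t)-G(t')$ and $d(t-t') \geq G(t)-G(t')$ (the second from negating the first inequality, using that $G(t)-G(t')\in\reals$ so negation is unproblematic), hence $G(t)-G(t') = d(t-t')$ as claimed.

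The main thing to watch — really the only obstacle, and a minor one — is the bookkeeping with $\pm\infty$: I should confirm that $d(t-t')$ cannot be $+\infty$ (it is bounded above by $G(t)-G(t')\in\reals$ via the subgradient inequality at $t'$) and cannot be $-\infty$ in the $(\Leftarrow)$ direction (there it equals $G(t)-G(t')\in\reals$ by hypothesis), so that adding and negating these quantities is valid under the paper's conventions. With that checked, both implications are immediate from~\eqref{eq:subgradient}. I would also remark that an entirely symmetric argument shows the flatness actually extends: $G$ restricted to the segment $[t',t]$ is affine with slope $d$, which is the intuition flagged before the lemma, though the statement as given only needs the endpoint identity.
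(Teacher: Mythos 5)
Your proof is correct and follows essentially the same route as the paper's: the forward direction applies the subgradient inequality once at $t$ (evaluated at $t'$) and once at $t'$ (evaluated at $t$) and sandwiches, and the converse decomposes $d(t''-t') = d(t''-t) + d(t-t')$ and substitutes the hypothesis to reduce to the subgradient inequality at $t$. The extra bookkeeping you flag about $\pm\infty$ is also present in the paper's proof (under the parenthetical ``as $G(t)$ and $G(t')$ are finite'' and the remark that $d(t'),d(t)\in\reals$), so there is no substantive difference.
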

\begin{proof}
  First, the forward direction.  Applying the subgradient inequality~\eqref{eq:subgradient} at $t'$ for $dG_t = d$ and at $t$ for $dG_{t'} = d$, we have
  \begin{align*}
    G(t') &\geq G(t) + d(t'-t) \\
    G(t) &\geq G(t') + d(t-t'),
  \end{align*}
  from which the result follows (as $G(t)$ and $G(t')$ are finite).

For the converse, assume $G(t) = G(t') + d(t-t')$ and let $t''\in\T$ be arbitrary.  As $G(t)$ and $G(t')$ are finite, $d(t-t')$ is as well.  Then using the subgradient inequality~\eqref{eq:subgradient},
  \begin{align*}
    G(t') + d(t''-t') &= G(t') + d(t''-t) + d(t-t')
    = G(t) + d(t''-t) 
    \leq G(t''). \qedhere
  \end{align*}
\end{proof}

We are now ready to state our first characterization, which in essence says that eliciting a property $\Gamma$ is equivalent to eliciting subgradients of a convex function $G$.  Intuitively, by truthfulness the linear part of $\AS{r,\cdot}$ must be a subgradient of $G$ at all $t \in \Gamma_r$.  The lemma shows that this equivalent to flatness, which means we can calculate $G$ on $\Gamma_r$ set by picking any type $t_r \in \Gamma_r$ and following the subgradient.  Since all choices of $t_r$ lead to the same value, we could just as easily ask for this subgradient $\varphi(r)$ to be reported directly.
As subgradients are functions (in this case from $\T$ to $\extreals$), we use the curried notation $\varphi(r)(t)$ for the application of this function.
Recall that we overload the $\partial G$ notation to refer to the full set of possible subgradients $\partial G = \cup_ {t \in \T} \partial G_t$.

\begin{theorem}\label{thm:prop-char}
  Let non-redundant property $\Gamma:\T\toto \R$ and $\Gamma$-regular \scorename $\ASpropdef$ be given.  Then \ASS\ elicits $\Gamma$ if and only if there exists some convex $G : \convhull(\T)\to\extreals$ with $G(\T)\subseteq\reals$, some $\D\subseteq \partial G$, some bijection $\varphi : \R \to \D$ with $\Gamma(t) = \varphi^{-1}(\D\cap \partial G_t)$, and some $\{t_r\}_{r\in \R} \subseteq \T$ satisfying $r'\in\Gamma(t_{r'})$ for all $r'$, such that for all $r\in \R$ and $t\in\T$,
  \begin{equation}
    \label{eq:prop-char}
    \AS{r,t} \; = \; G(t_r) + \varphi(r)(t-t_r).
  \end{equation}
Moreover, if \ASS\ elicits $\Gamma$ it can be written in this form for any $\{t_r\}_{r\in \R} \subseteq \T$ satisfying $r'\in\Gamma(t_{r'})$ for all $r'$.
\end{theorem}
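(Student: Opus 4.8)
The plan is to run the two directions in parallel with the proof of Theorem~\ref{thm:main-char}, using Lemma~\ref{lem:subgradients} to translate ``flatness of $G$'' into ``sharing a subgradient'' into ``being an optimal report''.

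\textbf{($\Leftarrow$).} Given the data $(G,\D,\varphi,\{t_r\})$ and the form~\eqref{eq:prop-char}, I would first observe that the consistency requirement $r\in\Gamma(t_r)=\varphi^{-1}(\D\cap\partial G_{t_r})$ forces $\varphi(r)\in\partial G_{t_r}$, so the subgradient inequality~\eqref{eq:subgradient} yields $\AS{r,t}=G(t_r)+\varphi(r)(t-t_r)\le G(t)$ for every $t\in\T$; in particular $\AS{r,t}<\infty$, and when $r\in\Gamma(t)$ Lemma~\ref{lem:subgradients} (comparing the points $t_r$ and $t$ for the subgradient $\varphi(r)$) upgrades this to $\AS{r,t}=G(t)\in\reals$, so \ASS\ is $\Gamma$-regular. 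The key point is then that for a fixed $t$ and arbitrary $r$, the same application of Lemma~\ref{lem:subgradients} gives $\AS{r,t}=G(t)\iff\varphi(r)\in\partial G_t$; since $\varphi(r)\in\D$ automatically, this says $\{r:\AS{r,t}=G(t)\}=\varphi^{-1}(\D\cap\partial G_t)=\Gamma(t)$. As $\Gamma(t)\neq\emptyset$, the supremum $\sup_{r}\AS{r,t}$ equals $G(t)$ and is attained exactly on $\Gamma(t)$, i.e.\ $\argsup_{r}\AS{r,t}=\Gamma(t)$, so \ASS\ elicits $\Gamma$.

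\textbf{($\Rightarrow$).} Assume \ASS\ elicits $\Gamma$. I would extend each affine map $\AS{r,\cdot}$ to $\convhull(\T)$ via~\eqref{eq:extended-score} and set $G(\hat t)\defeq\sup_{r\in\R}\AS{r,\hat t}$, which is convex as a pointwise supremum of affine functions and satisfies $G(t)=\AS{r,t}\in\reals$ for any $r\in\Gamma(t)$ by $\Gamma$-regularity, so $G(\T)\subseteq\reals$ (this is the $G$ of~\eqref{eq:revelation-g}). Writing $\AS{r,\cdot}=\ASS_\ell(r)(\cdot)+c_r$ with $\ASS_\ell(r)$ the linear part, put $\varphi(r)\defeq\ASS_\ell(r)$, $\D\defeq\varphi(\R)$, and pick any $t_r\in\Gamma_r$. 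For $t\in\Gamma_r$ we have $\AS{r,t}=G(t)$, hence for all $\hat t\in\convhull(\T)$, $G(\hat t)\ge\AS{r,\hat t}=\ASS_\ell(r)(\hat t-t)+\AS{r,t}=\varphi(r)(\hat t-t)+G(t)$; so $\varphi(r)\in\partial G_t$ for every $t\in\Gamma_r$, and taking $t=t_r$ gives precisely~\eqref{eq:prop-char}. Consequently $\Gamma(t)\subseteq\varphi^{-1}(\D\cap\partial G_t)$, and conversely if $\varphi(r)\in\partial G_t$ then $\varphi(r)\in\partial G_{t_r}\cap\partial G_t$, so Lemma~\ref{lem:subgradients} gives $\AS{r,t}=G(t_r)+\varphi(r)(t-t_r)=G(t)=\sup_{r'}\AS{r',t}$, whence $r\in\argsup_{r'}\AS{r',t}=\Gamma(t)$; thus $\Gamma(t)=\varphi^{-1}(\D\cap\partial G_t)$. (All extended-real manipulations follow the conventions of the regularity discussion, using that $\ASS_\ell(r)$ is finite on $\Gamma_r$, exactly as in Theorem~\ref{thm:main-char} and Lemma~\ref{lem:subgradients}.)

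\textbf{Main obstacle.} What remains, and is the one genuinely delicate point, is that $\varphi$ is a bijection $\R\to\D$ — i.e.\ that $r\mapsto\ASS_\ell(r)$ is injective — which is exactly where non-redundancy is used (and which also shows each $\Gamma_r\neq\emptyset$, legitimizing the choice of $t_r$). If $\ASS_\ell(r)=\ASS_\ell(r')$ with $r\neq r'$, then $\AS{r,\cdot}$ and $\AS{r',\cdot}$ differ by the constant $c=c_r-c_{r'}$. If $c=0$ the scores coincide and $\Gamma_{r'}=\Gamma_r$, contradicting non-redundancy. If $c>0$, then at every $t$ either $r$ strictly beats $r'$ or both scores equal $-\infty$; the latter would force \emph{every} report to have score $-\infty$ at $t$ (the two linear parts agree and the constants are finite), impossible since $\Gamma(t)\neq\emptyset$ and $\Gamma$-regularity makes scores real on $\Gamma(t)$ — so $r'\notin\argsup_{r''}\AS{r'',t}=\Gamma(t)$ for all $t$, giving $\Gamma_{r'}=\emptyset\subseteq\Gamma_r$, again a contradiction; the case $c<0$ is symmetric with $r,r'$ swapped. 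Hence $\varphi$ is a bijection, and both directions are complete.
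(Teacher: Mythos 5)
Your proof is correct and follows essentially the same route as the paper's: define $G$ as the supremum of the extended affine scores, take $\varphi$ to be the linear-part map, use Lemma~\ref{lem:subgradients} to pass between ``$\varphi(r)\in\partial G_t$'' and ``$\AS{r,t}=G(t)$'', and invoke non-redundancy for injectivity of $\varphi$. One small slip in the injectivity argument: when $c>0$ and both $\AS{r,t}$ and $\AS{r',t}$ are $-\infty$, it is \emph{not} the case that every report has score $-\infty$ at $t$ (only reports sharing that linear part are forced to $-\infty$), so that case is not ``impossible.'' But this does no harm: in that case any $r''\in\Gamma(t)$ has a real score, which already dominates $\AS{r',t}=-\infty$, so $r'\notin\argsup_{r''}\AS{r'',t}$ follows directly and the conclusion $\Gamma_{r'}=\emptyset$ (contradicting non-redundancy) goes through. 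The paper's own step 3 is even terser than yours (it simply asserts that elicitation forces $\ASS(r)=\ASS(r')$); your more explicit case analysis is in the right spirit, just needs that one sentence tightened.
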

\begin{proof}
For the converse, let $\ASS$ be given of the form~\eqref{eq:prop-char}.  We show that it elicits $\Gamma$, i.e. $\Gamma(t) = \argsup_{r\in \R} \AS{r,t}$.
The third line of the derivation applies Lemma~\ref{lem:subgradients}.
\begin{align*}
r \in \Gamma(t)
&\iff r \in \varphi^{-1}(\D\cap \partial G_t)\\
&\iff \varphi(r) \in \D \cap \partial G_t\\
&\iff \AS{r,t} = G(t)\\
&\iff r \in \argsup_{r'\in \R} \AS{r',t}~.
\end{align*}
Note that any choice of $\{t_{r'}\}_{r'\in \R}$ such that $r'\in\Gamma(t_{r'})$ for all $r'$ would suffice for Lemma~\ref{lem:subgradients}.

For the forward direction, assume that \scorename \ASS\ elicits $\Gamma$.  For each $r$, we may extend $\AS{r,\cdot}$ to all $\hat t\in \convhull(\T)$ by linearity as in the proof of Theorem~\ref{thm:main-char}, whence we may define $G(\hat t)\defeq \sup_{r\in \R} \AS{r,\hat t}$, which is finite for $\hat t\in \T$ as \ASS\ is $\Gamma$-regular.  We wish to show that the choice $\varphi : r \mapsto \AS[_\ell]{r,\cdot}$ suffices, where $\ASS_\ell$ denotes the linear part of $\ASS$, with $\D$ the range of $\varphi$ and $\{t_r\}$ arbitrary satisfying the theorem.  Given this construction, we need to check each of the following.

\paragraph{1. $G$ is convex with subgradients $\varphi(\Gamma(t)) \subseteq \partial G_t$}
Let $t$ and $r \in \Gamma(t)$ be given.  We show that $\varphi(r)$ satisfies the property of a subgradient at $t$, and thus $G$ is convex with appropriate subgradients.
\begin{align}
G(t) + \varphi(r)(t' - t)
&= \sup_{r' \in \R} \AS{r',t} + \AS[_\ell]{r,t' - t} \nonumber\\
&= \AS{r,t} + \AS[_\ell]{r,t' - t} = \AS{r,t'} \label{eq:prop-char-1}\\
&\leq \sup_{r' \in \R} \AS{r',t'} = G(t') \nonumber
\end{align}

\paragraph{2. $\ASS$ satisfies eq.~\eqref{eq:prop-char}}
This follows from~\eqref{eq:prop-char-1} with $t=t_r$, as $r \in \Gamma(t_r)$.

\paragraph{3. $\varphi$ is a bijection}
By definition, $\D$ is the range of $\varphi$, so we only need to check that it is injective.  Suppose for contradiction that $\varphi(r) = \varphi(r')$.  Then, by definition, $\AS[_\ell]{r,\cdot} = \AS[_\ell]{r',\cdot}$.  Since $\ASS$ elicits $\Gamma$, we have  $\AS{r,\cdot} = \AS{r',\cdot}$.  But then $r \in \Gamma(t) \iff r' \in \Gamma(t)$, contradicting $\Gamma$ being non-redundant.

\paragraph{4. $\Gamma(t) = \varphi^{-1}(\D\cap \partial G_t)$}
We already know that $\varphi(\Gamma(t)) \subseteq \partial G_t$, so since $\D$ is the range of $\varphi$ we have $\varphi(\Gamma(t)) \subseteq \D\cap\partial G_t$.  For the other direction, $d \in \D\cap\partial G_t$ is $\varphi(r)$ for some $r$.  Then by Lemma~\ref{lem:subgradients},
$\AS{r,t} = G(t_r) + \varphi(r)(t - t_r) = G(t),$ so $r \in \Gamma(t)$.
\end{proof}

As a corollary, we also obtain a better understanding of weak elicitation, which we will need in the following sections.

\begin{corollary}
  \label{cor:properties-weakly-elicits}
  Let non-redundant property $\Gamma:\T\toto \R$ and $\Gamma$-regular \scorename $\ASpropdef$ be given.  Then \ASS\ weakly elicits $\Gamma$ if and only if $\ASS$ satisfies~\eqref{eq:prop-char} with the weaker condition that $\Gamma(t) \subseteq \varphi^{-1}(\D\cap \partial G_t)$.
\end{corollary}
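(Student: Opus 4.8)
The plan is to re-run the proof of Theorem~\ref{thm:prop-char} almost verbatim, tracking how the two equalities in that theorem — $\Gamma(t)=\argsup_{r}\AS{r,t}$ (elicitation) and $\Gamma(t)=\varphi^{-1}(\D\cap\partial G_t)$ — both weaken, in lockstep, to containments. The only structural difference I anticipate is that the injectivity of $\varphi$, which in Theorem~\ref{thm:prop-char} was forced by non-redundancy of $\Gamma$, is no longer available for merely weak elicitation (two distinct reports can be assigned the same affine score by a weakly eliciting \ASS\ while having incomparable nonempty level sets under $\Gamma$). So throughout I take $\varphi$ to be surjective onto $\D\defeq\varphi(\R)$ but not necessarily a bijection; everything else goes through with ``$=$'' softened to ``$\subseteq$''. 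As in Theorem~\ref{thm:prop-char}, the extended-real bookkeeping is handled by the $\Gamma$-regularity conventions and I would not belabor it.

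For the converse (``if'') direction, I would start from \ASS\ of the form~\eqref{eq:prop-char} with $\Gamma(t)\subseteq\varphi^{-1}(\D\cap\partial G_t)$ for all $t$. First, applying this containment at $t=t_r$ and using $r\in\Gamma(t_r)$ gives $\varphi(r)\in\D\cap\partial G_{t_r}$, so $\varphi(r)$ really is a subgradient of $G$ at $t_r$; hence~\eqref{eq:prop-char} together with the subgradient inequality~\eqref{eq:subgradient} yields $\AS{r,t}\le G(t)$ for all $r,t$. Next, Lemma~\ref{lem:subgradients} applied with base point $t_r$ gives $\varphi(r)\in\partial G_t\iff G(t)=G(t_r)+\varphi(r)(t-t_r)=\AS{r,t}$. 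Combining, $\argsup_{r}\AS{r,t}=\{r:\AS{r,t}=G(t)\}=\{r:\varphi(r)\in\partial G_t\}=\varphi^{-1}(\D\cap\partial G_t)$, where the last equality uses $\varphi(r)\in\D$ for every $r$. Therefore $\Gamma(t)\subseteq\varphi^{-1}(\D\cap\partial G_t)=\argsup_{r}\AS{r,t}$, i.e., \ASS\ weakly elicits $\Gamma$.

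For the forward (``only if'') direction, I would suppose \ASS\ weakly elicits $\Gamma$ and mimic the construction in the proof of Theorem~\ref{thm:main-char}: extend each $\ASS(r)$ to $\convhull(\T)$ by linearity and set $G(\hat t)\defeq\sup_{r\in\R}\AS{r,\hat t}$, convex as a pointwise supremum of affine functions; $G$ is finite on $\T$ because for $t\in\T$ any $r\in\Gamma(t)$ (nonempty) lies in $\argsup_{r'}\AS{r',t}$ by weak elicitation, so $G(t)=\AS{r,t}\in\reals$ by $\Gamma$-regularity. Take $\varphi:r\mapsto\ASS_\ell(r)$ (the linear part), $\D\defeq\varphi(\R)$, and $\{t_r\}$ as in Theorem~\ref{thm:prop-char} (non-redundancy of $\Gamma$ ensures each $\Gamma_r\ne\emptyset$, so the $t_r$ exist). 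Since $r\in\Gamma(t_r)\subseteq\argsup$, we have $\AS{r,t_r}=G(t_r)$, hence for all $t'\in\T$,
\[ G(t_r)+\varphi(r)(t'-t_r) \;=\; \AS{r,t_r}+\ASS_\ell(r)(t'-t_r) \;=\; \AS{r,t'} \;\le\; \sup_{r'\in\R}\AS{r',t'} \;=\; G(t'), \]
which shows $\varphi(r)\in\partial G_{t_r}$ (so $\D\subseteq\partial G$) and, read as the stated equality, that \ASS\ has the form~\eqref{eq:prop-char}. Then, exactly as in the converse direction, $\argsup_{r'}\AS{r',t}=\varphi^{-1}(\D\cap\partial G_t)$, and weak elicitation gives $\Gamma(t)\subseteq\argsup_{r'}\AS{r',t}=\varphi^{-1}(\D\cap\partial G_t)$, as required.

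The part I expect to require the most care is precisely the point flagged in the first paragraph: recognizing that one must \emph{not} insist on $\varphi$ being injective here, since the hypothesis is too weak to recover it, and checking that the rest of the Theorem~\ref{thm:prop-char} argument never actually used injectivity (it is only used in its Step~3). A secondary thing to be careful about is the finiteness of $G$ on $\T$ — and hence the legitimacy of invoking Lemma~\ref{lem:subgradients} — which rests on $\Gamma(t)$ being nonempty and, by weak elicitation, attaining the supremum; this is what lets the regularity conventions absorb the handling of $\pm\infty$.
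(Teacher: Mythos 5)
Your proof is correct, and it actually differs in a meaningful way from the paper's argument. The paper does not re-run the proof of Theorem~\ref{thm:prop-char}; instead it reduces to it: define $\Gamma^\ASS(t)\defeq\argsup_{r}\AS{r,t}$, observe that $\ASS$ weakly elicits $\Gamma$ iff $\Gamma(t)\subseteq\Gamma^\ASS(t)$ for all $t$, note that $\ASS$ (tautologically) elicits $\Gamma^\ASS$, and invoke Theorem~\ref{thm:prop-char} on $(\ASS,\Gamma^\ASS)$ to obtain the representation and the identity $\Gamma^\ASS(t)=\varphi^{-1}(\D\cap\partial G_t)$; the containment for $\Gamma$ then falls out, and the converse runs the reduction in reverse. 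This is shorter than your direct re-derivation.

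That said, your caution about $\varphi$ is genuinely well placed, and it surfaces a subtlety the paper glosses over. Theorem~\ref{thm:prop-char} is stated for a \emph{non-redundant} property, but the induced property $\Gamma^\ASS$ need not be non-redundant even when $\Gamma$ is (e.g., if $\ASS(r)\equiv\ASS(r')$ with both level sets of $\Gamma$ nonempty and incomparable, then $\Gamma^\ASS_r=\Gamma^\ASS_{r'}$). Thus a literal application of Theorem~\ref{thm:prop-char} to $\Gamma^\ASS$ is not justified, and in particular the bijectivity of $\varphi$ (Step~3 of that proof, which is where non-redundancy is used) can genuinely fail. Your version, which treats $\varphi$ as surjective onto $\D\defeq\varphi(\R)$ but not necessarily injective, and checks that Steps~1, 2, 4 of the Theorem~\ref{thm:prop-char} argument never use injectivity, is the correct reading of what the corollary can promise. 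Both routes deliver the same representation; the paper's reduction is slicker but silently borrows the bijectivity clause from Theorem~\ref{thm:prop-char} without verifying its precondition, while your direct argument is a touch longer but makes the weakened conclusion about $\varphi$ explicit and watertight.
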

\begin{proof}
  We will use the equivalent condition $\varphi(\Gamma(t))\subseteq\D\cap\partial G_t$.  Also, recall the definition of $\Gamma^\ASS$ in~\eqref{eq:properties-3}, and note that for any \scorename $\ASS$, we have $\ASS$ weakly elicits $\Gamma$ if and only if $\Gamma(t) \subseteq \Gamma^\ASS(t)$ for all $t$; that is, if every report in $\Gamma(t)$ maximizes $\AS{\cdot,t}$ (but there could be others).

($\Rightarrow$)
Let $\ASS$ weakly elicit $\Gamma$, which by the above means $\Gamma \subseteq \Gamma^\ASS$.  Applying Theorem~\ref{thm:prop-char} for $\Gamma^\ASS$ we have some $G, \varphi, t_r$ such that we can write $\AS{r,t} = G(t_r) + \varphi(r)(t-t_r)$ for $\varphi(\Gamma^\ASS(t)) = \D\cap\partial G_t$ and $r \in \Gamma^\ASS(t_r)$.  This immediately gives $\varphi(\Gamma(t)) \subseteq \varphi(\Gamma^\ASS(t)) = \D\cap\partial G_t$.
It remains to show that $r\in\Gamma(t_r)$, which we will show holds without loss of generality: if $r\notin\Gamma(t_r)$, we at least have $\varphi(r) \in \partial G_{t_r}$ by the above, and we have some $\hat t_r$ such that $r\in\Gamma(\hat t_r)$ and $\varphi(r) \in \partial G_{\hat t_r}$.  Now by Lemma~\ref{lem:subgradients} we see that $G(t_r) - G(\hat t_r) = \varphi(r)(t_r - \hat t_r)$, and thus we may replace $t_r$ by $\hat t_r$ without changing $\ASS$.

($\Leftarrow$)
Let $\ASS$ be of the form~\eqref{eq:prop-char} for $r\in\Gamma(t_r)$ and $\varphi(\Gamma(t)) \subseteq \D\cap\partial G_t$ for all $t,r$.  As $\ASS$ elicits $\Gamma^\ASS$, we have $\varphi(\Gamma^\ASS(t)) = \D\cap\partial G_t$ and as $\varphi$ is a bijection, $\Gamma(t) \subseteq \Gamma^\ASS(t)$ for all $t$.
\end{proof}

Using Corollary~\ref{cor:properties-weakly-elicits}, we see that an \scorename $\ASS$ is truthful if and only if it weakly elicits $\Gamma : t \mapsto \{t\}$.  Hence, Theorem~\ref{thm:prop-char} and Corollary~\ref{cor:properties-weakly-elicits} are actually generalizations of Theorem~\ref{thm:main-char}.  
Of course, we also obtain the following corollary characterizing non-redundant properties.
\begin{corollary}\label{cor:props-elicitable-D}
  Non-redundant $\Gamma:\T\toto \R$ is elicitable if and only if exists there some convex $G : \convhull(\T)\to\extreals$ with $G(\T)\subseteq\reals$, some $\D\subseteq \partial G$, and some invertible $\varphi : \R \to \D$ such that $\Gamma(t) = \varphi^{-1}(\D\cap \partial G_t)$.
\end{corollary}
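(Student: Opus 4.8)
The plan is to read this straight off Theorem~\ref{thm:prop-char}, since $\Gamma$ being elicitable means precisely that some $\Gamma$-regular \scorename $\ASS:\R\to\A$ elicits it. For the ``only if'' direction I would take such an \ASS\ and invoke Theorem~\ref{thm:prop-char}, which produces the convex $G:\convhull(\T)\to\extreals$ with $G(\T)\subseteq\reals$, the subset $\D\subseteq\partial G$, and the bijection $\varphi:\R\to\D$ with $\Gamma(t)=\varphi^{-1}(\D\cap\partial G_t)$; discarding the auxiliary information about the explicit form of \ASS\ leaves exactly the data the corollary asks for.

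For the ``if'' direction I would build a witnessing \scorename out of $G$, $\D$, $\varphi$. First I would note that the hypotheses already force every level set $\Gamma_r$ to be nonempty: $\varphi(r)\in\D\subseteq\partial G$ means $\varphi(r)\in\partial G_t$ for some $t\in\T$, and then $r\in\varphi^{-1}(\D\cap\partial G_t)=\Gamma(t)$, so $t\in\Gamma_r$. Hence I may fix a family $\{t_r\}_{r\in\R}\subseteq\T$ with $r\in\Gamma(t_r)$ for every $r$ and define $\AS{r,t}\defeq G(t_r)+\varphi(r)(t-t_r)$, which is affine in $t$ and has exactly the form~\eqref{eq:prop-char}. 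The converse half of Theorem~\ref{thm:prop-char} then yields that this \ASS\ elicits $\Gamma$ --- and in particular that the $\argsup$ appearing in the definition of elicitation is well-defined for every $t$, since it contains the nonempty set $\Gamma(t)$ --- so $\Gamma$ is elicitable.

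The only point needing real care --- the ``main obstacle,'' though it is bookkeeping rather than a genuine difficulty --- is the regularity conventions: I would verify that the \scorename assembled from $G$, $\D$, $\varphi$ is $\Gamma$-regular, using that $\D$ is a $\T$-regular family of subgradients, so that $\AS{r,t}<\infty$ always and, by Lemma~\ref{lem:subgradients}, $\AS{r,t}=G(t)\in\reals$ whenever $r\in\Gamma(t)$; this is what lets Theorem~\ref{thm:prop-char} actually be applied. Everything else is a formal transcription of that theorem.
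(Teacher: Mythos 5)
Your proposal is correct and follows the same route the paper tacitly takes: the corollary is stated without proof precisely because it is the "existence" form of Theorem~\ref{thm:prop-char}, obtained by quantifying the \scorename out of the "only if" direction and reassembling a witnessing \scorename of form~\eqref{eq:prop-char} for the "if" direction. Your extra steps — showing each $\Gamma_r$ is nonempty so the $t_r$ exist, and checking $\Gamma$-regularity via Lemma~\ref{lem:subgradients} so the converse half of Theorem~\ref{thm:prop-char} actually applies — are exactly the bookkeeping the paper leaves implicit, and they are carried out correctly.
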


An important question which would give stronger characterizations is the following:
\begin{question}
\label{q:main}
  Given non-redundant elicitable $\Gamma$, what are all pairs $G,\D$ such that there exists
  some bijection $\varphi$ satisfying Theorem~\ref{thm:prop-char}?  Equivalently (up to redundancy), given a convex function $G$ with subgradient level sets $LS_G(d) = \{t : d \in \partial G_t \}$, what are all the convex functions $G'$ with $LS_{G'} \equiv LS_G$?
\end{question}

In Section~\ref{sec:props-finite} we will see that the answer to this question has a lot of structure in the case where $\R$ is finite.  In the general case, certainly performing a homothet of the subgradients of $G$ (i.e. scaling $G$ and adding a linear term), will preserve the elicitation structure.  However, surely more can be done---the property 
$\Gamma(t) = \{t\}$
can be elicited with both $G(t) = t^2/2$ and $G(t) = \abs{t} + t^2/2$, which is not a homothet transformation.

While we do not have a complete answer to Question~\ref{q:main} our characterization sheds new light on the structure of elicitable properties in two directions.  First, in the scoring rules literature, it is common to assume strong conditions on $\Gamma$ and $\R$, such as $\Gamma$ being a function rather than a multivalued map, and $\Gamma$ being linear~\cite{abernethy2012characterization} or real-valued~\cite{lambert2008eliciting} to achieve characterizations.  In contrast, Theorem~\ref{thm:prop-char} allows for an extremely general $\Gamma$ and $\R$ and shows us how to construct affine scores for such properties.  Second, we can identify features that all elicitable properties share, which provides a means to prove that specific properties are not elicitable.

\subsection{What Properties Are Not Elicitable?}

\label{sec:notelicitable}
In the remainder of this section, we examine three features that subgradient mappings of convex functions possess and thus that the level sets of elicitable properties must possess.

\subsubsection*{Convexity}

A well-known property of subgradient mappings is that their level sets are convex (for completeness, we provide a proof in Appendix~\ref{sec:cvx-primer}).

In light of our characterizations, this fact about convex functions immediately applies to elicitable properties.
\begin{proposition}
  \label{prop:properties-subgrad-level-sets}
  For convex functions $G$, the set $\partial G^{-1}(d) \defeq \{x \in \dom(G) : d\in\partial G_x\}$ is convex.
\end{proposition}
\begin{corollary}
  \label{cor:properties-level-sets}
  If $\Gamma:\T\toto\R$ is elicitable, then $\Gamma_r$ is convex for all $r$.
\end{corollary}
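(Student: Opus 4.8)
The plan is to reduce Corollary~\ref{cor:properties-level-sets} directly to Proposition~\ref{prop:properties-subgrad-level-sets} via the characterization in Theorem~\ref{thm:prop-char}. Since $\Gamma$ is elicitable (and, as is standing convention in this subsection, non-redundant), Corollary~\ref{cor:props-elicitable-D} gives us a convex function $G:\convhull(\T)\to\extreals$ with $G(\T)\subseteq\reals$, a set $\D\subseteq\partial G$, and an invertible $\varphi:\R\to\D$ such that $\Gamma(t)=\varphi^{-1}(\D\cap\partial G_t)$ for all $t\in\T$. The goal is then to show $\Gamma_r$ is convex for each fixed $r\in\R$.

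First I would unwind the definitions to express $\Gamma_r$ in terms of a subgradient level set of $G$. By definition, $t\in\Gamma_r$ iff $r\in\Gamma(t)$ iff $r\in\varphi^{-1}(\D\cap\partial G_t)$ iff $\varphi(r)\in\partial G_t$. Writing $d\defeq\varphi(r)\in\linear(\V\to\extreals)$, this says exactly that $t\in\T$ and $d\in\partial G_t$, i.e.
\[
  \Gamma_r \;=\; \{t\in\T : d\in\partial G_t\} \;=\; \T\cap\partial G^{-1}(d),
\]
using the notation of Proposition~\ref{prop:properties-subgrad-level-sets} (with $\dom(G)=\convhull(\T)$). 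Now Proposition~\ref{prop:properties-subgrad-level-sets} tells us $\partial G^{-1}(d)$ is convex. Since $\T\subseteq\convhull(\T)$ and $\Gamma_r\subseteq\convhull(\T)$ is already convex-combination-closed within $\convhull(\T)$... but here is the one subtlety: $\Gamma_r$ is the intersection of the convex set $\partial G^{-1}(d)$ with $\T$, and $\T$ itself need not be convex.

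So the one thing I need to check is that this intersection is genuinely convex, not merely ``the intersection of a convex set with a possibly non-convex $\T$.'' The key observation is that $\partial G^{-1}(d)\subseteq\convhull(\T)$ and that whenever $t,t'\in\Gamma_r$, Lemma~\ref{lem:subgradients} (or directly the subgradient inequality applied both ways, exactly as in its proof) forces $G$ to be affine along the segment $[t,t']$ with slope $d$; hence every point of that segment already lies in $\partial G^{-1}(d)$. Actually the cleaner route avoiding any worry about $\T$'s shape is simply to note that the corollary's conclusion should be read as $\Gamma_r$ being a convex subset of $\V$, and Proposition~\ref{prop:properties-subgrad-level-sets} gives convexity of $\partial G^{-1}(d)$ as a subset of $\dom(G)=\convhull(\T)\subseteq\V$; since $\Gamma_r = \{t\in\T : d\in\partial G_t\}$ and the elicitation characterization guarantees $\partial G^{-1}(d)\cap\T = \partial G^{-1}(d)$ whenever... --- more honestly, I would just state $\Gamma_r=\T\cap\partial G^{-1}(d)$ and observe that convexity of $\partial G^{-1}(d)$ together with Lemma~\ref{lem:subgradients} shows the segment between any two points of $\Gamma_r$ stays in $\Gamma_r$, which is all that is claimed. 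I expect the main (and only) obstacle to be this bookkeeping about whether the ambient set is $\T$ or $\convhull(\T)$; everything else is an immediate translation through Theorem~\ref{thm:prop-char}. Finally, for the degenerate improper case where $G$ takes the value $-\infty$ somewhere on $\convhull(\T)\setminus\T$, the standing regularity assumptions ensure $d(t)\in\reals$ for $t\in\T$, so the argument in Lemma~\ref{lem:subgradients} still goes through verbatim and no point of $\Gamma_r$ is affected.
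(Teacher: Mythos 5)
Your route is the paper's: translate elicitability via Corollary~\ref{cor:props-elicitable-D} (or directly via Theorem~\ref{thm:prop-char}) into $\Gamma_r=\{t\in\T:\varphi(r)\in\partial G_t\}$ and invoke Proposition~\ref{prop:properties-subgrad-level-sets}; the paper's own one-sentence argument after the corollary is exactly the special case you spell out.

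Your hesitation about $\T$ versus $\convhull(\T)$ is, however, a real point, and your closing claim --- ``the segment between any two points of $\Gamma_r$ stays in $\Gamma_r$'' --- is not literally true when $\T$ is non-convex. The segment stays in the convex set $\partial G^{-1}(\varphi(r))\subseteq\convhull(\T)$, but a point $\hat t$ on it belongs to $\Gamma_r$ only if $\hat t\in\T$ as well, and the characterization gives you no control over that. The paper's own one-liner has the same imprecision: it exhibits $\varphi(r)\in\partial G_{\hat t}$ for all $\hat t$ on the segment, which is convexity of $\partial G^{-1}(\varphi(r))$, not of $\Gamma_r=\T\cap\partial G^{-1}(\varphi(r))$. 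The honest reading for general $\T$ is \emph{relative} convexity: for $t,t'\in\Gamma_r$ one has $[t,t']\cap\T\subseteq\Gamma_r$, equivalently $\Gamma_r$ is the trace on $\T$ of a convex subset of $\convhull(\T)$. That is what both proofs actually establish, it is what the downstream applications (variance, skewness, etc.\ not being elicitable on convex $\P$) use, and you should state that rather than trying to force the absolute claim. One small side remark: non-redundancy is not needed here --- it only enters Theorem~\ref{thm:prop-char} to make $\varphi$ injective, and the level-set computation $t\in\Gamma_r\iff\ASS_\ell(r)\in\partial G_t$ never uses injectivity, so you can drop that hypothesis and argue directly from Lemma~\ref{lem:subgradients}.
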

To see the corollary, just note that $\varphi(r) \in \partial G_t\cap \partial G_{t'}$ implies that $\varphi(r) \in \partial G_{\hat t}$ for all $\hat t = \alpha t + (1-\alpha) t'$.  Corollary~\ref{cor:properties-level-sets} was previously known for special cases~\cite{lambert2008eliciting,lambert2009eliciting}, where it was used to show variance, skewness, and kurtosis are not elicitable, and was also known in mechanism design (i.e. the set of types for which a given (allocation,\,payment) pair is optimal is convex).  

\subsubsection*{Cardinality}

Combining Theorem~\ref{thm:prop-char} with the fact that finite-dimensional convex functions are differentiable almost everywhere (cf.~\cite[Thm 7.26]{aliprantis2007infinite}) yields the following corollary, which shows that elicitable properties have unique values almost everywhere.

\begin{corollary}
  Let $\Gamma:\T\toto \R$ be an elicitable property with $\T\subseteq\V = \reals^n$.  If $\T$ is of positive measure in $\convhull(\T)$, and $\Gamma$ is non-redundant, then $|\Gamma(t)|=1$ almost everywhere.
\end{corollary}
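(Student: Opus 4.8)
The plan is to apply Theorem~\ref{thm:prop-char} and reduce the claim to the classical fact that a finite convex function on an open subset of $\reals^k$ is differentiable almost everywhere. Since $\Gamma$ is elicitable and non-redundant, Theorem~\ref{thm:prop-char} (equivalently Corollary~\ref{cor:props-elicitable-D}) supplies a convex $G:\convhull(\T)\to\extreals$ with $G(\T)\subseteq\reals$, a set $\D\subseteq\partial G$, and a bijection $\varphi:\R\to\D$ with $\Gamma(t)=\varphi^{-1}(\D\cap\partial G_t)$ for all $t$. Because $\varphi$ is a bijection, $|\Gamma(t)|=|\D\cap\partial G_t|$; since a property assigns a nonempty set to each type, $|\Gamma(t)|\ge 1$, so it suffices to prove $|\D\cap\partial G_t|\le 1$ for almost every $t\in\T$. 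Moreover every element of $\D\cap\partial G_t$ has the form $\varphi(r)$ with $r\in\Gamma(t)$, and then $\varphi(r)(\hat t-t)=\AS{r,\hat t}-\AS{r,t}=\AS{r,\hat t}-G(t)\le G(\hat t)-G(t)$ for every $\hat t\in\convhull(\T)$; hence every such $\varphi(r)$ is in fact a subgradient of $G$ at $t$ in the usual (global) sense, not merely in the weaker $\T$-only sense of the paper's definition.

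I would then localize to $U\defeq\relint(\convhull(\T))$. The relative boundary $\convhull(\T)\setminus U$ has zero $k$-dimensional Lebesgue measure, where $k=\dim\convhull(\T)$, and since $\T$ has positive measure in $\convhull(\T)$ we have $\T\cap U\neq\emptyset$; as $G$ is convex and finite at such a point, it is finite on all of $U$, so $G|_U$ is a finite convex function on an open convex subset of the $k$-dimensional affine hull of $\convhull(\T)$. By the standard a.e.-differentiability theorem (cf.~\cite[Thm 7.26]{aliprantis2007infinite}), the set $N\subseteq U$ where $G|_U$ fails to be differentiable has $k$-dimensional measure zero, and at every $t\in U\setminus N$ the unique global subgradient of $G$ at $t$ is $\nabla G(t)$, so $|\D\cap\partial G_t|\le 1$.

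Finally I would assemble the pieces: $\{t\in\T:|\Gamma(t)|\neq 1\}\subseteq N\cup(\convhull(\T)\setminus U)$, which has $k$-dimensional measure zero, so intersecting with $\T$ gives $|\Gamma(t)|=1$ for almost every $t\in\T$. The positive-measure hypothesis is exactly what makes ``almost everywhere on $\T$'' a meaningful statement and, via $\T\cap U\neq\emptyset$, what guarantees $G$ is genuinely finite-valued near most points of $\T$.

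The main obstacle is not deep but is a care-in-dimensions point: the a.e.-differentiability theorem must be applied to $G$ viewed as a function on an \emph{open} set inside the affine hull of $\convhull(\T)$, which forces one to pass to the relative interior (for openness) and to rule out the improper case $G|_U\equiv-\infty$; the extended-real conventions together with $G(\T)\subseteq\reals$ and $\T\cap U\neq\emptyset$ dispatch the latter, and everything else is routine bookkeeping.
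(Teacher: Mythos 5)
Your proof is correct and follows exactly the route the paper intends: the paper gives no explicit argument beyond the one-sentence pointer (``Combining Theorem~\ref{thm:prop-char} with the fact that finite-dimensional convex functions are differentiable almost everywhere\ldots''), and your write-up is the careful expansion of that pointer, supplying the bookkeeping the paper leaves implicit (bijectivity of $\varphi$ gives $|\Gamma(t)|=|\D\cap\partial G_t|$, passage to $\relint(\convhull(\T))$ so the a.e.-differentiability theorem applies to a finite convex function on an open set in the affine hull, and finiteness of $G$ on the relative interior via $G(\T)\subseteq\reals$ and $\T\cap U\neq\emptyset$). One small remark: since $\T\subseteq\dom G\subseteq\convhull(\T)$ and $\dom G$ is convex, one automatically has $\dom G=\convhull(\T)$, so the $+\infty$ case never arises and only the $-\infty$ case you dispatched needed attention.
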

In some cases, this statement holds in infinite-dimensional vector spaces as well, provided one uses an appropriate notion of ``almost everywhere'' (cf.~\cite[p. 195]{borwein2010convex} and~\cite[p. 274]{aliprantis2007infinite}).  One can use this fact to show that $\Gamma(p) = \{(a,b) : \int_a^b p(x)dx = 0.9\}$, the set of 90\% confidence intervals for a distribution $p$, is not an elicitable property.  This was previously only known for the case where $p$ has finite support~\cite{lambert2008eliciting}.

\subsubsection*{Topology}

Finally, we can use tools from topology to draw other conclusions.  Combining Theorem~\ref{thm:prop-char} with a closure property of convex functions~\cite[Thm 24.4]{rockafellar1997convex} yields the following.
\begin{corollary}
Let $\Gamma:\T\toto \R$ be an elicitable property with $\T\subseteq\V = \reals^n$ convex that can be elicited by a closed, convex $G$.  Then $\Gamma_r$ is closed for all $r$.  
\end{corollary}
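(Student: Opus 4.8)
The plan is to reduce the statement to the closedness of the graph of the subdifferential of a closed convex function, which is exactly the cited fact \cite[Thm~24.4]{rockafellar1997convex}. \textbf{Step 1: identify $\Gamma_r$ with a subdifferential level set.} Since $\T\subseteq\reals^n$ is convex we have $\convhull(\T)=\T$, and since $G(\T)\subseteq\reals$ the closed convex function $G$ eliciting $\Gamma$ (as in Theorem~\ref{thm:prop-char}) has $\dom G=\T$. By Theorem~\ref{thm:prop-char} there is a bijection $\varphi:\R\to\D\subseteq\partial G$ with $\Gamma(t)=\varphi^{-1}(\D\cap\partial G_t)$ for all $t$; unwinding the definition of $\Gamma_r$ and using that $\varphi(r)\in\D$ automatically,
\[ \Gamma_r=\{t\in\T:r\in\Gamma(t)\}=\{t\in\dom G:\varphi(r)\in\partial G_t\}=(\partial G)^{-1}(\varphi(r)). \]
(One can equally reach this identity directly from the proof of Theorem~\ref{thm:prop-char} together with Lemma~\ref{lem:subgradients}, which avoids invoking the non-redundancy hypothesis.) So it suffices to show that every level set $(\partial G)^{-1}(d)$ of the subdifferential of a closed proper convex function on $\reals^n$ is closed.

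\textbf{Step 2: apply the closure property.} Fix $r$ and put $d=\varphi(r)$. Let $t_i\in\Gamma_r$ with $t_i\to\bar t$ in $\reals^n$; then $(t_i,d)$ lies in the graph $\{(x,x^\ast):x^\ast\in\partial G_x\}$ for every $i$, and $(t_i,d)\to(\bar t,d)$. By \cite[Thm~24.4]{rockafellar1997convex} this graph is closed, so $d\in\partial G_{\bar t}$; in particular $\bar t\in\dom\partial G\subseteq\dom G=\T$, hence $\bar t\in(\partial G)^{-1}(d)=\Gamma_r$. Thus $\Gamma_r$ is closed, and since $r$ was arbitrary this proves the corollary.

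\textbf{Main obstacle.} In the genuinely real-valued setting the two steps are essentially immediate; the one place care is needed is the paper's extended-real convention, where a subgradient may be a ``vertical'' functional $\varphi(r)=\ell_1+\infty\cdot\ell_2$, to which Theorem~24.4 does not apply verbatim. I would dispatch that case directly via Lemma~\ref{lem:subgradients}: picking any $\bar t\in\Gamma_r$, the lemma gives $\Gamma_r=\{t\in\T:G(t)=G(\bar t)+\varphi(r)(t-\bar t)\}=\{t:G(t)\le A_r(t)\}$, where $A_r(t):=G(\bar t)+\varphi(r)(t-\bar t)$ is the affine minorant determined by $\varphi(r)$. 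Since $\varphi(r)\in\partial G_{\bar t}$ forces $\ell_2(t-\bar t)\le 0$ on $\T$, the set where the infinite part of $A_r$ is finite (equivalently $\ell_2(t-\bar t)=0$) is a hyperplane, hence closed, and on it $\{t:G(t)-(G(\bar t)+\ell_1(t-\bar t))\le 0\}$ is a sublevel set of a lower semicontinuous function (here we use that $G$ is closed, and $\ell_1$ continuous), hence closed; intersecting the two closed sets shows $\Gamma_r$ is closed in this case as well.
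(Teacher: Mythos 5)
Your proof is correct and follows exactly the route the paper intends: the paper supplies no written argument, but its one-line indication---``combine Theorem~\ref{thm:prop-char} with~\cite[Thm~24.4]{rockafellar1997convex}''---is precisely your Steps~1 and~2, identifying $\Gamma_r$ with the subdifferential level set $(\partial G)^{-1}(\varphi(r))$ and then invoking closedness of the graph of $\partial G$. Your added paragraph on vertical subgradients (those of the form $\ell_1+\infty\cdot\ell_2$) is a genuine refinement: Rockafellar's Theorem~24.4 applies to real-valued subgradients only, and the paper's $\T$-regular framework does allow infinite-valued ones, so your reduction via Lemma~\ref{lem:subgradients} to the intersection of the hyperplane $\{\ell_2(\cdot-\bar t)=0\}$ with a sublevel set of the closed function $G-\ell_1(\cdot-\bar t)$ fills a gap the paper leaves silent. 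Your observation that the identity $\Gamma_r=(\partial G)^{-1}(\varphi(r))$ only uses steps 1 and 4 of the proof of Theorem~\ref{thm:prop-char} (not the injectivity step) is also right, so non-redundancy is indeed not needed here.
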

Requiring $G$ to be closed is a generally mild technical assumption regarding the boundary of of $\T$, and is irrelevant for level sets in the relative interior.
While~\cite{lambert2009eliciting} showed this for finite report spaces $\R$, this more general statement shows, for example, that if $\T = \reals$ the property $\Gamma(t) = \mathrm{floor}(t) = \max \{ z \in \integers ~|~ z \leq t\}$ is not elicitable.  More generally, this often provides a tool for showing that we cannot get around issues of cardinality by finding a tie-breaking rule to make the value unique.

Closure appears is a more delicate attribute to work with in infinite dimensions, but intuitive violations of it can still be used to show that properties are not elicitable.
As an illustration, we provide a direct proof that a property is not elicitable.
We already saw that confidence intervals are not elicitable due to their cardinality, and therefore a natural way to break these ties, which is also of practical interest, would be to ask for the smallest such interval.
Can we elicit this smallest interval, or even just its length?
The following sketch shows we cannot.
For probability distribution $F$ represented by a CDF, let $\Gamma(F) = \inf \{b - a \mid F(b) - F(a) = 1/2\}$ be the length of the smallest interval of probability $1/2$.
Consider the family of distributions with densities $f_c$ defined as follows: $f_c(x) = 1/2 - c$ for $0 \leq x \leq 1$, $f_c(x) = c$ for $1 < x \leq 2$, and $f_c(x) = 1/4$ for $4\leq x \leq 6$, with corresponding CDFs $F_c$.
Note that for $0 < c < 1/2$, $\Gamma(F_c) = \{2\}$ but $\Gamma(F_0) = \{1\}$.
Suppose we could elicit $\Gamma$ with scoring rule $\SRS$.
Let $X(F)= \SR{2,F} - \SR{1,F}$.
By elicitability $X(F_c) > 0$ for $0 < c < 1/2$ but $X(F_0) < 0$, which violates the continuity of $X$ given by linearity of expectation.
This argument can be extended to arbitrary properties as follows.

\begin{corollary}
  \label{cor:affine-line-continuity}
  Let $\Gamma:\T\toto \R$ be an elicitable property for $\T$ convex, and let $t,t'\in\T$ and $t_\alpha = \alpha t + (1-\alpha)t'$ for $\alpha \in [0,1]$.  If $\Gamma(t_\alpha) = \{r\}$ for $\alpha \in [0,1)$, then $r\in\Gamma(t)$.  
\end{corollary}
\begin{proof}
  Suppose \ASS\ elicits $\Gamma$, and assume for a contradiction that $r\in\Gamma(t_\alpha)$ for $\alpha \in [0,1)$, but $r\notin\Gamma(t)$.  Let $r'\in\Gamma(t)$, noting $r'\neq r$, and define $X(\alpha) \defeq \AS{r,t_\alpha}-\AS{r',t_\alpha}$.  By affineness of \ASS, we can write $X(\alpha) = \alpha(\AS{r,t}-\AS{r',t}) + (1-\alpha)(\AS{r,t'}-\AS{r',t'})$, which is continuous with respect to $\alpha$.  But now $X(\alpha)>0$ for $\alpha<1$ as $r'\notin\{r\}=\Gamma(t_\alpha)$, while $X(1) < 0$ as $r\notin\Gamma(t)\supseteq\{r'\}$, contradicting continuity of $X$.
\end{proof}

Using Corollary~\ref{cor:affine-line-continuity}, we can extend the argument above to show that neither the smallest confidence interval containing mass $\alpha$, defined as $\Gamma_\alpha(F) = \argmin_{(a,b)} \{ b-a ~|~ F(b)-F(a)=\alpha \}$, nor its length, is elicitable with respect to families of distributions which include all piecewise linear CDFs with a finite number of pieces.

\section{Duality in Property Elicitation}
\label{chap:propertiesduality-elicitation}

In the previous section, we saw from Theorem~\ref{thm:prop-char} that in a strong sense an elicitable property $\Gamma$ is like a subgradient mapping of a convex function.
In order to explore notions of duality, we first remove the word ``like'' from the previous sentence---we look at properties which \emph{are} subgradient mappings.
This exercise has two main benefits.
First, it gives us a concrete tool to reason about properties, by working directly with a convex function rather than through some map $\varphi$.
Second, it gives a new framework to discuss duality in elicitation, as has been observed between scoring rules and prediction markets~\cite{hanson2003combinatorial,chen2010new,abernethy2012characterization,chen2013cost,abernethy2013efficient}.

\subsection{Subgradient Elicitation}
\label{chap:propertiessubgradient-elicitation}

Now that we have formalized the relationship between the report space and subgradients of convex functions, we can see what the ``canonical'' properties look like: those which are (subsets of) subgradient mappings of a convex function.  For these properties, we can talk about \emph{subgradient elicitation}, which roughly speaking means removing the intermediary $\varphi$ between $\R$ and $\partial G$.  In fact, for such properties, it makes sense to talk about a convex function itself eliciting~$\Gamma$.

  \begin{definition}
    A property $\Gamma:\T\toto\D$, where $\T\subseteq\V$ and $\D\subseteq\V^*\defeq\linear(\V\to\extreals)$, is \emph{subgradient-elicitable} if there exists $G:\conv(\T)\to\extreals$ convex with $G(\T)\subseteq \reals$ such that $\Gamma(t)\subseteq \partial G_t$.  In this case we say $G$ \emph{subgradient-elicits}, or just \emph{elicits}, $\Gamma$.
  \end{definition}

  In other words, $G$ elicits $\Gamma:\T\toto \D$ if the $\varphi$ in Theorem~\ref{thm:prop-char} and Corollary~\ref{cor:props-elicitable-D} is the identity.
  It is easy to construct an \scorename eliciting such a property.

  \begin{proposition}
    Subgradient-elicitable properties are elicitable.
  \end{proposition}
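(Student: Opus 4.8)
The plan is to construct an eliciting \scorename by hand, mimicking the sufficiency (``converse'') direction of Theorem~\ref{thm:prop-char} with the map $\varphi$ taken to be the identity. One cannot simply invoke Corollary~\ref{cor:props-elicitable-D}, because a directly elicitable property may fail to be non-redundant: for instance $G(t)=\abs{t}$ on $\T=[-1,1]$ directly elicits the property $\Gamma$ with $\Gamma(0)=[-1,1]$ and $\Gamma(t)=\{\sgn t\}$ for $t\neq 0$, yet $\Gamma_0=\{0\}\subseteq[0,1]=\Gamma_1$. So I would instead build a score directly and verify it elicits $\Gamma$; this uses only the subgradient inequality and Lemma~\ref{lem:subgradients}, neither of which needs non-redundancy.

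Let $G:\conv(\T)\to\extreals$ be convex with $G(\T)\subseteq\reals$ and $\Gamma(t)=\D\cap\partial G_t$ for all $t$ --- this is precisely what ``$G$ directly elicits $\Gamma$'' means once $\varphi$ is the identity. We may assume every $d\in\D$ is a subgradient of $G$ at some type (otherwise such a $d$ is never a correct report and can be dropped from $\D$), so for each $d\in\D$ we fix $t_d\in\T$ with $d\in\Gamma(t_d)\subseteq\partial G_{t_d}$. Define $\ASS:\D\to\A$ and $\A$ by
\[ \AS{d,t} \;=\; G(t_d)+d(t-t_d), \qquad \A \;=\; \{\,t\mapsto G(t_d)+d(t-t_d) \,:\, d\in\D\,\}. \]
The first things to check are that this is a legitimate $\Gamma$-regular \scorename: each $\AS{d,\cdot}$ is affine in $t$, with linear part $d\in\V^*$ and finite constant $G(t_d)-d(t_d)$; the subgradient inequality for $G$ at $t_d$ gives $\AS{d,t}\leq G(t)<\infty$ for all $t\in\T$; and if $d\in\Gamma(t)\subseteq\partial G_t$ then Lemma~\ref{lem:subgradients}, applied at $t_d$ and $t$, forces $G(t_d)-G(t)=d(t_d-t)$, i.e.\ $\AS{d,t}=G(t)\in\reals$.

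The core of the argument is then, for a fixed $t\in\T$: the subgradient inequality gives $\AS{d,t}\leq G(t)$ for every $d\in\D$, while Lemma~\ref{lem:subgradients} (again at $t_d$ and $t$, using $d\in\partial G_{t_d}$) gives $\AS{d,t}=G(t)$ if and only if $d\in\partial G_t$. Since $\Gamma(t)\neq\emptyset$ there is at least one such $d$, so $\sup_{d\in\D}\AS{d,t}=G(t)$ is attained and the $\argsup$ is well defined, and
\[ \argsup_{d\in\D}\AS{d,t} \;=\; \{\,d\in\D : \AS{d,t}=G(t)\,\} \;=\; \D\cap\partial G_t \;=\; \Gamma(t). \]
Hence $\ASS$ elicits $\Gamma$, so $\Gamma$ is elicitable.

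I expect the only slightly delicate points to be (i) recognizing that ``$G$ directly elicits $\Gamma$'' should be read as the full equality $\Gamma(t)=\D\cap\partial G_t$ (so the last display is a genuine equality and not merely ``$\supseteq$'', which would give only weak elicitation), and (ii) the extended-real bookkeeping ensuring $\ASS$ is $\Gamma$-regular and that the supremum defining $G(t)$ is attained --- both handled as above. Everything else is immediate once Lemma~\ref{lem:subgradients} is in hand, which is why the proof is essentially trivial.
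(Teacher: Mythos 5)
Your proof is correct, and it does more than the paper's one-line argument: it actually patches two real gaps in the paper.

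The paper's proof invokes Corollary~\ref{cor:props-elicitable-D}, which is stated only for non-redundant $\Gamma$; but as your $G(t)=\abs{t}$ example shows, a directly elicitable $\Gamma$ can be redundant, so the corollary's hypotheses are literally not met. What saves the argument is exactly what you observe: only the ``only if'' direction of Theorem~\ref{thm:prop-char} (where $\varphi$ must be shown injective) needs non-redundancy, and the ``if'' direction, which is all this proposition requires, goes through for arbitrary $\Gamma$. Your construction re-runs that sufficiency argument with $\varphi=\id$ and so bypasses the issue cleanly. This is the same high-level idea as the paper (take $\varphi$ to be the identity and appeal to the canonical score form), but it is the correct route to execute it.

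Your second delicate point is also substantive, and I would state it more forcefully: under the literal $\subseteq$ reading of the definition, the proposition as stated is false, not merely weaker. Take $\T=[-1,1]$, $\D=\{-1,0,1\}$, $G(t)=\abs{t}$, and $\Gamma(t)=\{\sgn t\}$ for $t\neq 0$, $\Gamma(0)=\{0\}$. Then $\Gamma(t)\subseteq\partial G_t$ everywhere, so $\Gamma$ is ``directly elicitable'' under the $\subseteq$ reading. But the level sets are $\Gamma_{-1}=[-1,0)$, $\Gamma_0=\{0\}$, $\Gamma_1=(0,1]$, which cannot form a power diagram (the cells are not closed), so by Theorem~\ref{thm:power-diag} this $\Gamma$ is not elicitable. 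Hence the definition must be read with the equality $\Gamma(t)=\D\cap\partial G_t$ --- which is also what the paper's informal gloss (``$\varphi$ in Theorem~\ref{thm:prop-char} is the identity'') says --- for the proposition to be true. With that reading, your argument is complete and correct. The one remaining small loose end is your step ``drop any $d\in\D$ that is never a subgradient'': strictly speaking this changes the ambient report space, so you should either note that such a $d$ can never appear in any $\Gamma(t)$ (so $\Gamma$ restricted to the smaller report space is the same multivalued map and eliciting it suffices), or extend the score to those $d$ by some affine function strictly below $G$ on $\T$; either fix is routine.
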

  \begin{proof}
    Let $\Gamma:\T\toto \R$ and $G:\conv(\T)\to\extreals$ convex with $G(\T)\subseteq \reals$ be given such that $\Gamma(t)\subseteq \partial G_t$.  Then taking $\D = \R$ and $\varphi = \id_\D$, we have by Corollary~\ref{cor:props-elicitable-D} that $\Gamma$ is elicitable.
Note that despite  $\Gamma(t)\subseteq \partial G_t$ we have $\Gamma(t) = \argsup_{r\in \R} \AS{r,t}$ because the range of $\Gamma$ is $\R$.
  \end{proof}
  Note that this subgradient elicitability in no way necessary for elicitability, since the report space can be arbitrary.
  For example, one can take $\Gamma(t) \defeq -\partial G_t$ for some $G$, which in general will not be subgradient-elicitable, but still elicitable with $\varphi(r) = -r$ and $G$.

We can also clarify what we mean when we say subgradient elicitation is canonical: every elicitable property gives rise to a subgradient-elicitable property.
\begin{proposition}
Let $\Gamma$ be an elicitable property, elicited by $\AS{r,t} = G(t_r) + \varphi(r)(t - t_r)$.  Then $\Gamma^\varphi(t) = \varphi(\Gamma(t))$ is subgradient-elicitable.
\end{proposition}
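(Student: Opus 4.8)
The plan is to observe that the convex function $G$ already named in the hypothesis \emph{itself} directly elicits $\Gamma^\varphi$, so essentially nothing new needs to be built; all the content has already been packaged by Theorem~\ref{thm:prop-char}. Concretely, I would first invoke Theorem~\ref{thm:prop-char} on the affine score $\AS{r,t}=G(t_r)+\varphi(r)(t-t_r)$ that elicits $\Gamma$ (assuming, as throughout this part of the paper, that $\Gamma$ is non-redundant, or pre-processing it to be so). That theorem guarantees $G:\conv(\T)\to\extreals$ is convex with $G(\T)\subseteq\reals$, that $\D:=\varphi(\R)\subseteq\partial G$, that $\varphi:\R\to\D$ is a bijection, and crucially that
\[
\Gamma(t)=\varphi^{-1}\bigl(\D\cap\partial G_t\bigr)\quad\text{for all }t\in\T .
\]

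Next I would simply push $\varphi$ through this identity. Since $\varphi$ is a bijection onto $\D$ and $\D\cap\partial G_t\subseteq\D$, we have $\varphi\bigl(\varphi^{-1}(\D\cap\partial G_t)\bigr)=\D\cap\partial G_t$, hence
\[
\Gamma^\varphi(t)=\varphi(\Gamma(t))=\varphi\bigl(\varphi^{-1}(\D\cap\partial G_t)\bigr)=\D\cap\partial G_t\subseteq\partial G_t .
\]
It remains only to note that $\Gamma^\varphi$ is a bona fide property of the shape demanded by the definition of direct elicitability: its values are nonempty because $\Gamma(t)$ is nonempty and $\varphi$ is a function, and its report space $\D$ lies in $\linear(\V\to\extreals)=\V^*$ because $\D=\varphi(\R)\subseteq\partial G\subseteq\linear(\V\to\extreals)$. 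Since $G$ is convex with $G(\T)\subseteq\reals$ and $\Gamma^\varphi(t)\subseteq\partial G_t$, by definition $G$ directly elicits $\Gamma^\varphi$, finishing the proof.

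There is no genuine obstacle here — the statement is exactly the remark that Theorem~\ref{thm:prop-char} has already exhibited $\Gamma$ as the $\varphi$-preimage of a subgradient level-set map, so ``dividing out'' the intermediary $\varphi$ (i.e. passing to $\Gamma^\varphi=\varphi\circ\Gamma$) lands precisely on that subgradient map, which is directly elicited by $G$ by construction. The only things worth spelling out explicitly are the two bookkeeping points above (nonemptiness of $\Gamma^\varphi(t)$ and $\D\subseteq\V^*$); everything else is a one-line consequence of the earlier theorem.
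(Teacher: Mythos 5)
Your proof is correct and follows essentially the same route as the paper's (one-line) proof, which is simply ``keep $G$ and take $\id_\D$ as the new $\varphi$'': you unpack that remark by applying $\varphi$ to the identity $\Gamma(t)=\varphi^{-1}(\D\cap\partial G_t)$ supplied by Theorem~\ref{thm:prop-char} to get $\Gamma^\varphi(t)=\D\cap\partial G_t\subseteq\partial G_t$, and then check the small definitional requirements (nonemptiness, $\D\subseteq\V^*$). The paper leaves all of this implicit; your version is the same argument spelled out.
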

\begin{proof}
Simply keep $G$ and take $\id_\D$ as the new $\varphi$.
\end{proof}
In other words, properties are subsets of subderivative mappings, up to some bijection, or \emph{link function}, taking them to some other report space $\R'$; see also the discussion following Theorem~\ref{thm:dual-report-char}.

As a final remark, we note a few observations about subgradient elicitation.  One first notices that the $G$ eliciting some $\Gamma$ is not unique, as $G' \defeq G + c$ will also elicit $\Gamma$ for any constant $c$.  But these are the \emph{only} convex functions subgradient-eliciting $\Gamma$.  Moreover, recovering such a $G$ from $\Gamma$ is easy: simply integrate (a selection of) $\Gamma$ to obtain $G$.  Testing whether $\Gamma$ is subgradient-elicitable is less straight-forward, but there are a variety of monotonicity conditions addressing this issue as well (cf Appendix~\ref{sec:convex}).

\subsection{Report Duality}
\label{chap:propertiesreport-duality}

We are now ready to introduce the first of two notions of duality in property elicitation, where we change the report from the type to the dual type, but leave the type itself unchanged.
For now, we will take our dual vector space to be all linear functions from $\V$ to $\reals$ (not $\extreals$ as above).\footnote{When the dual space can take on infinite values, the conjugate is not always well-defined, as values of the form $\infty-\infty$ are encountered.}  We begin with the fundamental operation of convex duality, the convex conjugate.

\begin{definition}
  \label{def:properties-conjugate}
  Let $\V^* \defeq \linear(\V\to\reals)$.  The \emph{convex conjugate} of $G:\V\to\extreals$, denoted $G^*:\V^*\to\extreals$, is given by
  \begin{equation}
    \label{eq:conv-conj}
    G^*(d) = \sup_{v\in \V}\; d(v) - G(v).
  \end{equation}
\end{definition}

The power of the conjugate is apparent from Fact~\ref{lem:properties-dual-optimal} in Appendix~\ref{sec:cvx-primer}, which says that $v$ is an optimum of~\eqref{eq:conv-conj}, meaning $G^*(d) = d(v) - G(v)$, if and only if $d$ is a subgradient at $v$.  In this way, the convex conjugate in some sense encodes the subgradients of $G$.  Fact~\ref{lem:properties-dual-optimal} lets us further simplify Theorem~\ref{thm:prop-char}, as follows.  Note however that we are making an additional assumption, that $G > -\infty$.

\begin{theorem}
  \label{thm:dual-report-char}
  Let non-redundant property $\Gamma:\T\toto \R$ and $\Gamma$-regular \scorename $\ASpropdef$ be given with score set $\A\subseteq\affine(\T\to\reals)$.  Then \ASS\ elicits $\Gamma$ if and only if there exists some convex $G : \convhull(\T)\to\reals$, and bijective $\varphi : \R \to \D$ with $\D\subseteq \partial G$ satisfying $\varphi(\Gamma(t))\subseteq\partial G_t$, such that for all $r\in \R$ and $t\in\T$,
  \begin{equation}
    \label{eq:dual-prop-char}
    \AS{r,t} \; = \; \varphi(r)(t) - G^*(\varphi(r)).
  \end{equation}
\end{theorem}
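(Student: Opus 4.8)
The plan is to derive Theorem~\ref{thm:dual-report-char} as an essentially immediate consequence of Theorem~\ref{thm:prop-char} together with Lemma~\ref{lem:properties-dual-optimal}, by rewriting the characterizing equation~\eqref{eq:prop-char} using the convex conjugate. First I would invoke Theorem~\ref{thm:prop-char}: since $\Gamma$ is non-redundant and $\ASS$ is $\Gamma$-regular, $\ASS$ elicits $\Gamma$ if and only if there is a convex $G:\convhull(\T)\to\extreals$ with $G(\T)\subseteq\reals$, a set $\D\subseteq\partial G$, and a bijection $\varphi:\R\to\D$ with $\Gamma(t)=\varphi^{-1}(\D\cap\partial G_t)$, such that $\AS{r,t}=G(t_r)+\varphi(r)(t-t_r)$ for all $r,t$, where $r'\in\Gamma(t_{r'})$. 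Note that $\varphi(\Gamma(t))\subseteq\partial G_t$ is automatic from $\Gamma(t)=\varphi^{-1}(\D\cap\partial G_t)$, so the weaker form stated in the present theorem is consistent. The key algebraic move is then: for each $r$, since $r\in\Gamma(t_r)$ we have $\varphi(r)\in\partial G_{t_r}$, so Lemma~\ref{lem:properties-dual-optimal} gives $G^*(\varphi(r)) = \varphi(r)(t_r) - G(t_r)$, i.e.\ $G(t_r) - \varphi(r)(t_r) = -G^*(\varphi(r))$. Substituting into~\eqref{eq:prop-char} yields
\[
  \AS{r,t} = G(t_r) + \varphi(r)(t) - \varphi(r)(t_r) = \varphi(r)(t) - G^*(\varphi(r)),
\]
which is exactly~\eqref{eq:dual-prop-char}, and crucially this expression no longer depends on the choice of base points $\{t_r\}$.

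For the converse direction, I would start from an $\ASS$ of the form~\eqref{eq:dual-prop-char} with $G$ convex (now finite-valued), $\varphi:\R\to\D\subseteq\partial G$ a bijection, and $\varphi(\Gamma(t))\subseteq\partial G_t$. I would pick any base points $\{t_r\}$ with $r\in\Gamma(t_r)$ (these exist because $\Gamma(t)$ is nonempty for each $t$ and one can realize each $r$ via some $t$ with $r\in\Gamma(t)$; non-redundancy and the hypothesis that $\Gamma$ is elicitable guarantee $\varphi(r)\in\partial G_{t_r}$ as well). Then applying Lemma~\ref{lem:properties-dual-optimal} in reverse, $-G^*(\varphi(r)) = G(t_r) - \varphi(r)(t_r)$, so~\eqref{eq:dual-prop-char} rearranges back to~\eqref{eq:prop-char}, and Theorem~\ref{thm:prop-char} tells us $\ASS$ elicits $\Gamma$. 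Thus the two characterizations are literally the same equation written in two ways, and the theorem follows.

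The main obstacle is bookkeeping around finiteness and the role of the extra assumption $G > -\infty$ (equivalently, the theorem's restriction to $\A\subseteq\affine(\T\to\reals)$ and $G:\convhull(\T)\to\reals$). The convex conjugate $G^*(d)$ as defined in~\eqref{eq:conv-conj} involves $d(v)-G(v)$, which is indeterminate of the form $\infty-\infty$ if $G(v)=-\infty$ somewhere or if $d$ is allowed to take infinite values; this is precisely why Definition~\ref{def:properties-conjugate} restricts to $\V^*=\linear(\V\to\reals)$ and why the theorem restricts to real-valued affine scores. I would need to check that, under these hypotheses, $G^*(\varphi(r))$ is well-defined and finite for each $r$ (it is finite because $\varphi(r)\in\partial G_{t_r}$ forces the sup in~\eqref{eq:conv-conj} to be attained at $t_r$ with finite value $\varphi(r)(t_r)-G(t_r)$), and that Lemma~\ref{lem:properties-dual-optimal} applies verbatim since its statement is for convex $G:\V\to\extreals$ and finite linear $d$. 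One should also note that although $G$ is a priori defined only on $\convhull(\T)$, the conjugate and Lemma~\ref{lem:properties-dual-optimal} can be applied by extending $G$ to all of $\V$ with value $+\infty$ outside $\convhull(\T)$, which does not affect subgradients at points of $\T$. Beyond these routine checks, no genuinely hard step remains — the content is entirely carried by Theorem~\ref{thm:prop-char} and Lemma~\ref{lem:properties-dual-optimal}.
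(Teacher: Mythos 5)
Your proof takes exactly the paper's intended approach --- the paper gives no explicit proof, but introduces the theorem by saying Lemma~\ref{lem:properties-dual-optimal} ``lets us further simplify Theorem~\ref{thm:prop-char},'' and your derivation does precisely that: the identity $G^*(\varphi(r)) = \varphi(r)(t_r) - G(t_r)$ from $\varphi(r)\in\partial G_{t_r}$ turns \eqref{eq:prop-char} into \eqref{eq:dual-prop-char} and back. Your remarks on finiteness bookkeeping (why $G$ and $\A$ must be real-valued so the conjugate avoids $\infty-\infty$, and why $G^*(\varphi(r))$ is finite) are also on the right track.

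One point deserves more care than you give it, though, and it concerns the converse direction. Theorem~\ref{thm:prop-char} gives elicitation only under the condition $\Gamma(t) = \varphi^{-1}(\D\cap\partial G_t)$; the hypothesis you carry into the converse is the strictly weaker inclusion $\varphi(\Gamma(t))\subseteq\partial G_t$, which (since $\varphi(\R)=\D$) is equivalent to $\Gamma(t)\subseteq\varphi^{-1}(\D\cap\partial G_t)$. By Corollary~\ref{cor:properties-weakly-elicits}, that inclusion combined with the form \eqref{eq:prop-char} only yields that $\ASS$ \emph{weakly} elicits $\Gamma$, not that it elicits $\Gamma$; non-redundancy alone does not close this gap (one can have $\Gamma(t)\subsetneq\varphi^{-1}(\D\cap\partial G_t)$ for some $t$ without creating a dominated report). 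So ``Theorem~\ref{thm:prop-char} tells us $\ASS$ elicits $\Gamma$'' does not follow as written. This issue is really in the theorem statement as printed: to match the ``elicits'' conclusion, the condition on $\varphi$ should be the equality $\Gamma(t)=\varphi^{-1}(\D\cap\partial G_t)$, as in Theorem~\ref{thm:prop-char} (and indeed the forward direction of your argument produces exactly that equality, so stating only the inclusion discards information). Your proof inherits the discrepancy rather than resolving it; one should either strengthen the hypothesis to the equality, or weaken the conclusion to ``weakly elicits,'' and then your algebraic reduction goes through cleanly in both directions. Separately, your parenthetical ``the hypothesis that $\Gamma$ is elicitable'' in the converse is circular phrasing --- elicitability is the conclusion there, not a hypothesis; what you actually need (and have) is that $r\in\Gamma(t_r)$ together with $\varphi(\Gamma(t_r))\subseteq\partial G_{t_r}$ forces $\varphi(r)\in\partial G_{t_r}$.
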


Theorem~\ref{thm:dual-report-char} has natural interpretations for both mechanisms and scoring rules.  For mechanisms, it captures a version of the {\em taxation principle}, that a mechanism can be viewed as a menu of possible allocations $\varphi(r)$ with corresponding prices $G^*(\varphi(r))$.  It is worth noting however that this posted-price mechanism is not always identical to the original mechanism.  Specifically, while the equilibrium payoffs for the posted-price mechanism $\AS{r,t}$ are the same as those of the direct revelation mechanism $\AS{t',t}$, the \emph{off-equilibrium} payoffs need not be equivalent, as the posted-price mechanism may allow reports $\varphi(r)\in\partial G_t$ which are not $dG_{t'}$ for any $t'$.  In other words, because the primal-report (i.e., direct) mechanism must choose a single subgradient $dG_t$ for every point, if $\{dG_t\}_\T \subsetneq \partial G = \D$, the dual-report mechanism can be strictly more expressive.

For scoring rules, Theorem~\ref{thm:dual-report-char} captures the relationship between a scoring rule and a prediction market.
The prediction market model we refer to is the standard cost-function framework~\cite{abernethy2013efficient}, which we briefly describe.
A centralized market maker chooses a convex cost function $C$, and traders who wish to buy a bundle of securities $q\in\reals^\O$ (where the trader will receive \$$q_\o$ upon outcome $\o$) pays $C(q^0+q) - C(q^0)$, where $q^0$ is the vector of total purchases made so far in the market.  Abstracting away $q^0$, we set $G^*(q) = C(q^0+q) - C(q^0)$, yielding affine score $\AS{q',p} = \inprod{p, q'} - G^*(q')$.
Hence, scoring rules have the primal report space, and prediction markets the dual, where an optimal share bundle is essentially a subgradient of the scoring rule at the trader's belief.
As with posted-price mechanisms, the prediction market can be strictly more expressive.


\subsection{Type Duality and the Duality Quadrangle}
\label{sec:propertiestype-duality-duality}

We have seen the first notion of duality, where the report switches from the type space to the dual space.
In Appendix~\ref{sec:append-duality} we introduce the second notion, that of dual \emph{properties}, where we swap the roles of types and reports.
Taken together, these two duality notions give four combinations of dual report and type, which form a ``duality quadrangle'', depicted in Table~\ref{tab:quadrangle-simple}.
We focus here on the particular instantiation of type space $\Delta(\O)$ and dual type space $(\O\to\reals)$, based on the classic duality between integrable functions and probability measures; Appendix~\ref{sec:append-duality} generalizes to any dual pair of vector spaces.

\definecolor{dist}{rgb}{0,.6,0}
\definecolor{util}{rgb}{0,0,.6}
\newcommand{\myp}{\textcolor{dist}{p}}
\newcommand{\myq}{\textcolor{util}{q}}
\newcommand{\mypp}{\textcolor{dist}{p'}}
\newcommand{\myqq}{\textcolor{util}{q'}}

\begin{table}[ht]
  \centering
  \def\dualitycellwidth{160pt}
  \begin{tabular}{c|c|@{}c@{}|@{}c@{}|}
    \multicolumn{1}{r}{} & \multicolumn{1}{r}{} & \multicolumn{2}{c}{Private Type}
    \\[4pt]
    \cline{3-4}
    \multicolumn{1}{r}{} & \multicolumn{1}{r|}{} &
    \textcolor{dist}{Distribution ($p$)} \bigstrut & \textcolor{util}{Utility ($q$)}\\[2pt]
    \cline{2-4}
    \multirow{2}{*}{\parbox[c][100pt]{10pt}{\rotatebox{90}{Report}}} &
    \parbox[c]{10pt}{\rotatebox{90}{\textcolor{dist}{Distrib. ($\mypp$)}}} &
    \parbox[c][70pt]{\dualitycellwidth}{\centering $\AS{\mypp,\myp}$ $=$ $G(\mypp) + \inprod{\myp-\mypp,dG_{\mypp}}$\\[5pt]Scoring rule} &
    \parbox[c][70pt]{\dualitycellwidth}{\centering $\AS[^*]{\mypp,\myq}$ $=$ $\inprod{\mypp,\myq} - G(\mypp)$\\[5pt]Posted-price mechanism}\\ 
    \cline{2-4}
    &
    \parbox[c]{10pt}{\rotatebox{90}{\textcolor{util}{Utility ($\myqq$)}}} &
    \parbox[c][70pt]{\dualitycellwidth}{\centering $\AS{\myqq,\myp}$ $=$ $\inprod{\myp,\myqq} - G^*(\myqq)$\\[5pt]Prediction market} &
    \parbox[c][70pt]{\dualitycellwidth}{\centering $\AS[^*]{\myqq,\myq}$ $=$ $G^*(\myq) + \inprod{dG^*_{\myqq},\myq-\myqq}$\\[5pt]Randomized mechanism}\\
    \cline{2-4}\noalign{\vskip 5pt}
    \multicolumn{1}{c}{}& \multicolumn{1}{c}{}&
    \multicolumn{1}{c}{$\sup\: \AS{\cdot,\myp} = G(\myp)$} &
    \multicolumn{1}{c}{$\sup\: \AS[^*]{\cdot,\myq} = G^*(\myq)$}
    \vspace{5pt}
  \end{tabular}
  \caption{The duality quadrangle for the duality between distributions and functionals.}
  \label{tab:quadrangle-simple}
\end{table}

As we discussed in Section~\ref{chap:propertiesreport-duality}, the columns of Table~\ref{tab:quadrangle-simple} are well-understood already. The first gives prediction market duality, the well-known fact that market scoring rules are dual to prediction markets, in the sense of report duality.
The second gives the taxation principle, which says that without loss of generality one could think of a mechanism as assigning prices to probability distributions over outcomes $\o$.
The rows of this table, however, are new, as we swap the nature of the private information itself.
In the scoring rule or prediction market setting, an agent has a private distribution (their belief) and the principal gives the agent a utility vector (the score or the bundle of securities), which assigns the agent a real-valued payoff for each possible state of the world.
Dually, in a mechanism, the agent possesses a private type encoding their utility for each state of the world, and the principal assigns a distribution over these states.

Thus, in essence, scoring rules are dual mechanisms.
In the remainder of this section, we elaborate on this particular connection, in the finite-outcome setting for clarity, giving a concrete construction to convert scoring rules to mechanisms and vice versa.
Note that unlike previous constructions (e.g.,~\cite{fiat2013approaching}) we can establish a direct bijection between the two objects, and do not require any normalization.

Fix a finite outcome space $\O$.
Let $\mathcal{S}$ be the set of all scoring rules $\SRS:\P\times\O\to\extreals$ for some convex set of distributions $\P\subseteq\Delta(\O)$, and $\mathcal{M}$ the set of all randomized mechanisms $(f:\T\to\Delta(\O),\pi:\T\to\reals)$ for some convex type space $\T\subseteq\reals^\O$.
(We use $\pi$ for prices to avoid confusion with probabilities.)
In what follows, we will write expectations as dot products, such as $t\cdot f(t') = \E_{\o\sim f(t')}[t(\o)]$.

\begin{construction}
  \label{construction}
  Given a scoring rule $\SRS:\P\times\O\to\extreals$, let $G_\SRS:\P\to\extreals$ be defined by $G_\SRS(p) = \sup_{p'\in\P} \SR{p',p}$.
  Similarly, given a randomized mechanism $\MDS = (f,\pi)$ on type space $\T$,
  let $G_\MDS:\T\to\extreals$ be given by $G_\MDS(t) = \sup_{t'\in\T} U(t',t) = \sum_{t'\in\T} t\cdot f(t') - \pi(t')$.
  Define the multivalued maps $\Phi : \mathcal{S} \toto \mathcal{M}$ and $\Psi : \mathcal{M} \toto \mathcal{S}$ as follows,
  \begin{equation}
    \label{eq:construction}
    \begin{aligned}
      \Phi(\SRS) &= \{ \MDS=(f,\pi) \mid \forall t\in\T\; f(t) \in \partial G_\SRS^*(t),\; \pi(t) = t \cdot f(t) - G_\SRS^*(t)\}
      \\
      \Psi(\MDS) &= \{ \SR{p,\o} = G_\MDS^*(p) + d_p(\o) - d_p \cdot p \mid \forall p\in\P\; d \in \partial G_\MDS^*(p)\}~,
    \end{aligned}
  \end{equation}
  where for any function $F:D\to\extreals$ on domain $D\subseteq \reals^d$ we write $F^*:\reals^d\to\extreals$ for the convex conjugate, $F^*(x) = \sup_{d\in D} d\cdot x - F(d)$.
\end{construction}

Our construction takes a scoring rule and produces a set of mechanisms, and vice versa.
We now show that these are inverse operations, under mild regularity assumptions.
The set-valued nature of our construction merely follows from the choices in selecting subgradients in Theorem~\ref{thm:main-char}; as we show in part (3) below, when there is only one choice for the subgradients, the maps $\Phi,\Psi$ become single-valued.

\begin{theorem}
  \label{thm:construction} 
  We have the following:
  \begin{enumerate}
  \item For any regular proper scoring rule $\SRS$, any $\MDS\in\Phi(\SRS)$ is a truthful mechanism;
    conversely, for any truthful mechanism $\MDS$, any $\SRS\in\Psi(\MDS)$ is a regular proper scoring rule.
  \item  Let $\mathcal{C}$ be the set of closed\footnote{A convex function is closed if its epigraph is a closed set, or equivalently, if it is lower semi-continuous.} convex functions.  Restricting to the sets $\{\SRS\in\mathcal{S}:G_\SRS\in\mathcal{C}\}$ and $\{\MDS\in\mathcal{M}:G_\MDS\in\mathcal{C}\}$, the maps $\Phi$ and $\Psi$ are inverses as multivalued maps.
  \item Let $\mathcal{D}$ the set of closed strictly convex functions $f$ which are differentiable in the interior $I$ of their domain, and satisfy $\lim_{k\to\infty} \|\nabla f(x_k)\| = \infty$ for any sequence $x_1,x_2,\ldots \in I$ converging to a boundary point of $I$. Restricting to the sets $\{\SRS\in\mathcal{S}:G_\SRS\in\mathcal{D}\}$ and $\{\MDS\in\mathcal{M}:G_\MDS\in\mathcal{D}\}$, $\Phi$ and $\Psi$ are single-valued (and inverses).
    Furthermore, every scoring rule and mechanism in these sets is strictly proper/ truthful.
  \end{enumerate}
\end{theorem}
\begin{proof}
  For part (1), note that by Corollaries~\ref{thm:scoring-rule-char} and~\ref{cor:archerkleinberg}, the forms are proper/truthful if we can show that the convex functions are finite on the correct domain, and subdifferentiable in the case of $G_\SRS^*$ (recall that we allow vertical subgradients for scoring rules).
  First, let $\SRS$ be given.
  Let $\T = \partial G_\SRS(\P) \cap \reals^\O$ be the set of finite subgradients, which is a convex set.
  From \cite[Corollary 13.3.1]{rockafellar1997convex} we have that $G_\SRS^*$ is finite on all of $\reals^\O$ as the domain of $G_\SRS$ is bounded.
  (The Corollary requires $G_\SRS$ to be closed, but we also have $G^*_\SRS = (\mathrm{cl}\, G_\SRS)^*$, and the domain of $\mathrm{cl}\, G_\SRS$ is still bounded, where $\mathrm{cl}$ denotes the closure of a function.)
  Subdifferentiability of $G_\SRS^*$ is then implied by \cite[Theorem 23.4]{rockafellar1997convex}.
  Conversely, given some mechanism $\MDS$, we can show that that $G_\MDS^*$ is finite on $\P = \partial G_\MDS(\T) \cap \Delta(\O)$.
  Applying \cite[Theorem 23.5]{rockafellar1997convex}, we have $p \in \partial G_\MDS(t) \iff G_\MDS(t) + G_\MDS^*(p) = p \cdot t$, and as $p\cdot t$ and $G_\MDS(t)$ are both finite, we conclude that $G_\MDS^*(p)$ is finite.
  Thus, by the Corollaries, we are done.

  For (2), note that $G^*_\SRS$ and $G^*_\MDS$ are closed and convex, and we have both $(G_\SRS^*)^* = G_\SRS$ and $(G_\MDS^*)^* = G_\MDS$~\cite[Theorem 12.2]{rockafellar1997convex}.
  The inverse relationship then follows from the fact that the multivalued maps~\eqref{eq:construction} depend only on $G_\SRS$ and $G_\MDS$, and every scoring rule / mechanism in the range shares the same expected score / consumer surplus function.

  For (3), we simply appeal to~\cite[Theorem 26.3 \& Corollary 26.3.1]{rockafellar1997convex}.
  Strictness follows from strict convexity.
\end{proof}

To illustrate Theorem~\ref{thm:construction}, consider the log scoring rule $\SR{p,\o} = \log p(\o)$, which is strictly proper.
The expected score function is negative Shannon entropy, $G_\SRS(p) = \sum_{\o\in\O} p(\o) \log p(\o)$, which satisfies the conditions of part (3) of Theorem~\ref{thm:construction}.
Applying $\Phi$, we therefore obtain a unique mechanism, which chooses the allocation probabilites according to the familiar multiplicative weights formula:
\begin{equation}
  \label{eq:1}
  f(t) = \left(\tfrac{e^{t(\o)}}{\sum_{\o'\in\O}e^{t(\o')}}\right)_{\o\in\O} \in \Delta(\O),\;\;\;
  \pi(t) = \log\sum_{\o\in\O} e^{t(\o)} - \tfrac{\sum_{\o\in\O} t(\o) e^{t(\o)}}{\sum_{\o\in\O}e^{t(\o)}}~.
\end{equation}
The prices $\pi$ bear resemblance to the Log Market Scoring Rule (LMSR)~\cite{hanson2003combinatorial}, but as remarked above, note that the private information here is a fixed valuation function $t$, not a belief in the form of a probability distribution.

The bijection in Theorem~\ref{thm:construction} is not an arbitrary one; a mechanism and scoring rule which are related by $\Phi$ or $\Psi$ satisfy certain identities.
For example, for $\SRS$ and $\MDS=(f,\pi)$ satisfying part (2) of the theorem, and $\MDS\in\Phi(\SRS)$, a standard result of convex analysis states that $G_\SRS(p) + G_\MDS(t) = p\cdot t$ whenever either $t = d$ (from eq.~\eqref{eq:construction}) or $p = f(t)$ hold.
To provide further intuition for this relationship with a somewhat whimsical example, suppose a gambler in a casino examines the rules of a dice-based game of chance and forms belief $p$ about the probabilities of possible outcomes, assuming the dice are fair.
The gambler is then offered a proper scoring rule $\SRS$ to predict the outcome of the game, and reports truthfully.
Before the game is played, however, the casino informs the gambler that the dice used need not be fair, and offers the gambler the opportunity to select from among different choices of dice using a truthful mechanism $\MDS$ where the gambler's private information is the valuation function $t$ given by $t(\o) = \SR{p,\o}$;\footnote{Strictly speaking, to directly apply the convex analysis result referenced above, we would need to take $t(\o)$ to be the linear part of  $\SR{p,\o}$; in our setting, however, the allocation and prices of the mechanism are invariant to adding a constant to the valuation of every outcome.  (To see this, note that the subgradients of $G_\MDS = G_\SRS^*$ are probability distributions, so the directional derivative is 1 in the  direction of $\ones$, the all-ones vector; integrating gives the result.  See also~\cite{abernethy2013efficient}.)\label{foot:ones-invariance}}
 note that $t(\o)$ truly is the gambler's valuation of outcome $\o$.
If $\MDS \in \Phi(\SRS)$,
and the gambler again reports truthfully,
then the dice chosen by the mechanism will be fair.
And what will be the gambler's profit in expectation from both the scoring rule and mechanism?
Zero.
This follows from the above identity, and the observation that $t \cdot f(t) - \pi(t) = \SR{p,p} - (p\cdot t - G_\MDS(t)) = G_\SRS(p) + G_\MDS(t) - p\cdot t$.\footref{foot:ones-invariance}
The power of our construction is that these relationships hold regardless of the initial scoring rule $\SRS$.

\section{Finite-Valued Properties}
\label{sec:props-finite}

We now examine the special case where $\R$ is a finite set of reports, using the additional structure to provide stronger characterizations.
In the scoring rules literature, Lambert and Shoham~\citeyear{lambert2009eliciting} view this as eliciting answers to multiple-choice questions.  There are also applications to mechanism design, discussed in Section~\ref{sec:saksyu}.
Assume throughout that $\R$ is finite and that $\T$ is a convex subset of a Hilbert space $\V$ (e.g., $\V = \reals^d$) with inner product written $\inprod{t,t'}$ and norm $\|t\|^2 = \inprod{t,t}$.  In this setting, we will use the concept of a power diagram from computational geometry.

\begin{definition}
  Given a set of points $P = \{p_i\}_{i=1}^m \subset \V$, called \emph{sites}, and weights $w\in\reals^m$, a \emph{power diagram} $D(P,w)$ is a collection of cells $\mathrm{cell}(p_i) \subseteq \T$ defined by
  \begin{equation}
    \label{eq:power-cell}
    \mathrm{cell}_{P,w}(p_i) = \left\{ t\in\T \,\left|\, i \in \argmin_j\left\{ \|p_j - t\|^2 - w_j \right\}\right.\right\}.
  \end{equation}
\end{definition}

The following result is a straightforward generalization of Theorem 4.1 of Lambert and Shoham~\citeyear{lambert2009eliciting}, and is essentially a restatement of results due to Aurenhammer~\citeyear{aurenhammer1987criterion,aurenhammer1987power}. 
\begin{theorem}
  \label{thm:power-diag}
  A property $\Gamma : \T \toto \R$ for finite $\R$ is elicitable if and only if the level sets $\{\Gamma_r\}_{r\in \R}$ form a power diagram $D(P,w)$.  
\end{theorem}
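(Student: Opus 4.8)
The plan is to combine our characterization of elicitable finite-valued properties (Corollary~\ref{cor:props-elicitable-D}) with the classical fact, due to Aurenhammer, that a cell complex is a power diagram if and only if it arises as the subgradient level-set decomposition of a piecewise-linear convex function whose pieces have the form $t\mapsto \inprod{p_i,t} + c_i$. The key observation is that, when $\R$ is finite, the convex function $G$ produced by Theorem~\ref{thm:prop-char} can be taken to be the pointwise maximum of finitely many affine functions, one per report value, and such functions are precisely the ``generalized conjugate'' functions appearing in Aurenhammer's theory.

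First I would handle the forward direction. Suppose $\Gamma:\T\toto\R$ is elicitable with $\R$ finite. By Corollary~\ref{cor:props-elicitable-D} there is a convex $G:\conv(\T)\to\extreals$, a set $\D\subseteq\partial G$, and a bijection $\varphi:\R\to\D$ with $\Gamma(t)=\varphi^{-1}(\D\cap\partial G_t)$. Since $\D = \varphi(\R)$ is finite, write $\D=\{d_1,\dots,d_m\}$ with $d_i = \varphi(r_i)$. Pick a representative $t_{r_i}$ with $r_i\in\Gamma(t_{r_i})$, and define the affine function $a_i(t) \defeq G(t_{r_i}) + d_i(t - t_{r_i})$; by the subgradient inequality each $a_i \leq G$ on $\conv(\T)$, and $a_i(t) = G(t)$ exactly when $d_i\in\partial G_t$, i.e. when $r_i\in\Gamma(t)$. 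Hence $\tilde G \defeq \max_i a_i$ is a convex function on $\conv(\T)$ satisfying $\Gamma_{r_i} = \{t : a_i(t) = \tilde G(t)\} = \{t : i\in\argmax_j a_j(t)\}$ (non-redundancy guarantees these level sets are exactly the cells and none is subsumed). Writing $d_i$ in the inner product as $d_i = \inprod{p_i,\cdot}$ and expanding $\|p_i - t\|^2 = \|t\|^2 - 2\inprod{p_i,t} + \|p_i\|^2$, one sees $i\in\argmax_j a_j(t)$ iff $i\in\argmin_j\{\|p_j - t\|^2 - w_j\}$ for the weights $w_j \defeq 2(\text{constant term of }a_j) + \|p_j\|^2$. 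This exhibits $\{\Gamma_r\}$ as the power diagram $D(P,w)$ with $P=\{p_i\}$.

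For the converse, suppose $\{\Gamma_r\}_{r\in\R}$ is the cell complex of a power diagram $D(P,w)$ with sites $P=\{p_i\}_{i=1}^m$ and weights $w\in\reals^m$. Reversing the algebra above, set $a_i(t) \defeq \inprod{p_i,t} + \tfrac12(w_i - \|p_i\|^2)$ and $G \defeq \max_i a_i$, a finite-valued convex function on the convex set $\T$. A standard computation shows $\mathrm{cell}_{P,w}(p_i) = \{t : i\in\argmax_j a_j(t)\} = \{t : \inprod{p_i,\cdot}\in\partial G_t\}$, so taking $\D = \{\inprod{p_i,\cdot} : i\}$ and $\varphi$ the obvious (relabeling) bijection $\R\to\D$, we get $\Gamma(t) = \varphi^{-1}(\D\cap\partial G_t)$, and Corollary~\ref{cor:props-elicitable-D} yields elicitability.

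The main obstacle I anticipate is bookkeeping around degeneracies and redundancy rather than any deep difficulty: one must check that the correspondence $r_i \leftrightarrow p_i$ is genuinely a bijection (no two reports giving the same affine piece), that empty cells / reports whose level set is empty or subsumed are excluded by non-redundancy, and that the regularity ($G(\T)\subseteq\reals$, finiteness) assumptions are respected — in the power-diagram direction $G$ is automatically real-valued since it is a finite max of affine functions, and in the forward direction we used $G(\T)\subseteq\reals$ from Corollary~\ref{cor:props-elicitable-D} to know each $a_i$ is finite on $\T$. The only subtlety worth care is that Corollary~\ref{cor:props-elicitable-D} gives us \emph{some} convex $G$, possibly with many more subgradients than those in $\D$; replacing it by the finite max $\tilde G$ is what forces the level-set structure to match a power diagram exactly, and I would make sure to verify that passing from $G$ to $\tilde G$ does not enlarge any level set beyond $\Gamma_r$ (it cannot, since $a_i = \tilde G$ forces $a_i = G$ there too, hence $d_i\in\partial G_t$, hence $r_i\in\Gamma(t)$).
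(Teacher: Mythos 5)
Your approach is genuinely different from the paper's, and it has a gap worth naming. The paper proves both directions by pure algebra on the affine score itself: write $\AS{r,t}=\inprod{x_r,t}+c_r$, complete the square, and observe that $r\in\argsup_{r'}\AS{r',t}\iff r\in\argmin_{r'}\{\|p_{r'}-t\|^2-w_{r'}\}$ for $p_r=x_r/2$, $w_r=\|p_r\|^2+c_r$. No convex function $G$ and no subgradient appears at any point. Your proof instead routes through Corollary~\ref{cor:props-elicitable-D} and the identification of level sets with subgradient sets of the max-of-affine function $\tilde G$. That route is more in the spirit of the convex-analysis machinery built up in Section~\ref{sec:properties}, but it is strictly heavier, and it introduces a dependence that is not in the theorem statement.

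The concrete problem is non-redundancy. Both Theorem~\ref{thm:prop-char} and Corollary~\ref{cor:props-elicitable-D} assume $\Gamma$ is non-redundant, whereas Theorem~\ref{thm:power-diag} does not, and a power diagram can legitimately have empty or nested cells. Concretely, in $\reals$ take sites $p_1=0$, $p_2=1$, $p_3=2$ with weights $w_1=w_3=0$ and $w_2=-2$: then $\mathrm{cell}(p_2)=\emptyset$, the induced $\Gamma$ has $\Gamma_{r_2}=\emptyset$ (hence redundant), yet it is still elicitable by $\AS{r,t}=2\inprod{p_r,t}+w_r-\|p_r\|^2$. In this example your claimed identity $\mathrm{cell}_{P,w}(p_i)=\{t:\inprod{p_i,\cdot}\in\partial G_t\}$ fails at $t=1$: since $p_2$ lies in the segment $[p_1,p_3]$ we have $\inprod{p_2,\cdot}\in\partial G_1=\conv\{\inprod{p_1,\cdot},\inprod{p_3,\cdot}\}$, but $1\notin\mathrm{cell}(p_2)$. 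So the converse as you wrote it does not prove elicitability for such diagrams, and the forward direction cannot even invoke Corollary~\ref{cor:props-elicitable-D}. You correctly flag non-redundancy, but you file it as ``bookkeeping''; it is in fact load-bearing for your subgradient route. The fix is to abandon the subgradient picture and argue exactly as the paper does: for ``power diagram $\Rightarrow$ elicitable,'' define $\AS{r,t}:=a_r(t)$ and check directly that $\argsup_r\AS{r,t}=\{r:t\in\mathrm{cell}(p_r)\}=\Gamma(t)$; for ``elicitable $\Rightarrow$ power diagram,'' take any eliciting $\ASS$, write $\AS{r,t}=\inprod{x_r,t}+c_r$, and read off sites and weights. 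That direct argument is both shorter and needs no hypothesis on $\Gamma$ beyond $\R$ being finite.
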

\begin{proof}
  Let us examine the condition that $t$ is an element of $\mathrm{cell}_{P,w}(p_i)$ for some power diagram $D(P,w)$:
  \begin{align}
    t \in \mathrm{cell}_{P,w}(p_i)
    & \iff i \in \argmin_j\left\{ \|p_j - t\|^2 - w_j \right\} \nonumber \\
    & \iff i \in \argmin_j\left\{ \|p_j\|^2 - 2\inprod{p_j,t} - w_j \right\}.
    \label{eq:properties-5}
  \end{align}
  Note that eq.~\eqref{eq:properties-5} is affine in $t$.  Now given some $D = D(P,w)$ with index set $\R$, we simply let $\AS{r,t} = 2\inprod{p_r,t} + w_r - \|p_r\|^2$.  By~\eqref{eq:properties-5} we immediately have $r \in \argsup_{r'} \AS{r',t} \iff t \in \mathrm{cell}_{P,w}(p_r)$, as desired.

  Conversely, let an \scorename \ASS\ eliciting $\Gamma$ be given.
  By the Fréchet--Riesz representation theorem, we may write $\AS{r,t} = \inprod{x_r,t} + c_r$ for $x_r \in \V$ and $c_r \in \reals$.
  Letting $p_r = x_r/2$ and $w_r = \|p_r\|^2 + c_r$, we see by~\eqref{eq:properties-5} again that $\Gamma_r = \mathrm{cell}(p_r)$ of the diagram $D(\{p_r\},w)$.
  Hence, $\Gamma$ is a power diagram.
\end{proof}

We have now seen what kinds of finite-valued properties are elicitable, but how can we elicit them?  More precisely, as the proof above gives sufficient conditions, what are all ways of eliciting a given power-diagram?  In general, it is difficult to provide a ``closed form'' answer to this question, so we restrict to the \emph{simple} case, where essentially the cells of a power diagram are as constrained as possible.

\begin{definition}[\cite{aurenhammer1987recognising}]
  \label{def:properties-simple}
  A \emph{$j$-polyhedron} is the intersection of dimension $j$ of a finite number of closed halfspaces of $\V = \reals^d$, where $0 \leq j \leq d$.  A \emph{tiling} $C$ in $\V$ is a covering of $\V$ by finitely many $j$-polyhedra, called \emph{$j$-faces} of $C$, whose (relative) interiors are disjoint.  If furthermore their non-empty intersections are faces of $C$ then $C$ is a \emph{cell complex}.   A cell complex $C$ is called \emph{simple} if each of its $j$-faces is in the closure of exactly $(d - j + 1)$ $d$-faces (cells).
\end{definition}

\begin{theorem}
  \label{thm:props-finite-score-char}
  Let $\V = \reals^d$ and let finite-valued, elicitable, simple property $\Gamma : \T \toto \R$ be given.
  Then there exist points $\{p_r\}_\R \subseteq \V$ such that the following holds: for any \scorename $\ASpropdef$ eliciting $\Gamma$, there exist $\alpha > 0$, $p_0 \in \V$, and $w \in \reals^\R$ such that
    \begin{equation}
      \label{eq:finite-char}
      \AS{r,t}
      = 2\inprod{\alpha p_r + p_0,t} - \|\alpha p_r + p_0\|^2 + w_r~,
    \end{equation}
    and conversely, for all such $\alpha$ and $p_0$ there exists $w \in \reals^\R$ making \ASS\ in eq.~\eqref{eq:finite-char} elicit $\Gamma$.
\end{theorem}
\begin{proof}
  A result of Aurenhammer for simple cell complexes, given in Lemma 1 of~\cite{aurenhammer1987power} and the proof of Lemma 4 of~\cite{aurenhammer1987criterion}, states the following: given sites $P$ and $P'$ and weights $w$, there exist weights $w'$ such that $D(P',w') = D(P,w)$ if and only if $P'$ is a homothet (translated and positively scaled copy) of $P$.  We simply apply this fact to the proof of Theorem~\ref{thm:power-diag}.
\end{proof}

\begin{figure}[htb]
  \centering
  \includegraphics[width=0.6\textwidth]{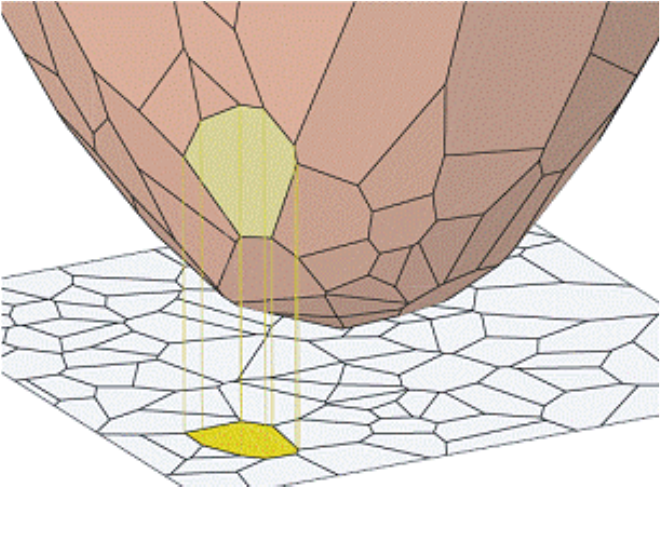}\\[-30pt]
  {\tiny \hspace{0.4\textwidth} Image credit: Camille Wormser}
  \caption{A consumer surplus function $G$ and its corresponding partition of the type space, $\Gamma$.  The proof of Theorem~\ref{thm:power-diag} leverages the fundamental relationship between projections of convex functions and power diagrams.}
  \label{fig:property-projection}
\end{figure}

See Appendix~\ref{sec:breg-vor} for a discussion about Bregman Voronoi digrams and the role of $\|\cdot\|^2$ in Theorem~\ref{thm:power-diag}.

\paragraph{Detecting elicitable finite properties}
As a practical matter, it is natural to ask if we can efficiently determine whether a given finite-valued property $\Gamma$ is elicitable.
By Theorem~\ref{thm:power-diag}, we need only test whether the cells $C = \{\Gamma_r\}_{r\in \R}$ form a power diagram.
For the \emph{simple} case, Aurenhammer gives the ``Orthogonal Dual'' algorithm for this task; see \sect~2.2 of~\cite{aurenhammer1987recognising} and comments thereafter.
The orthogonal dual algorithm assumes that the cells are stored in an \emph{incidence lattice}, with nodes for each face of $C$, and edges when faces are incident (a $j$-dimensional face which contains a $(j-1)$-dimensional face).
The runtime of the algorithm is $O(m)$, where $m$ is the number of facets (faces of dimension $d-1$).
More generally, Borgwardt and Frongillo~\citeyear{borgwardt2019power} present a weakly polynomial-time algorithm to detect power diagrams in the general case, via a simple linear program.

\subsection{Finite Properties in Mechanism Design}
\label{sec:saksyu}

Mechanisms with a finite set of allocations are common.  Carroll~\citeyear{carroll2012when} examines them and observes they give rise to polyhedral typespaces.  Theorem~\ref{thm:power-diag} strengthens this characterization to power diagrams, which rules out polyhedral examples such as the one shown in Figure~\ref{fig:polyhedral}.  In particular, the example of the left of the figure fails to be a power diagram because all power diagrams in $\reals^d$ are cell complexes~\cite{aurenhammer1987criterion}, while it is merely a tiling as the intersection of the $a_1$ and $a_2$ cells is not a face of the $a_1$ cell.

Suppose we are in a such a mechanism design setting with a finite set of allocations $\X$ and we have picked an allocation rule $a$.
Under what circumstances is $a$ implementable, i.e., when is there a payment rule that makes the resulting mechanism truthful?
For convex type sets, Saks and Yu~\citeyear{saks2005weak} showed that the following condition is necessary and sufficient.

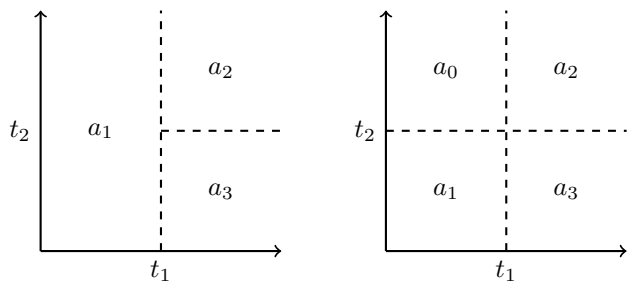
\begin{figure}
  \centering
  \begin{tabular}{ccc}
    \begin{tikzpicture}[thick,scale=0.8]
      \path[->] (0,0) edge node[left]{$t_2$} (0,4) edge node[below]{$t_1$} (4,0);
      \draw[dashed] (2,0) -- (2,4) (2,2) -- (4,2);
      \node[anchor=center] at (1,2) {$a_1$};
      \node[anchor=center] at (3,1) {$a_3$};
      \node[anchor=center] at (3,3) {$a_2$};
    \end{tikzpicture}
    & \quad &
    \begin{tikzpicture}[thick,scale=0.8]
      \path[->] (0,0) edge node[left]{$t_2$} (0,4) edge node[below]{$t_1$} (4,0);
      \draw[dashed] (2,0) -- (2,4) (0,2) -- (4,2);
      \node[anchor=center] at (1,1) {$a_1$};
      \node[anchor=center] at (1,3) {$a_0$};
      \node[anchor=center] at (3,1) {$a_3$};
      \node[anchor=center] at (3,3) {$a_2$};
    \end{tikzpicture}
\end{tabular}
  \caption{An allocation rule which cannot be implemented, for any distinct choices of $a_i$ (left), and a rule which could be implemented for appropriate choices of $a_i$ (right).}
  \label{fig:polyhedral}
\end{figure}

\begin{definition}
Allocation rule $a$ satisfies {\em weak monotonicity} (WMON) if $a(t) \cdot (t' - t) \leq a(t') \cdot (t'-t)$ for all $t,t' \in \T$.
\end{definition}
From Theorem~\ref{thm:main-char}, we know that $a$ being implementable means that there exists a $G$ such that $a$ is a selection of its subgradients. 
But this is equivalent to saying that the property $\Gamma(t) = \X \cap dG_t$ is subgradient-elicitable.  This gives us a new proof of this theorem by showing that WMON characterizes power diagrams.  In particular, we can leverage the following characterization of power diagrams due to Aurenhammer~\citeyear{aurenhammer1987criterion}.  This result assumes that $a$ is defined on all of $\V = \reals^d$, as until recently power diagrams have not been studied on restricted domains.
However, recent results for restricted domains imply that Aurenhammer's argument generalizes in a straightforward way~\cite{borgwardt2019power}.
For completeness, we provide a sketch of the proof for the restricted version.

\begin{definition}
  \label{def:properties-simple-restricted}
A \emph{tiling} $C$ of a convex set  $\T \subseteq \V$ is a covering of $\T$ by finitely many $j$-polyhedra, called \emph{$j$-faces} of $C$, whose (relative) interiors when restricted to $\T$ are disjoint.
\end{definition}

\begin{theorem}[\cite{aurenhammer1987criterion}]
  \label{thm:orthogonal-dual}
  Let $C$ be a tiling of a convex set $\T \subseteq \V = \reals^d$.
  
  Then $C$ is the restriction to $\T$ of a power diagram defined on all of $\V$, if and only if there exists a point-set $\{p_1,\ldots,p_n\}$ satisfying,
  \begin{enumerate}
  \item Orthogonality: For $Z_i \neq Z_j$, the line $L$ that contains $p_i$ and $p_j$ (and is directed from $p_i$ to $p_j$) is orthogonal to each face common to $Z_i$ and $Z_j$.
  \item Orientation: Any directed line that can be obtained by translating $L$ and that
    intersects $Z_i$ and $Z_j$ first meets $Z_i$.
  \end{enumerate}
\end{theorem}

\begin{proof}
  One direction is trivial: if $C$ is a restriction of a power diagram to $\T$ then by Aurenhammer's original theorem the unrestricted power diagram has such a point set which satisfies orthogonality and orientation on all of $\V$ and thus also on $\T$.
  For the other direction,
Borgwardt and Frongillo~\citeyear{borgwardt2019power} show that a tiling of $\T$ is a restriction of a power diagram if and only if the following LP is feasible:

\vspace{\abovedisplayskip}
\noindent
\begin{tabular*}{\linewidth}{@{} l @{\extracolsep{\fill}} c @{\extracolsep{\fill}} r @{}}
  &
  $\begin{array}{rclcl}
     \lambda_{ij} \cdot a_{ij}  & = & p_i-p_j &  & \,\,\,\, \forall i \leq k, \forall j \in J_i\\
     \lambda_{ij}\cdot \gamma_{ij} & = & \gamma_j- \gamma_i & &   \,\,\,\, \forall i \leq k, \forall j \in J_i\\
     \lambda_{ij} & > & 0 & &  \,\,\,\, \forall i \leq k, \forall j \in J_i
   \end{array}$
                                                 &
\end{tabular*}
\vspace{\belowdisplayskip}

\noindent
Here $i,j \in \{1,\ldots,n\}$ index into the cells, $J_i$ is the indices of cells adjacent to cell $i$, and the cells are given by constants $a_{ij} \in \reals^d$, $\gamma_{ij}\in\reals$, so that the $i$th cell is defined as $\{t\in\T \mid a_{ij} \cdot t \leq \gamma_{ij} \; \forall j\in J_i\}$.
Thus, the variables are the sites $\{p_1,\ldots,p_n\} \subseteq \reals^d$ and pseudo-weights $\{\gamma_1,\ldots,\gamma_n\} \subseteq \reals$ which are in bijection with the true weights.

In our setting, by contrast, we are given the $p_i$ along with the $a_{ij}$ and $\gamma_{ij}$, and need only find real numbers $\gamma_i$ and $\lambda_{ij}$ for which the program is feasible. By orthogonality, there is a unique choice of $\lambda_{ij}$ satisfying the first constraint.  Furthermore, by orientation it is strictly positive, satisfying the third constraint.
For the second constraint, for any pair $(i,j)$ with $j\in J_i$ and choice of $\gamma_i$ there is a unique $\gamma_j$ satisfying the constraint.  Establishing the existence of a globally consistent set of choices is the heart of Aurenhammer's argument.  In particular, he shows that if $i$, $j$, $k$ share a vertex of $C$ then for arbitrary $\gamma_i$ the unique choices of $\gamma_j$ and $\gamma_k$ which satisfy the second constraint for $(i,j)$ and $(i,k)$ also satisfy it for $(j,k)$.  Global consistency then follows by a simple inductive construction.  Start by choosing a cell $i$ and arbitrary $\gamma_i$.  At each step we assign some $\gamma_j$.  If there is a $j$ for which $\gamma_j$ is unassigned and $j$ has a vertex which is shared with two assigned cells then, per Aurenhammer's argument, we can assign $\gamma_j$ consistent with all cells assigned so far.  Otherwise every unassigned cell has a most a single face in common with a single assigned cell and therefore we can choose one which has such a face and it can trivially be assigned consistently.
\end{proof}

\begin{theorem}
\label{thm:saksyu}
A tiling $C$ is a power diagram with sites $\{p_1,\ldots,p_n\}$ if and only if for all $t \in Z_i$ and $t' \in Z_j$ we have
$p_i \cdot (t' - t) \leq p_j \cdot (t'-t)$ (i.e. $C$ satisfies WMON).
\end{theorem}
\begin{proof}
If $C$ is a power diagram, then by definition
\begin{align*}
2 p_i \cdot t - w_i &\geq 2 p_j \cdot t - w_j\\
2 p_j \cdot t' - w_j &\geq 2 p_i \cdot t' - w_i.
\end{align*}
Adding these shows $C$ satisfies WMON.

Now suppose $C$ satisfies WMON. We show orthogonality and orientation.
For orthogonality, let $t,t' \in Z_i \cap Z_j$.  Then
$p_i \cdot (t' - t) = p_j \cdot (t'-t)$, or $(p_i-p_j) \cdot (t' - t) = 0$.
Thus, the face is orthogonal to $L$.
For orientation, let $t \in Z_i$ and $t' \in Z_j$ be on such a translated $L$.
That is, we can write $t' = t + c (p_j - p_i)$ for some $c \in \reals$.
By WMON,
$(p_j - p_i) \cdot (t'-t) \geq 0$, or $c (p_j - p_i) \cdot (p_j - p_i) \geq 0$.  Thus $c \geq 0$.  Therefore such a translated $L$ first meets $Z_i$.
\end{proof}

\begin{corollary}[\cite{saks2005weak}]
\label{cor:saks-yu}
If $\X$ is finite, $\T$ is convex, and $a$ satisfies WMON, then $a$ is implementable.
\end{corollary}
\begin{proof}
In order to apply Theorems~\ref{thm:power-diag} and~\ref{thm:saksyu}, it remains to show that an allocation rule $a$ satisfying WMON further implies that it defines a tiling.  This follows by a straightforward geometric argument that has been used in a number of previous proofs (see, e.g., Lemma 4.2 of~\cite{archer2008truthful}).  For completeness, provide it here.

Let $x \in \X$ be given.  We can define a polyhedron $P_x$ associated with $x$ by the intersection of the constraints $t \cdot (x-y) \geq  \inf_{t' s.t. a(t')=x} t' \cdot (x-y)$ for all $y \in X$.  By WMON, $\inf_{t' s.t. a(t')=x} t' \cdot (x-y) \geq \sup_{t s.t. a(t)=y} t \cdot (x-y)$, so for any distinct $x,y \in \X$, $P_x$ and $P_y$ are separated by the hyperplane  $t \cdot (x-y) =  \inf_{t' s.t. a(t')=x} t' \cdot (x-y)$, so their (relative) interiors are disjoint.  By construction $t \in P_{a(t)}$ for all $t$, which implies that these polyhedra cover $\T$.
\end{proof}

\section{Discussion}
\label{sec:discussion}

We have presented a model of truthful elicitation which generalizes and extends both mechanisms and scoring rules.  On the mechanism design side, we have seen how our framework provides simpler, more general, or more constructive proofs of a number of known results about implementability and revenue equivalence, some of which lead to new results about scoring rules.  On the scoring rules side, we have provided the first characterization for scoring rules for non-convex sets of probability distributions.  We have also extended our model to eliciting a property of the agent's private information.  This has been studied for specific cases in the scoring rules literature, but we have provided the first general characterization.  We also show how results about power diagrams in the scoring rules literature lead to a new proof of the Saks-Yu result in mechanism design.

Our analysis makes use of the fact that $\AS{t',t}$ is affine in $t$ to ensure that $G(t) = \sup_{t'} \AS{t',t}$ is a convex function.  However, this property continues to hold if $\AS{t',t}$ is instead a convex function of $t$.  Thus, a natural direction for future work is to investigate characterizations of convex scores.  While mechanisms can always be represented as affine functions by taking the types to be functions from allocations to $\reals$, it may be more natural to treat the type as a parameter of a (convex) utility function.  While many such utility functions are affine (e.g. dot-product valuations), others such as Cobb-Douglas functions are not.  Berger, M\"uller, and Naeemi~\shortcite{berger2009characterizing,berger2010path} have investigated such functions and given characterizations that suggest a more general result is possible.  Another potential application is scoring rules for alternate representations of uncertainty, several of which result in a decision maker optimizing a convex function~\cite{halpern2003reasoning}.

In one sense getting such a characterization is straightforward.  In the affine case we want $\AS{t',t}$ to be an affine function such that $\AS{t',t} \leq G(t)$ and $\AS{t',t'} = G(t')$.  Since we have fixed its value at a point, the only freedom we have is in the linear part of the function, and being such a linear function is exactly the definition of a subgradient.  So while our characterization of \scorenames is in some sense vacuous, it is also powerful in that it allows us to make use of the tools of convex analysis.  A similarly vacuous characterization is possible for the convex case: $\AS{t',t}$ is a convex function such that $\AS{t',t} \leq G(t)$ and $\AS{t',t'} = G(t')$.  The challenge is to find a way to state it that is useful and naturally handles constraints such as those imposed by the form of a utility function.

Many questions in the literature on properties remain open.  Most notable is the characterization of elicitable nonlinear and multidimensional properties --- the single dimensional case is covered in~\cite{lambert2008eliciting} and the linear vector-valued case in~\cite{abernethy2012characterization}.  We hope that the results and intuition from Section~\ref{sec:properties} will yield a useful characterization in this case.  Subsequent to this work we have made some partial progress on this approach~\cite{frongillo2015vector}. Another interesting direction is for non-functional properties: aside from the finite $R$ case, all work in the literature to our knowledge assumes that $\Gamma$ is a function (having a single correct report for each type).  The generality of Theorem~\ref{thm:prop-char} may prove useful in exploring non-functional settings as well.  A result requiring few regularity conditions on $\Gamma$ would be useful in domains such as statistics where natural properties like the median cannot in general be expressed as functions.

 Theorem~\ref{thm:power-diag} shows that scoring rules for finite properties are essentially equivalent the weights and points that induce a power diagram.  As power diagrams are known to be connected to the spines of amoebas in algebraic geometry, aspects of toric geometry used by string theorists, and tropical hypersurfaces in tropical geometry~\cite{van_manen2005power}, there may be useful characterization results in those fields as well.  The last is particularly suggestive given the recent use of tropical geometry techniques in mechanism design~\cite{baldwin2012tropical}.

While our examples have focused on mechanism design and scoring rules, another interesting direction to pursue is other settings where our results may be applicable.  One natural domain
is the literature on M-estimators in machine learning, statistics and economics.  Essentially, this literature takes a loss function (i.e. a scoring rule) and asks what
it elicits.  For example, the mean is an M-estimator induced by the squared error loss function.  Some work in this literature (e.g.~\cite{negahban2010unified}) requires that the loss function satisfy certain conditions, and our results may be useful in characterizing and supplying such loss functions.

\appendix

\section{Convex Analysis Primer}
\label{sec:cvx-primer}

In this appendix, we review some facts from convex analysis that are used in the paper.

\begin{fact}
Let $\{f_t \in \affine(\V\to\reals)\}_{t\in\T}$ be a parameterized family of affine functions.  Then $G(t) = \sup_{t' \in T} f_{t'}(t)$ is convex as the pointwise supremum of convex functions.
\end{fact}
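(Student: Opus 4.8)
This is the very first fact in the convex analysis primer, and it is essentially a one-line verification, so the plan is simply to unfold the definition of convexity and push the supremum through. The key observation is that each affine map $f_{t'}$ is in particular convex, and a pointwise supremum of convex functions is convex; all that remains is to record why.

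First I would fix $t_1, t_2 \in \T$ and $\lambda \in [0,1]$, and write $t_\lambda = \lambda t_1 + (1-\lambda) t_2$. For each $t' \in \T$, affinity of $f_{t'}$ gives the exact identity $f_{t'}(t_\lambda) = \lambda f_{t'}(t_1) + (1-\lambda) f_{t'}(t_2)$, which is in particular $\leq \lambda G(t_1) + (1-\lambda) G(t_2)$ since $f_{t'}(t_i) \leq G(t_i)$ for $i = 1,2$ by definition of $G$ as a supremum. Taking the supremum over $t' \in \T$ on the left-hand side then yields $G(t_\lambda) \leq \lambda G(t_1) + (1-\lambda) G(t_2)$, which is precisely convexity of $G$.

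There is essentially no obstacle here; the only points that warrant a word of care are (i) that one should observe convexity is a pointwise notion so the domain $\T$ (whether or not it is itself convex) causes no trouble — one only needs $t_\lambda$ to lie wherever $G$ is being evaluated, which in the intended applications it does — and (ii) if one wishes to allow $G$ to take the value $+\infty$, the inequality $f_{t'}(t_\lambda) \leq \lambda G(t_1) + (1-\lambda)G(t_2)$ still holds trivially under the usual conventions for arithmetic in $\extreals$, so the argument is unchanged. Thus the statement follows immediately, and no deeper machinery is needed.
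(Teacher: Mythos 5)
Your argument is correct, but it takes a different route from the paper's. The paper proves convexity structurally via epigraphs: a function is convex iff its epigraph is a convex set, the epigraph of a pointwise supremum is the intersection of the individual epigraphs, and an intersection of convex sets (the epigraphs of the affine $f_{t'}$, each a half-space) is convex. You instead verify the convexity inequality directly: fix $t_1, t_2, \lambda$, use the exact affine identity $f_{t'}(t_\lambda) = \lambda f_{t'}(t_1) + (1-\lambda) f_{t'}(t_2)$, bound each term by $G$, and then take the supremum over $t'$. Both are standard one-line arguments for this classical fact, and neither has a real advantage here; the epigraph proof is more geometric and generalizes verbatim when the $f_{t'}$ are merely convex rather than affine (which the paper's phrasing ``pointwise supremum of convex functions'' already anticipates), whereas your computation is more elementary and makes the role of the affine identity explicit. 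Your side remarks about the domain $\T$ and extended-real values are sensible and harmless, though for the extended-real case one would more precisely say that the supremum may take the value $+\infty$ while each $f_{t'}$ is finite by hypothesis, so the only arithmetic convention needed is that a finite quantity is $\leq +\infty$.
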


This follows because convex functions are those with convex epigraphs.  The epigraph of this supremum is the intersection of the epigraphs of the individual functions, which is a convex set as the intersection of convex sets.

\begin{fact}
$d : \reals \to \reals $ is a selection of subgradients of a convex function on $\reals$ if and only if it is monotone non-decreasing.
\end{fact}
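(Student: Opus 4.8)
The claim is the standard fact that $d:\reals\to\reals$ is a selection of subgradients of a convex function on $\reals$ if and only if $d$ is monotone non-decreasing. I would prove both directions directly from the subgradient inequality~\eqref{eq:subgradient}, plus an explicit integration for the converse.

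\textbf{Necessity ($\Rightarrow$).} Suppose $d$ is a selection of subgradients of some convex $G:\reals\to\reals$, i.e. $d(t)\in\partial G(t)$ for every $t$. Fix $s<t$. Applying the subgradient inequality at $s$ with the point $t$ gives $G(t)\geq G(s)+d(s)(t-s)$, and applying it at $t$ with the point $s$ gives $G(s)\geq G(t)+d(t)(s-t)$. Adding these two inequalities, the $G$ terms cancel (they are finite by hypothesis), leaving $0\geq d(s)(t-s)+d(t)(s-t) = (d(s)-d(t))(t-s)$. Since $t-s>0$, this forces $d(s)\leq d(t)$, so $d$ is monotone non-decreasing. (This is exactly the two-point argument already used in Lemma~\ref{lem:subgradients} and in the proof of Theorem~\ref{thm:saksyu}, specialized to $\reals$.)

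\textbf{Sufficiency ($\Leftarrow$).} Suppose $d:\reals\to\reals$ is monotone non-decreasing. Define $G(t)=\int_0^t d(u)\,du$; this is well-defined and finite for every $t\in\reals$ because a monotone function is Riemann-integrable on every bounded interval. I claim $G$ is convex with $d(t)\in\partial G(t)$ for all $t$. For the subgradient inequality, fix $t$ and let $t'$ be arbitrary. If $t'>t$, monotonicity gives $G(t')-G(t)=\int_t^{t'} d(u)\,du\geq \int_t^{t'} d(t)\,du = d(t)(t'-t)$; if $t'<t$, then $G(t')-G(t)=-\int_{t'}^t d(u)\,du\geq -\int_{t'}^t d(t)\,du = d(t)(t'-t)$ (again using $d(u)\leq d(t)$ on $[t',t]$); and the case $t'=t$ is trivial. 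Hence $G(t')\geq G(t)+d(t)(t'-t)$ for all $t'$, which is~\eqref{eq:subgradient}. Since $G$ admits a subgradient at every point, $G$ is convex, and by construction $\{d(t)\}_{t\in\reals}$ is a selection of its subgradients.

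\textbf{Main obstacle.} There is essentially no deep obstacle here; the only points requiring a little care are the integrability of a monotone function (so that $G$ is well-defined and real-valued, matching the paper's convention $G(\reals)\subseteq\reals$) and the sign bookkeeping in the $t'<t$ case of the sufficiency argument. One could alternatively avoid integration entirely and build $G$ as the pointwise supremum $G(t)=\sup_{s}\big(G(s_0)+d(s)(t-s)\big)$ of the affine minorants, using monotonicity of $d$ to check this is finite, but the explicit antiderivative is cleaner and is the construction alluded to in Corollary~\ref{cor:myerson} (where $p(t)=tf(t)-\int_0^t f$ appears).
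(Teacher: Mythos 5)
Your proof is correct. Note that the paper does not actually prove this fact---it simply cites Rockafellar's Theorem 24.3 for a slightly more general statement---so there is no ``paper approach'' to compare against; your self-contained argument is the standard one and is sound in both directions. The necessity direction is the usual two-point subgradient cancellation, and the sufficiency direction via the antiderivative $G(t)=\int_0^t d$ with the monotonicity-based verification of the subgradient inequality is exactly how Rockafellar handles it. Your final inference (``$G$ admits a subgradient at every point, hence is convex'') is valid because the subgradient inequalities exhibit $G$ as the pointwise supremum of the affine minorants $t'\mapsto G(t)+d(t)(t'-t)$, which is convex; this is also a fact the paper records separately in Appendix~\ref{sec:cvx-primer}.
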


See~\cite[Theorem 24.3]{rockafellar1997convex} for a proof of a slightly more general statement.

\begin{fact}
For convex $G$ on convex $\T$, $\{dG_t \in \linear(\V\to\reals)\}_{t\in\T} \selsubgrad{G}{\T}$ satisfies path independence.
\end{fact}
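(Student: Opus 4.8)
The plan is to reduce path independence to the preceding one-dimensional Fact — that a selection of subgradients of a convex function on $\reals$ is monotone non-decreasing — by restricting $G$ to line segments. Path independence asks that the integral of the selection $\{dG_t\}$ along a piecewise-linear path in $\T$ depend only on the path's endpoints. Since $\T$ is convex every such path lies in $\T$, and since concatenation of paths adds their integrals while the value I will claim telescopes, it suffices to prove that for all $p,q\in\T$ the integral along the single segment from $p$ to $q$ equals $G(q)-G(p)$.

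To do this I would fix $p,q\in\T$, set $\gamma(s)=p+s(q-p)$ for $s\in[0,1]$ (so $\gamma(s)\in\T$ by convexity), and let $g(s):=G(\gamma(s))$, a finite convex function on $[0,1]$. The key observation is that $h(s):=dG_{\gamma(s)}(q-p)$ is itself a selection of subgradients of $g$: because $dG_{\gamma(s)}$ is linear and $\gamma(s')-\gamma(s)=(s'-s)(q-p)$, the subgradient inequality~\eqref{eq:subgradient} for $G$ at $\gamma(s)$ gives, for every $s'\in[0,1]$,
\[
g(s') \;=\; G(\gamma(s')) \;\ge\; G(\gamma(s)) + dG_{\gamma(s)}\bigl(\gamma(s')-\gamma(s)\bigr) \;=\; g(s) + (s'-s)\,h(s),
\]
which is exactly the one-dimensional subgradient inequality for $g$ at $s$. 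Invoking the preceding Fact, $h$ is then monotone non-decreasing on $[0,1]$, hence measurable and — being monotone on a bounded interval with a finite primitive — integrable.

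It then remains to evaluate $\int_0^1 h(s)\,ds$. Since $g$ is finite and convex it is locally Lipschitz on $(0,1)$ and differentiable off a Lebesgue-null set, and at each point of differentiability its unique subgradient equals $g'$, so $h=g'$ almost everywhere; the fundamental theorem of calculus then gives $\int_{\varepsilon}^{1-\varepsilon}h = g(1-\varepsilon)-g(\varepsilon)$ for every $\varepsilon>0$. Because the subgradients $dG_p$ and $dG_q$ exist, $g$ is moreover continuous on the closed interval $[0,1]$ — convexity forces $\limsup_{s\to 0^+}g(s)\le g(0)$ while $dG_p\in\partial g(0)$ forces $\liminf_{s\to 0^+}g(s)\ge g(0)$, and symmetrically at $1$ — so letting $\varepsilon\to 0$ yields $\int_0^1 h = g(1)-g(0)=G(q)-G(p)$. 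Summing over the segments of an arbitrary piecewise-linear path then gives path independence.

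I expect the one real obstacle to be this final limit at the endpoints: when $p$ or $q$ lies on the relative boundary of $\T$, $g'$ can be unbounded near $0$ or $1$, so one must check both that $h=g'$ stays integrable up to the endpoints (it does, since $\int_0^1|g'| = g(0)+g(1)-2\inf_{[0,1]}g < \infty$) and that $g$ has no jump at the endpoints — which is precisely the point at which the hypothesis that $\{dG_t\}$ is a genuine subgradient selection (not merely defined on the interior) is used. Everything else is a routine reduction to the one-dimensional case.
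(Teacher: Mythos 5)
Your proposal is correct and follows essentially the same route as the paper: restrict $G$ to a line segment, reduce to the one-dimensional statement that the integral of a subgradient selection equals the difference of endpoint values, and sum over segments. The paper dispatches the one-dimensional step by citing Rockafellar's Corollary 24.2.1, whereas you prove it directly via $h=g'$ a.e. and the fundamental theorem of calculus, with a careful treatment of endpoint integrability and continuity (the latter secured precisely by the hypothesis that subgradients exist at $p$ and $q$) that the paper leaves implicit in the citation.
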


Informally, path independence means that integrals of $dG_t$ do not depend on the path through $\T$ chosen (see Equation~\eqref{eq:lineintegral} and following for a formal definition).  Since the restriction of $G$ to a line is a one-dimensional convex function, $G(y) - G(x) = \int_{L_{xy}} dG_t(y-x) dt$~\cite[Corollary 24.2.1]{rockafellar1997convex}.  Summing along the individual lines in a path from $x$ to $y$ gives that the value of the path integral is $G(y) - G(x)$ regardless of the path chosen.

The following is a classic result in convex analysis (cf.~\cite[Thm E.1.4.1]{urruty2001fundamentals}) which we prove for completeness.

\begin{fact}\label{lem:properties-dual-optimal}
  Let $G:\V\to\extreals$ be convex.  
  Then for all $v\in\V, d\in\V^*$,
  \begin{equation*}
    \label{eq:properties-4}
    G^*(d) = d(v) - G(v) \, \iff \, d \in \partial G_v.
  \end{equation*}
\end{fact}
\begin{proof}
  We can simply break down the conditions step by step:
  \begin{align*}
    G^*(d) = d(v) - G(v)
    &\iff v \in \argsup\nolimits_{v'\in \V} \; d(v') - G(v') \\
    &\iff \forall v'\in\V, \; d(v) - G(v) \geq d(v') - G(v')\\
    &\iff \forall v'\in\V, \; G(v') \geq G(v) + d(v'-v),
  \end{align*}
  where in the last step we merely negated and added $d(v')\in\reals$ to both sides.
\end{proof}

\begin{fact}
For any convex function $G$, the set $\partial G^{-1}(d) \defeq \{x \in \dom(G) : d\in\partial G_x\}$ is convex.   
\end{fact}

\begin{proof}
  Let $x,x'\in\partial G^{-1}(d)$; then one easily shows (cf. Lemma~\ref{lem:subgradients}) that $G(x) - G(x') = d(x-x')$.  Now let $\hat x = \alpha x + (1-\alpha) x'$; we have,
  \begin{align}
    G(\hat x)
    &\leq \alpha G(x) + (1-\alpha) G(x') \label{eq:properties-10}\\
    &= \alpha (G(x) - G(x')) + G(x') \nonumber \\
    &= \alpha d(x-x') + G(x') \nonumber \\
    &= d(\hat x-x') + G(x') \label{eq:properties-11} \\
    &\leq G(\hat x), \label{eq:properties-12}
  \end{align}
  where we applied convexity of $G$ in~\eqref{eq:properties-10} and the subgradient inequality for $d$ at $x'$ in~\eqref{eq:properties-12}.  Hence, by eq.~\eqref{eq:properties-11} we have shown $G(\hat x) - G(x') = d(\hat x - x')$, so by Lemma~\ref{lem:subgradients}, $d \in \partial G_{\hat x}$.
\end{proof}

\section{Characterizing Truthful Mechanisms}
\label{sec:convex}

While our theorem provides a characterization of truthful mechanisms in terms of convex consumer surplus functions, this is not always the most natural representation for a mechanism.  In this section, we examine two other approaches to characterizing truthful mechanisms that have been explored in the literature and show that they have insightful interpretations in convex analysis, which allows us to greatly simplify their proofs.  Furthermore, our phrasing of these results is as conditions for a parameterized family of linear functions to be a selection of subgradients of a convex function.  We believe this phrasing converts known results in mechanism design into new results in convex analysis.  It also shows how any such result in convex analysis would give a characterization of implementable mechanisms.  Note that certain results in this section require an assumption that the relevant parameterized families are in fact real-valued, which is natural given our focus on mechanism design.

\subsection{Subgradient characterizations}
\label{sec:subgradient}

From an algorithmic perspective, it may be more natural to focus on the design of the allocation rule $f$.  There is a large literature that focuses on when there exists a choice of payments $p$ to make $f$ into a truthful mechanism (e.g. \cite{saks2005weak,ashlagi2010monotonicity}).  Viewed through our theorem, this becomes a very natural convex analysis question: when is a function $f$ a subgradient of a convex function?
\footnote{More precisely, we want for all $t$ the allocation $f(t)$ to be a subgradient at t.  Equivalently, we can view $f$ as a parameterized family of functions, which is how we state our results.}
Unsurprisingly, the central result in the literature is closely connected to convex analysis.  

\begin{definition}
  A family $\{d_t \in \linear(\V\to\reals)\}_{t\in\T}$ satisfies \emph{cyclic monotonicity (CMON)} if for all finite sets $\{t_0,\ldots,t_k\} \subseteq \T$,
  \begin{equation}
    \label{eq:cmon}
    \sum_{i=0}^k d_{t_i} (t_{i+1}-t_i) \leq 0,
  \end{equation}
  where indices are taken modulo $k+1$.  The weaker condition that~\eqref{eq:cmon} hold for all pairs $\{t_0,t_1\}$ is known as \emph{weak monotonicity (WMON)}.
\end{definition}

A well known characterization from convex analysis is that a function $f$ defined on a convex set is a subgradient of a convex function on that set iff it satisfies CMON~\cite{rockafellar1997convex}.  Rochet's~\shortcite{rochet1987necessary} proof that payments exist to implement $f$ on a possibly non-convex $\T$ iff $f$ satisfies CMON is effectively a proof of a generalization of this theorem.  Rochet notes that his proof is adapted from the one given in Rockafellar's text~\shortcite{rockafellar1997convex} of the weaker theorem where $\T$ is restricted to be convex.  We adapt Rochet's proof to highlight how its core is a construction of $G$.  As we use this basic construction several times, we first analyze it independly.

Given any family $\{d_t\}_{t\in\T}$ of linear functions in $\linear(\V\to\reals)$, define $P_d:\T\times\V\to\extreals$ as follows:
\footnote{Note that the second argument of $P_d$ is from $\V$ rather than $T \subset \V$ because we wish to apply this when, e.g., $t' \in \conv(\T)$.}
\begin{equation}
  \label{eq:p-d}
  P_d(t,t')
  \defeq \;
  \sup_{\substack{k\in\mathbb{N},\; \{t_1, \ldots, t_k\} \subseteq \T\\ t_0=t,\; t_{k+1} = t'}}
  \;\;\sum_{i = 0}^k d_{t_i}(t_{i+1}-t_i).
\end{equation}
One way to interpret  $P_d(t,t')$ is as the length of the shortest path from $t$ to $t'$ in a graph with edge weights determined by $-d$, and in that form has seen extensive use in mechanism design~\cite{vohra2011mechanism}.  We interpret it somewhat differently, as the best lower bound on $G(t') - G(t)$ for an arbitrary convex function $G$ with subgradients $d$ (and infinity if there is no such convex function).  In particular, computing the best lower bound at every point yields a convex function. 

\begin{lemma}
\label{lem:affine-1}
  Let $\{d_t \in \linear(\V\to\reals)\}_{t\in\T}$ be given.
  If $d$ satisfies CMON, then for all $t,t'\in\T$ and all $t''\in\V$, the following hold:
  \begin{enumerate}
  \item $P_d(t,t') + P_d(t',t'') \leq P_d(t,t'')$
  \item $d_t(t''-t) \leq P_d(t,t'')$
  \item $P_d(t,t) = 0$
  \item $P_d(t,t') + P_d(t',t) \leq 0$
  \item $P_d(t,\cdot)$ is convex and real-valued on $\convhull(\T)$, with $d \in \partial P_d(t,\cdot)$ on $\T$
  \end{enumerate}
  Otherwise, $P_d \equiv \infty$ on all inputs.
\end{lemma}

\begin{proof}
If CMON is not satisfied, then there is a cycle $C = t_0, \ldots, t_k, t_0$ with positive sum.  Then for any $t$ and $t'$ the path $t C^j t'$ that consists of starting at $t$, going to $t_0$, going around the cycle $j$ times, then going to $t'$ has a sum that goes to infinity as $j$ goes to infinity.  For the remainder, assume that CMON is satisfied.

  \begin{enumerate}
  \item $P_d(t,t') + P_d(t',t'') \leq P_d(t,t'')$

\begin{align*}
&P_d(t,t') + P_d(t',t'')\\
&=\sup_{\substack{k\in\mathbb{N},\; \{t_1, \ldots, t_k\} \subseteq \T\\ t_0=t,\; t_{k+1} = t'}}
  \;\;\sum_{i = 0}^k d_{t_i}(t_{i+1}-t_i)
+\sup_{\substack{k\in\mathbb{N},\; \{t_1, \ldots, t_k\} \subseteq \T\\ t_0=t',\; t_{k+1} = t''}}
  \;\;\sum_{i = 0}^k d_{t_i}(t_{i+1}-t_i)\\
&=\sup_{\substack{j,k\in\mathbb{N},\; \{t_1, \ldots, t_k\} \subseteq \T\\ t_0=t,\; t_j = t', \; t_{k+1} = t''}}
  \;\;\sum_{i = 0}^k d_{t_i}(t_{i+1}-t_i)\\
&\leq\sup_{\substack{k\in\mathbb{N},\; \{t_1, \ldots, t_k\} \subseteq \T\\ t_0=t,\; t_{k+1} = t''}}
  \;\;\sum_{i = 0}^k d_{t_i}(t_{i+1}-t_i)\\
&=P_d(t,t'')
\end{align*}

  \item $d_t(t''-t) \leq P_d(t,t'')$

Taking $k = 0$ shows that $d_t(t''-t)$ is an element of set over which the supremum is taken.

  \item $P_d(t,t) = 0$

By CMON, $P_d(t,t) \leq 0$.  By claim (2), $d_t(t-t) = 0 \leq P_d(t,t)$.

  \item $P_d(t,t') + P_d(t',t) \leq 0$

By claims (1) and (3), $P_d(t,t') + P_d(t',t) \leq P(t,t) = 0$.

  \item $P_d(t,\cdot)$ is convex and real-valued on $\convhull(\T)$, with $d \in \partial P_d(t,\cdot)$ on $\T$

By CMON, for $t' \in \T$ $P_d(t,t') \leq -d_t(t_0 - t')$.
Thus, $P_d(t,t')$ is finite on $\T$.
$P_d(t,\cdot)$ is a pointwise supremum of convex functions, so is convex.
By convexity, it is also finite on $\convhull(\T)$.
For any $t' \in \T$ and $t'' \in \convhull(\T)$,
\begin{align*}
P_d(t,t') + d_{t'}(t'' -t')
&= d_{t'}(t''-t') +
 \sup_{\substack{k\in\mathbb{N},\; \{t_1, \ldots, t_k\} \subseteq \T\\ t_0=t,\; t_{k+1} = t'}}
 \sum_{i = 0}^k d_{t_i}(t_{i+1}-t_i)\\
&=  \sup_{\substack{k\in\mathbb{N},\; \{t_1, \ldots, t_k\} \subseteq \T\\ t_0=t,\; t_k = t'\;t_{k+1} = t''}} \sum_{i = 0}^{k} d_{t_i}(t_{i+1}-t_i)\\
&\leq  \sup_{\substack{k\in\mathbb{N},\; \{t_1, \ldots, t_k\} \subseteq \T\\ t_0=t,\; t_{k+1} = t''}} \sum_{i = 0}^k d_{t_i}(t_{i+1}-t_i)\\
& = P(t,t''),
\end{align*}
so $d_t$ satisfies~\eqref{eq:subgradient}.

  \end{enumerate}
\end{proof}

Having extracted the construction at the core of Rochet's proof, the rephrasing of his result as a statement about convex functions now follows easily.
\begin{theorem}[Adapted from Rochet~\shortcite{rochet1987necessary}]
\label{thm:cmon}
A family $\{d_t \in \linear(\V\to\reals)\}_{t\in\T}$ satisfies CMON if and only if there exists a convex  $G : \convhull(\T)\to\reals$ such that $\{d_t\}_{t\in\T} \selsubgrad{G}{\T}$.
\end{theorem}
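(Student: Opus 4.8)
\textbf{The forward direction} is immediate. Suppose $\{d_t\}_{t\in\T}\selsubgrad{G}{\T}$ for some convex $G$. For any finite cycle $t_0,\dots,t_k$ (indices mod $k+1$), the subgradient inequality~\eqref{eq:subgradient} gives $d_{t_i}(t_{i+1}-t_i)\le G(t_{i+1})-G(t_i)$ for each $i$; summing around the cycle the right-hand side telescopes to $0$, which is exactly~\eqref{eq:cmon}. So a subgradient selection always satisfies CMON.

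\textbf{For the converse} the plan is to run Rockafellar's ``potential'' construction, which is the content of Rochet's argument. Fix an arbitrary basepoint $t_*\in\T$ and, for each $t\in\T$, set
\[
  G_0(t) \;=\; \sup\Bigl\{ \textstyle\sum_{i=0}^{m-1} d_{t_i}(t_{i+1}-t_i) \;:\; m\ge 1,\ t_0=t_*,\ t_m=t,\ t_1,\dots,t_{m-1}\in\T \Bigr\},
\]
the supremum of path sums of the functionals along finite chains in $\T$ from $t_*$ to $t$. First I would check $G_0$ is real-valued: appending the leg $d_t(t_*-t)$ to any such chain closes it into a cycle through $t_*$, so CMON forces the path sum to be at most $d_t(t-t_*)$, a real number since $d_t\in\linear(\V\to\reals)$; hence $G_0(t)\le d_t(t-t_*)<\infty$. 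The one-leg chain gives $G_0(t)\ge d_{t_*}(t-t_*)>-\infty$. Next, for any $t,t'\in\T$, appending the leg $d_t(t'-t)$ to a chain ending at $t$ produces a chain ending at $t'$, so $G_0(t')\ge G_0(t)+d_t(t'-t)$; thus each $d_t$ satisfies the subgradient inequality for $G_0$ at $t$ on the set $\T$.

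\textbf{Finally} I would extend $G_0$ to a convex real-valued function on $\convhull(\T)$ retaining these subgradients, by defining $G(\hat t)\defeq\sup_{t\in\T}\bigl(G_0(t)+d_t(\hat t-t)\bigr)$. As a pointwise supremum of affine functions it is convex; the subgradient inequality just established shows the supremum at $\hat t\in\T$ is attained at $t$ itself, so $G|_\T=G_0$; convexity together with writing $\hat t=\sum_i\alpha_i t_i$ gives $G(\hat t)\le\sum_i\alpha_i G_0(t_i)<\infty$, while $G(\hat t)\ge G_0(t_*)+d_{t_*}(\hat t-t_*)>-\infty$, so $G:\convhull(\T)\to\reals$; and for $t\in\T$, $\hat t\in\convhull(\T)$ we get $G(\hat t)\ge G_0(t)+d_t(\hat t-t)=G(t)+d_t(\hat t-t)$, i.e. $\{d_t\}_{t\in\T}\selsubgrad{G}{\T}$. (Equivalently, the last step is an application of Theorem~\ref{thm:main-char} to the affine score $\AS{t',t}=G_0(t')+d_{t'}(t-t')$, which is truthful by the inequality proved in the previous paragraph.) The main obstacle is the finiteness of $G_0$: everything else is bookkeeping with chain sums and the subgradient inequality, but boundedness of the potential is precisely where cyclic monotonicity is used and is the crux of the argument.
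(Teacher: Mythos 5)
Your proof is correct and follows essentially the same route as the paper: both directions are identical, and your two-step construction (Rochet's potential $G_0$ on $\T$, then the affine envelope $G(\hat t)=\sup_{t\in\T}\bigl(G_0(t)+d_t(\hat t-t)\bigr)$) unfolds to exactly the paper's one-step definition $G(t)=\sup_{\{t_1,\ldots,t_k\}\subseteq\T,\,t_{k+1}=t}\sum_{i=0}^k d_{t_i}(t_{i+1}-t_i)$, with the same use of cycle-closure for finiteness and chain-appending for the subgradient inequality. The parenthetical appeal to Theorem~\ref{thm:main-char} for the extension step is a nice way to package the bookkeeping, but it is doing the same work.
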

\begin{proof}
Given such a $G$, by \eqref{eq:subgradient} we have $d_{t_i}(t_{i+1} - t_i) \leq G(t_{i+1}) - G(t_i)$.  Summing gives~\eqref{eq:cmon}.
Given such a family $\{d_t\}_{t \in \T}$, fix some $t_0 \in \T$ and set $G:t\mapsto P_d(t_0,t)$.  The result follows from Lemma~\ref{lem:affine-1}(5).
\end{proof}

A number of papers have sought simpler and more natural conditions than CMON that are necessary and sufficient in special cases, e.g.~\cite{saks2005weak,archer2008truthful,ashlagi2010monotonicity}.  These results are typically proven by showing they are equivalent to CMON.  However, it is much more natural to directly construct the relevant $G$.  As an example, we show one such result has a simple proof using our framework.  This particular proof also has the advantage of providing a characterization of the payments that is more intuitive than the supremum in Rochet's construction.

 As in Myerson's~\shortcite{myerson1981optimal} construction for the single-parameter case, we construct a $G$ by integrating over $d_t$.  In particular, for any two types $x$ and $y$ our construction makes use of the line integral
\begin{equation}
\label{eq:lineintegral}
\int_{L_{xy}} d_t(y-x) dt = \int_0^1 d_{(1-t)x + ty}(y-x) dt.
\end{equation}
As Berger et al.~\shortcite{berger2009characterizing} and Ashlagi et al.~\shortcite{ashlagi2010monotonicity} observed, if $\{d_t\}_{t \in \T}$ satisfies WMON and $\T$ is convex, this (Riemann) integral is well defined because it is the integral of a monotone function.  If these line integrals vanish around all triangles (equivalently $\int_{L_{xy}} d_t(y-x) dt + \int_{L_{yz}} d_t(z-y) dt = \int_{L_{xz}} d_t(z-x) dt)$) we say $\{d_t\}$ satisfies {\em path independence}.

\begin{theorem}[adapted from~\cite{muller2007weak}]\label{thm:wmon}
For convex $\T$, a family $\{d_t \in \linear(\V\to\reals)\}_{t\in\T}$ is a selection of subgradients of a convex function if and only if $\{d_t\}_{t \in \T}$ satisfies WMON and path independence.
\end{theorem}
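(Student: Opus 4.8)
The plan is to build $G$ by hand, integrating the candidate subgradients along line segments exactly as in Myerson's single-parameter construction, and then to verify the subgradient inequality directly.

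The forward direction is immediate from tools already in hand. If $\{d_t\}_{t\in\T}\selsubgrad{G}{\T}$ for a convex $G$, then adding the subgradient inequalities $G(t')\ge G(t)+d_t(t'-t)$ and $G(t)\ge G(t')+d_{t'}(t-t')$ gives $d_t(t'-t)\le d_{t'}(t'-t)$, which is WMON (alternatively, WMON follows from Theorem~\ref{thm:cmon} since CMON implies WMON). Path independence is exactly the appendix Fact that $\int_{L_{xy}}d_s(y-x)\,ds=G(y)-G(x)$ for convex $G$ (the restriction of $G$ to a segment being a one-dimensional convex function, cf.\ \cite[Cor.~24.2.1]{rockafellar1997convex}); the three segment integrals around a triangle then telescope to zero.

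For the converse, fix a basepoint $t_0\in\T$ and define $G(t)\defeq\int_{L_{t_0 t}}d_s(t-t_0)\,ds=\int_0^1 d_{(1-s)t_0+st}(t-t_0)\,ds$. I would check three things. (1) Well-definedness: for $s<s'$, WMON applied to the collinear points $a=(1-s)t_0+st$ and $b=(1-s')t_0+s't$ (using $b-a=(s'-s)(t-t_0)$) yields $d_a(t-t_0)\le d_b(t-t_0)$, so the integrand $s\mapsto d_{(1-s)t_0+st}(t-t_0)$ is non-decreasing on $[0,1]$, hence bounded by its endpoint values and Riemann integrable; thus $G:\T\to\reals$. (2) Subgradient inequality: for any $t,t'\in\T$, path independence applied to the triple $(t_0,t,t')$ gives $G(t')=G(t)+\int_{L_{tt'}}d_s(t'-t)\,ds$, and the same monotonicity argument along $[t,t']$ shows that integral is at least its left-endpoint value $d_t(t'-t)$, whence $G(t')-G(t)\ge d_t(t'-t)$. (3) Convexity of $G$: a real-valued function on a convex set that admits a linear subgradient at every point is convex --- averaging the inequalities $G(t)\ge G(t_\lambda)+d_{t_\lambda}(t-t_\lambda)$ and $G(t')\ge G(t_\lambda)+d_{t_\lambda}(t'-t_\lambda)$ at $t_\lambda=\lambda t+(1-\lambda)t'$ with weights $\lambda,1-\lambda$ and using $d_{t_\lambda}(0)=0$ gives $\lambda G(t)+(1-\lambda)G(t')\ge G(t_\lambda)$. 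Together, (1)--(3) produce a convex $G:\T\to\reals$ with $\{d_t\}_{t\in\T}\selsubgrad{G}{\T}$.

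I do not anticipate a real obstacle. The only points needing care are that the segment integral of a selected subgradient genuinely recovers the increment of the one-dimensional convex restriction of $G$ (handled by the cited Fact), and that WMON \emph{across all of $\T$}, not merely pairwise along one fixed line, is what licenses monotonicity of the integrand along every segment. Conceptually, the two hypotheses play complementary and non-interchangeable roles in the converse: WMON makes the defining integral exist and dominate $d_t(t'-t)$, while path independence is precisely what makes $G(t')-G(t)$ equal a single segment integral rather than depending on the route taken from $t_0$.
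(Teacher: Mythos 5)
Your proposal is correct and is essentially the same proof as the paper's: you fix a basepoint, define $G$ by integrating the candidate subgradients along segments, invoke WMON to make the integrand monotone (hence the integral well defined and bounded below by its left-endpoint value), and invoke path independence to turn $G(t')-G(t)$ into a single segment integral. The only cosmetic difference is that you establish the subgradient inequality before convexity (then deduce convexity by averaging the subgradient inequality at the midpoint), whereas the paper proves convexity first via the triangle identity and then the subgradient inequality; both orderings are fine and rest on the same two observations.
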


\begin{proof}
Given a convex function $G$ and selection of subgradients $\{d_t\}$, $\{d_t\}$ satisfies CMON and thus WMON.  Path independence also follows from convexity (Rockafellar~\shortcite{rockafellar1997convex} p. 232).
Now given a $\{d_t\}$ that satisfies WMON and path independence, fix a type $t_0 \in \T$ and define
$G(t') = \int_{L_{t_0t'}} d_t(t' - t_0) dt$ (well defined by WMON as the integral of a monotone function).  Given $x,y,z \in \T$ such that $z = \lambda x + (1 - \lambda) y$, by path independence and the linearity of $d_z$ we have
\begin{align*}
&\lambda G(x) + (1 - \lambda) G(y)\\
&= G(z)+ \lambda \int_{L_{zx}} d_t(x - z)dt + (1- \lambda) \int_{L_{zy}} d_t(y - z)dt\\
&\geq G(z) + \lambda d_z(x - z) + (1 - \lambda) d_z(y - z) = G(z),
\end{align*}
so $G$ is convex.  Similarly, for $x, y \in \T$, $d_t$ satisfies~\eqref{eq:subgradient} because
$$d_x(y - x) \leq \int_{L_{xy}} d_t(y-x)dt = G(y) - G(x). \qedhere$$
\end{proof}

\subsection{Local Characterizations}
\label{sec:local}

In many settings, it is easier to reason about the behavior of a mechanism given small changes to its input rather than arbitrary changes, so several authors have sought to characterize truthful mechanisms using local conditions~\cite{archer2008truthful,berger2009characterizing,carroll2012when}.  
We show in this section how many of these results are in essence a consequence of a more fundamental statement, that convexity is an inherently local property.  For example, in the twice differentiable case it can be verified by determining whether the Hessian is positive semidefinite at each point.  We start with a local convexity result, and use it to show that an \scorename is truthful if and only if it satisfies a very weak local truthfulness property introduced by Carroll~\shortcite{carroll2012when}.  Afterwards we turn to a characterization by Archer and Kleinberg~\shortcite{archer2008truthful} that proved a similar theorem for a different notion of local truthfulness.  Our results (specifically Theorem~\ref{thm:wlsg}) show that these two notions of local truthfulness are equivalent because Archer and Kleinberg's definition corresponds to the property of being a local subgradient, while Carroll's corresponds to the property of being a weak, local subgradient, which we now define.

\begin{definition}
  Let $\T$ be convex. A family $\{d_t \in \linear(\V\to\extreals)\}_{t \in \T}$ is a \emph{weak local subgradient (WLSG)} of a convex function $G:\T\to\extreals$ if for all $t\in\T$ there exists an open neighborhood $U_t$ of $t$ such that for all $t'\in U_t$,
  \begin{equation}
    \label{eq:wlsg}
    G(t) \geq G(t') + d_{t'}(t-t') \quad \text{ and } \quad G(t') \geq G(t) + d_{t}(t'-t).
  \end{equation}
  Furthermore, if for every $s\in\T$, eq.~\eqref{eq:wlsg} holds for all $t,t'\in U_s$, we say $\{d_t\}_{t \in \T}$ is a \emph{local subgradient (LSG)} of $G$.
\end{definition}

We now show that being a WLSG is a necessary and sufficient condition for a family of functions to be a selection of subgradients.  The proof is heavily inspired by Carroll~\shortcite{carroll2012when}.
\begin{theorem}\label{thm:wlsg}
Let $\T$ be convex.
  A family $\{d_t \in \linear(\V \to \extreals )\}_{t \in \T}$ is a selection of subgradients of a convex function $G:\T\to\extreals$ if and only if it is a WLSG of $G$.
\end{theorem}
\begin{proof}[(Adapted from Carrol~\citeyear{carroll2012when})]
  As usual, the forward direction is trivial.  For the other, let $t,t'\in\T$ be given; we show that the subgradient inequality for $d_{t'}$ holds at $t$.  By compactness of $\convhull(\{t,t'\})$, we have a finite set $t_i = \alpha_i t' + (1-\alpha_i) t$, where $0 = \alpha_0 \leq \cdots \leq \alpha_{k+1} = 1$, such that WLSG holds between each $t_i$ and $t_{i+1}$.
(The cover $\{U_s \,|\, s \in \convhull(\{t,t'\})\}$ has a finite subcover.  Take $t_{2j}$ from the subcover and $t_{2j+1} \in U_{t_{2j}} \cap U_{t_{2j+2}}$.)
By the WLSG condition~\eqref{eq:wlsg}, we have for each $i$,
  \begin{align}
    0 \geq \; & G(t_{i+1}) - G(t_i) + d_{t_{i+1}}(t_i-t_{i+1}) \label{eq:borkybork1}
    \\
    0 \geq \; & G(t_i) - G(t_{i+1}) + d_{t_i}(t_{i+1}-t_i). \label{eq:borkybork2}
  \end{align}
  Now using the identity $t_{i+1} - t_i = (\alpha_{i+1} - \alpha_i)(t'-t)$ and adding $\alpha_i/(\alpha_{i+1} - \alpha_i)$ times~\eqref{eq:borkybork1} to $\alpha_{i+1}/(\alpha_{i+1} - \alpha_i)$ times~\eqref{eq:borkybork2}, we have
  \begin{equation}
    \label{eq:borkyborkybork}
    0 \geq G(t_i) - G(t_{i+1}) + \alpha_{i+1} d_{t_i}(t'-t) - \alpha_i d_{t_{i+1}}(t'-t).
  \end{equation}
  Summing~\eqref{eq:borkyborkybork} over $0\leq i\leq k$ gives
  \begin{align*}
    0 & \geq G(t_0) - G(t_{k+1}) + \alpha_{k+1} d_{t_0}(t'-t) - \alpha_0 d_{t_{k+1}}(t'-t),
  \end{align*}
  which when recalling our definitions for $\alpha_i$ and $t_i$ yields the result.
\end{proof}

The WLSG condition translates to an analogous notion in terms of truthfulness, \emph{weak local truthfulness}.
\begin{definition}
  An \scorename is \emph{weakly locally truthful} if for all $t\in\T$ there exists some open neighborhood $U_t$ of $t$, such that truthfulness holds between $t$ and every $t'\in U_t$, and vice versa.  That is,
  \begin{equation}
    \label{eq:weak-local-truthful}
    \forall t\in \T,\; \forall t' \in U_t,\;\; \AS{t',t} \leq \AS{t,t} \;\text{ and }\; \AS{t,t'} \leq \AS{t',t'}.
  \end{equation}
\end{definition}

\begin{corollary}[Generalization of Carroll~\shortcite{carroll2012when}]\label{cor:wlt}
  An \scorename $\ASdef$ for convex $\T$ is truthful if and only if it is weakly locally truthful.
\end{corollary}
\begin{proof}
  Defining $G(t) := \AS{t,t}$, by weak local truthfulness we may write
  \begin{align*}
    G(t) = \AS{t,t} &\geq \AS{t',t} = G(t') + \AS[_\ell]{t',t-t'}\\
    G(t') = \AS{t',t'} &\geq \AS{t,t'} = G(t) + \AS[_\ell]{t,t'-t},
  \end{align*}
  where $t'$ is local to $t$ and $\AS[_\ell]{t,\cdot}$ is the linear part of $\AS{t,\cdot}$.  This says that $d_t = \AS[_\ell]{t,\cdot}$ satisfies WLSG for convex function $G$; the rest follows from Theorem~\ref{thm:wlsg} and Theorem~\ref{thm:main-char}.
\end{proof}

Finally, in the spirit of Section~\ref{sec:subgradient}, Archer and Kleinberg~\shortcite{archer2008truthful} characterized local conditions under which an allocation rule can be made truthful.  A key condition from their paper is \emph{vortex-freeness}, which is a condition they show to be equivalent to local path independence (analagous to our terminology of weak local subgradients it can be thought of as weak local path independence).  The other condition, local WMON, means that WMON holds in some neighborhood around each type.  Their result then follows from the observation that local WMON and local path independence imply local subgradient.  While this particular proof is not significantly simpler than the original, we believe it is somewhat more natural and clarifies the connection between the underlying reasons a notion of local truthfulness suffices both here and in Carroll's setting.

\begin{corollary}\label{cor:lwmon}
Let $\T$ be convex.  A family $\{d_t \in \linear(\V \to \reals)\}_{t \in \T}$ is a selection of subgradients of a convex function if and only if it satisfies local WMON and is vortex-free.
\end{corollary}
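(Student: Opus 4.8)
The plan is to deduce Corollary~\ref{cor:lwmon} from Theorem~\ref{thm:wlsg} by showing that the conjunction of local WMON and vortex-freeness is equivalent to the WLSG condition. Since Theorem~\ref{thm:wlsg} already tells us that WLSG characterizes selections of subgradients of convex functions, it suffices to establish this equivalence. I would first recall precisely what Archer and Kleinberg's conditions say: local WMON asserts that for every $t$ there is an open neighborhood $U_t$ on which WMON holds, i.e.\ $d_s(s'-s) \le d_{s'}(s'-s)$ for all $s,s' \in U_t$; and vortex-freeness asserts that the line integrals $\int_{L_{xy}} d_t(y-x)\,dt$ vanish around all triangles contained in some neighborhood of each point (local path independence). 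The two directions of the equivalence then mirror the structure of the proof of Theorem~\ref{thm:wmon}, carried out locally.

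\textbf{From WLSG to local WMON and vortex-freeness.} If $\{d_t\}$ is a WLSG of a convex $G$, then by Theorem~\ref{thm:wlsg} it is a selection of (global) subgradients of $G$, hence satisfies CMON (Theorem~\ref{thm:cmon}), hence WMON everywhere, and in particular local WMON. Vortex-freeness (local path independence) follows because, by Fact~3 in Appendix~\ref{sec:cvx-primer} (path independence for selections of subgradients of convex functions), the line integral of $\{d_t\}$ along any path depends only on endpoints; restricting to triangles near a point gives local path independence.

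\textbf{From local WMON and vortex-freeness to WLSG.} This is the substantive direction. Fix $s \in \T$ and let $U_s$ be a neighborhood (shrinking it so that both local WMON and local path independence hold and $U_s$ is convex — possible since $\T$ is convex, by taking a small ball intersected with $\T$). On $U_s$, local WMON gives that the restriction of $\{d_t\}$ is monotone, so the line integrals $\int_{L_{xy}} d_t(y-x)\,dt$ are well-defined (integrals of monotone functions, as in the proof of Theorem~\ref{thm:wmon}, following Berger et al.\ and Ashlagi et al.). Pick any $t_0 \in U_s$ and define $G_s(x) = \int_{L_{t_0 x}} d_t(x - t_0)\,dt$ on $U_s$. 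Exactly the computation in the proof of Theorem~\ref{thm:wmon} — using vortex-freeness in place of global path independence to split the integral around triangles, and linearity of $d_z$ — shows $G_s$ is convex on $U_s$ and that each $d_t$ with $t \in U_s$ satisfies the subgradient inequality for $G_s$ at $t$ against all points of $U_s$. This establishes both inequalities of~\eqref{eq:wlsg} locally, so $\{d_t\}$ is a WLSG of $G_s$ near $s$. Since Theorem~\ref{thm:wlsg} is phrased for a single global convex function $G$, one final bookkeeping step is needed: patch the local convex functions $\{G_s\}$ into a single $G$ on all of $\T$, or equivalently observe that the WLSG definition only requires the local inequalities and invoke Theorem~\ref{thm:wlsg} to produce the global $G$ directly. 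Combining the two directions with Theorem~\ref{thm:wlsg} yields the corollary.

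\textbf{The main obstacle} I anticipate is the patching/globalization step: Theorem~\ref{thm:wlsg} takes as input a \emph{given} convex $G$ and its local data, whereas local WMON plus vortex-freeness only hands us locally-defined convex functions $G_s$ that need not agree on overlaps (they can differ by additive constants). The clean way around this is to note that the \emph{content} of the WLSG hypothesis is merely the pair of local inequalities in~\eqref{eq:wlsg} for \emph{some} convex $G$, and that the proof of Theorem~\ref{thm:wlsg} (following Carroll) actually constructs the global $G$ from exactly these local data; so rather than needing a pre-existing $G$, we feed the local subgradient relations into that construction. Making this handoff precise — confirming that the local data produced from local WMON and vortex-freeness is exactly what Theorem~\ref{thm:wlsg}'s proof consumes — is where care is required, but no new ideas beyond those already in the proofs of Theorems~\ref{thm:wmon} and~\ref{thm:wlsg} are needed.
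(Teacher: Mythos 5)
Your forward direction is fine, and your setup for the reverse direction (applying the Theorem~\ref{thm:wmon} construction locally on neighborhoods $U_s$ to get local convex functions $G_s$) matches the paper's. You also correctly identify that the only real obstacle is globalization: the $G_s$ need not agree on overlaps. But your proposed resolution of that obstacle rests on a misreading of Theorem~\ref{thm:wlsg}. The WLSG condition is defined \emph{relative to a given convex} $G$, and the theorem's statement and proof both take $G$ as input: the telescoping argument in its proof explicitly uses values $G(t_i)$ of a single fixed function at intermediate points. The proof does \emph{not} construct $G$ from the local inequalities; if you only have locally-defined $G_s$'s that differ by unknown constants on overlaps, the chain of inequalities~\eqref{eq:borkybork1}--\eqref{eq:borkybork2} cannot be assembled, since the $G$-values would come from different functions. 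So ``invoke Theorem~\ref{thm:wlsg} to produce the global $G$ directly'' is not available, and the two options you offer as ``equivalent'' are not: only the explicit patching route works.

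The paper closes this gap by actually building the global $G$. Fix $t_0 \in \T$ and set $G(t) = \int_{L_{t_0\,t}} d_{t'}\,dt'$. This is well defined because the segment $\convhull(\{t_0,t\})$ is compact and local WMON makes $d_{t'}(t - t_0)$ locally nondecreasing along the segment, hence nondecreasing on the whole segment, so the Riemann integral exists. Then, writing each local convex function as $G^{(t')}(t) = \int_{L_{t'\,t}} d_{t''}\,dt''$ (Rockafellar, p.~232), path independence (from vortex-freeness, via Archer--Kleinberg's Lemma 3.5) gives that $G$ and $G^{(t')}$ differ by an additive constant on $U_{t'}$, so $\{d_t\}_{t\in U_{t'}}$ is a selection of subgradients of this single global $G$ on each $U_{t'}$. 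That gives the (L)WLSG hypothesis for a concrete $G$, and only then does Theorem~\ref{thm:wlsg} apply. Your draft needs this integration-and-matching step made explicit; as written, the ``handoff'' you gesture at does not go through.
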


\begin{proof}
  
  We prove the reverse direction; suppose $\{d_t\}_{t \in \T}$ satisfies local WMON and is vortex-free.  From Lemma 3.5 of Archer and Kleinberg~\citeyear{archer2008truthful} we have that vortex-freeness is equivalent to path independence, so by Theorem~\ref{thm:wmon} for all $t$ there exists some open $U_t$ such that $\{d_{t'}\}_{t' \in U_t}$ is the subgradient of some convex function $G^{(t)}:U_t\to\reals$.  We need only show the existence of some $G$ such that $\{d_t\}_{t \in \T}$ is the subgradient of $G$ on each $U_t$; the rest follows from Theorem~\ref{thm:wlsg}.

Fix some $t_0\in\T$ and define $G(t) = \int_{L_{t_0\,t}} d_{t'} dt'$, which is well defined by compactness of $\convhull(\{t_0,t\})$ and the fact that a locally increasing real-valued function is increasing.  But for each $t'$ and $t\in U_{t'}$ we can also write $G^{(t')}(t) = \int_{L_{t'\,t}} d_{t''} dt''$ by~\cite[p. 232]{rockafellar1997convex}, and now by path independence we see that $G$ and $G^{(t')}$ differ by a constant.  Hence $\{d_t\}_{t \in \T}$ must be a subgradient of $G$ on $U_{t'}$ as well, for all $t'\in\T$.
\end{proof}

\section{Revenue Equivalence}
\label{sec:rev-eq}

Perhaps the most celebrated result in auction theory is the revenue equivalence theorem, which states that, in a single item auction, the revenue from an agent (equivalently that agent's consumer surplus) is determined up to a constant by the equilibrium probability that each possible type of that agent will receive the item~\cite{myerson1981optimal}.  A large body of work has looked for more general conditions under which this property holds (see, e.g., \cite{krishna2001convex}) or what can be said when it does not~\cite{carbajal2012mechanism}.  One general approach is due to Heydenreich et al.~\shortcite{heydenreich2009characterization}, who use a graphical representation related to CMON.  Given our main theorem, this is unsurprising.  In convex analysis terms, asking whether an implementable allocation rule satisfies revenue equivalence is asking whether all convex functions that have a selection of their subgradients that corresponds to that allocation rule are the same up to a constant.  As we saw in the proof of Lemma~\ref{lem:affine-1}, CMON permits the natural construction of a convex function from its subgradient via \eqref{eq:cmon}.  Intuitively, if we know the payments we want for some subset of types, we can check if those are consistent with a desired payment for some other type by checking whether this construction still works, both in terms of the constraints of the existing types on the new one and the new one on the existing ones.  The following theorem applies this insight to get a result that is stronger than revenue equivalence as iteratively applying it characterizes the possible payments for {\em every} mechanism.

\begin{theorem}
\label{thm:rev}
Let $G$ be a convex function on $\convhull(\T)$, $d=\{d_t\}_{t \in \T}$ a selection of its subgradients on $\T$, $S \subseteq \T$ non-empty, $t^* \in \T \setminus S$, and $c$ be given.  Then there exists a convex $G'$ on $\convhull(\T)$ agreeing with $G$ on $S$, with $\{d_t\}_{t \in \T} \selsubgrad{G'}{\T}$ and $G'(t^*) = c$, if and only if
\begin{equation}
\label{eqn:rev-bound-new}
\sup_{t_0\in S} G(t_0) + P_d(t_0,t^*)
\leq c \leq
\inf_{t_0\in S} G(t_0) - P_d(t^*,t_0)
\end{equation}
\end{theorem}
\begin{proof}
Given such a $G'$, the LHS of~\eqref{eqn:rev-bound-new} becomes $\sup_{t_0\in S} G'(t_0) + P_d(t_0,t^*) \leq G'(t^*)$.  Applying the definition of $P_d$~\eqref{eq:p-d} and then repeatedly applying the subgradient inequality~\eqref{eq:subgradient} yields the desired inequality.  Similarly, the RHS of~\eqref{eqn:rev-bound-new} can be rewritten as $G'(t^*) + P_d(t^*,t_0) \leq G'(t_0)$ for all $t_0 \in S$, and the definition and subgradient inequality applied.

Now suppose \eqref{eqn:rev-bound-new} holds.  Let $G'(t) \defeq \max\left\{c+P_d(t^*,t),\; \sup_{t_0\in S} G(t_0) + P_d(t_0,t)\right\}$.  By Theorem~\ref{thm:cmon}, $d$ satisfies CMON, so by  Lemma~\ref{lem:affine-1} $G'$ is convex, finite-valued on $\convhull(\T)$, and has $\{d_t\}\in\partial G'$.  Hence, we need only show that $G'$ agrees with $G$ on $S$ and has $G'(t^*)=c$.

First, fixing any $t\in S$, we will establish the following:
\begin{equation}
  \label{eq:affine-g-sup-p-d}
  G(t) = \sup_{t_0\in S} G(t_0) + P_d(t_0,t).
\end{equation}
As $P_d(t,t)=0$ from Lemma~\ref{lem:affine-1}(3), we have $G(t) = G(t) + P_d(t,t) \leq \sup_{t_0\in S} G(t_0) + P_d(t_0,t)$.  Furthermore, $G(t_0) + P_d(t_0,t) \leq G(t)$ for all $t_0\in \T$ by repeated application of the subgradient inequality~\eqref{eq:subgradient}.  Hence, we have $\sup_{t_0\in S} G(t_0) + P_d(t_0,t) \leq G(t)$ as well.

By eq.~\eqref{eq:affine-g-sup-p-d}, we can write $G'(t) = \max\{c+P_d(t^*,t), G(t)\}$ when $t\in S$.  But by the RHS of eq.~\eqref{eqn:rev-bound-new}, we see $c+P_d(t^*,t) \leq G(t)$, so $G'(t) = G(t)$.  Similarly, applying the LHS of eq.~\eqref{eqn:rev-bound-new} and $P_d(t^*,t^*)=0$ to the definition of $G'(t^*)$, we have $G'(t^*) = c$.
\end{proof}

Viewed through Theorem~\ref{thm:rev}, revenue equivalence holds when the upper and lower bounds from \eqref{eqn:rev-bound-new} match after the value of $G$ is fixed at a single point.  This allows us to derive a necessary and sufficient condition for revenue equivalence that is equivalent to that given by Heydenreich et al.~\shortcite{heydenreich2009characterization} and actually applies to all \scorenames.  For example, this gives a revenue equivalence theorem for mechanisms with partial allocation.

\begin{corollary}[Revenue Equivalence]
  \label{cor:rev-eq}
Let a truthful \scorename $\ASdef$ be given, and $d = \{d_t\}_{t\in\T}$ be the corresponding selection of subgradients  from~\eqref{eq:main-char}.  Then every truthful \scorename $\ASdef[']$ with the same corresponding selection of subgradients differs from $\ASS$ by a constant (i.e. $\AS{t',t} = \AS[']{t',t} + c$) if and only if $P_d(t',t) + P_d(t,t') = 0$ for all $t,t'\in\T$.
\end{corollary}

\begin{proof}
We will show that the convex function $G$ from eq.~\eqref{eq:main-char} is unique up to a constant if and only if $P_d(t',t) + P_d(t,t') = 0$ for all $t,t'\in\T$.

For the forward direction, let $t_0\in \T$ be arbitrary.  Then for all $t\in\T$, taking~\eqref{eqn:rev-bound-new} with $S=\{t_0\}$ and $G(t) \defeq c + P_d(t_0,t)$ gives the condition $G(t_0) + P_d(t_0,t) \leq G'(t) \leq G(t_0) - P_d(t,t_0)$ for the value of $G'(t)$.  But as $P_d(t,t_0) = -P_d(t_0,t)$ we have $G'(t) = P_d(t_0,t) + G(t_0) = G(t)$ for all $t$.

For the reverse direction, assume $P_d(t^1,t^2) \neq -P_d(t^2,t^1)$ for some $t^1,t^2\in\T$, and let $G^1(t) \defeq P_d(t^1,t)$ and $G^2(t) \defeq P_d(t^1,t^2) + P_d(t^2,t)$.  We easily check from Lemma~\ref{lem:affine-1}(3) that $G^1(t^2) = G^2(t^2) = P_d(t^1,t^2)$, but we have $G^1(t^1) = 0$ while $G^2(t^1) = P_d(t^1,t^2) + P_d(t^2,t^1) \neq 0$.
\end{proof}

We note that these two results are similar to results of Kos and Messner~\shortcite{kos2012extremal}.  The main novelties in our version are showing that every value in the interval yields a convex function (as opposed to merely the extremal ones), the ability to characterize possible values after the values at multiple points are fixed (as opposed to a single point),
and the framing in terms of convex analysis.

The conditions given by Theorem~\ref{thm:rev} and Corollary~\ref{cor:rev-eq}, while general, are not particularly intuitive.  However, there are a number of special cases where they do have natural interpretations for mechanism design.  The first is when the set of types is finite.  In this setting (explored in an auction theory context in, e.g., \cite{diakonikolas2012efficiency}) it is well known that revenue equivalence does not hold.  The finite set of constraints~\eqref{eqn:rev-bound-new} can be used in general as a linear program to, e.g., maximize revenue (see Section~6.5.2 of~\cite{vohra2011mechanism} for an example).  In particular cases, they may become simple enough to have a nice characterization.  For example, in the single-parameter setting only a linear number of paths need be considered.  This setting is illustrated in Figure~\ref{fig:interp}.

\begin{figure}[!ht]
  \centering
  \begin{tabular}{ccc}
    \includegraphics[width=0.25\textwidth]{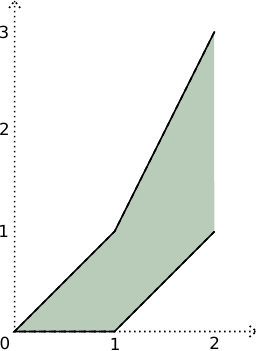} & \hspace{40pt} &
    \includegraphics[width=0.25\textwidth]{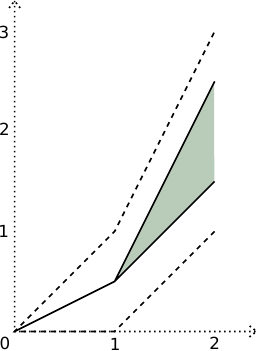} \\
    (a) && (b)
  \end{tabular}
  \caption{Consider a one-dimensional setting with type space $\T = \{0,1,2\}$ and $d_0  = 0, d_1 = 1, d_2 =2$.  In (a), we fix $G(0) =0$, yielding a range of possible values dictated by the subgradients: $0 \leq G(1) \leq 1$ and $1 \leq G(2) \leq 3$.  We can pick any point in the resulting set and fix G there.  However, we cannot pick any increasing function: in (b), we fix $G(1) = 0.5$, restricting $G(2)$ to the interval $[1.5,2.5]$.}
  \label{fig:interp}
\end{figure}

More broadly, as we saw in the proof of Theorem~\ref{thm:wmon}, the (supremum over the) sum can often be interpreted as an integral.  In particular, the fact that $G$ is convex guarantees that (under mild conditions) integrals of a selection of its subgradient are path independent and the integral from $t$ to $t'$ gives $G(t') - G(t)$.  If $\T$ is connected by smooth paths (e.g. if it is convex), this means that $\T$ satisfies revenue equivalence for all implementable mechanisms (previously shown under a somewhat different notion of the set of types~\cite{heydenreich2009characterization}).  As it is particularly simple to prove, we state the version for convex $\T$.
\begin{corollary}
\label{thm:old-rev-eq}
Let $\T$ be convex, a truthful \scorename $\ASdef$ be given, and $\{\subgrad{G}_t\}_{t\in\T}$ be the corresponding selection of subgradients  from~\eqref{eq:main-char}.  Then any truthful \scorename $\ASdef[']$ with the same corresponding selection of subgradients differs from $\ASS$ by a constant (i.e. $\AS{t',t} = \AS[']{t',t} + c$).
\end{corollary}

\begin{proof}
By Theorem~\ref{thm:main-char}, we know that $\ASS$ and $\ASS'$ only differ only in their choice of convex function $G$.  However, each choice has the same selection of subgradients, and two convex functions with the same selection of subgradents differ by a constant~\cite{rockafellar1997convex}.  For intuition, see the construction of $G$ by integrating its subgradients in the proof of Theorem~\ref{thm:wmon}.
\end{proof}

\section{Duality in elicitiation}
\label{sec:append-duality}

We now provide mathematical foundation for notions of duality from \sect~\ref{chap:propertiesduality-elicitation}.
To begin, we need a dual vector space with more structure than simply $\linear(\V\to\reals)$.
For this we use the notion of a \emph{dual pair}, which is a standard setting for convex analysis in infinite-dimensional spaces.

\begin{definition}[{\cite[\sect~5.14]{aliprantis2007infinite}}]
  \label{def:properties-dual-pair}
  A pair of topological vector spaces $(\V,\V^*)$ is a \emph{dual pair} if it is equipped with a bilinear form $\inprod{\cdot,\cdot}:\V\times\V^* \to \reals$ which separates points, in the sense that $\forall v^*\, \inprod{v,\cdot}\equiv 0$ implies $v=0$ and  $\forall v\, \inprod{\cdot,v^*}\equiv 0$ implies $v^*=0$.
\end{definition}

Note that the above assumption that $(\V,\V^*)$ is a dual pair implies in particular that for all $v^*\in\V^*$, the map $v^* \mapsto \inprod{v,v^*}$ is linear.
This is crucial when interpreting $\R\subseteq\V^*$ as the type space, since affine scores must be affine in the type.
For the remainder of this section we assume that we have a dual pair $(\V,\V^*)$.

A natural question is to determine the conditions under which we have $G^{**} \defeq (G^*)^* = G$.  That is, when is the conjugacy operator an involution?  This has been thoroughly studied in convex analysis.  We state the classic theorem due to 
Fenchel and Moreau~\cite{ioffe1979theory,lai1988fenchel}.

\begin{definition}
  \label{def:properties-lsc}
 A function $f:X\to\extreals$ is \emph{lower semi-continuous (\lsc)} if for every $x_0$ in $\dom(f)$ it holds that $\liminf\limits_{x\to x_0} f(x)\ge f(x_0)$.
\end{definition}
\begin{theorem}[Fenchel--Moreau]
  \label{thm:fenchel-moreau}
  Let $X$ be a Hausdorff locally convex space.  For any function $G: X \to \extreals$, it follows that $G = G^{**}$ if and only if one of the following is true: (1) $G$ is a proper, \lsc, and convex function, (2) $G \equiv +\infty$, or (3) $G \equiv -\infty$.
\end{theorem}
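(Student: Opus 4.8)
The plan is to prove the two implications separately, with essentially all the work in the forward (``if'') direction, and within that, in case~(1). First I would record the universal inequality $G^{**}\le G$: since $G^*(d)\ge d(v)-G(v)$ for every $v$, we get $d(v)-G^*(d)\le G(v)$, and taking the supremum over $d$ gives $G^{**}(v)\le G(v)$ (the cases $G^*(d)=\pm\infty$ are checked directly against the conventions on $\extreals$). Cases~(2) and~(3) are then immediate arithmetic: $G\equiv+\infty$ forces $G^*\equiv-\infty$ and hence $G^{**}\equiv+\infty$, while $G\equiv-\infty$ forces $G^*\equiv+\infty$ and hence $G^{**}\equiv-\infty$. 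So only case~(1) needs real argument, and for it, by $G^{**}\le G$, it suffices to show $G(v_0)\le G^{**}(v_0)=\sup_{d\in X^*}\bigl(d(v_0)-G^*(d)\bigr)$ for every $v_0\in X$, where $X^*$ is the topological dual.

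For this I would invoke the geometric Hahn--Banach theorem applied to the epigraph $C=\{(v,t)\in X\times\reals : t\ge G(v)\}$: it is nonempty because $G$ is proper, convex because $G$ is convex, and closed in the product topology because $G$ is \lsc, and $X\times\reals$ is again Hausdorff locally convex. Fix $v_0$ and a real $\beta<G(v_0)$, so $(v_0,\beta)\notin C$; strictly separating the point from $C$ yields a continuous linear functional $(v,t)\mapsto \ell(v)+st$ with $\ell\in X^*$, $s\in\reals$, and reals $c_1>c_2$ such that $\ell(v)+sG(v)\ge c_1>c_2\ge \ell(v_0)+s\beta$ for all $v\in\dom G$. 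Letting $t\to+\infty$ in $\ell(v)+st\ge c_1$ forces $s\ge 0$. If $s>0$, the affine function $a(v)=(c_1-\ell(v))/s$ minorizes $G$ and has $a(v_0)>\beta$; writing $a(v)=d(v)-\alpha$ with $d=-\ell/s$ and noting that $a\le G$ is equivalent to $\alpha\ge G^*(d)$, we conclude $G^{**}(v_0)\ge d(v_0)-\alpha=a(v_0)>\beta$. If $s=0$ (a ``vertical'' separating hyperplane), I would first produce one continuous affine minorant $a_1\le G$ by running the same separation argument on the point $(v_1,G(v_1)-1)$ for some $v_1\in\dom G$ --- there strict separation together with $(v_1,G(v_1))\in C$ forces the $t$-component to be \emph{strictly} positive, so the $s>0$ case applies --- and then tilt: since $c_1-\ell\le 0$ on $\dom G$ and $G\equiv+\infty$ off $\dom G$, the function $a_1+\mu(c_1-\ell)$ still minorizes $G$ for every $\mu>0$, while its value at $v_0$ tends to $+\infty$ as $\mu\to\infty$, so it exceeds $\beta$ for large $\mu$. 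In all cases $G^{**}(v_0)>\beta$; letting $\beta\uparrow G(v_0)$ gives $G(v_0)\le G^{**}(v_0)$, hence $G=G^{**}$.

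For the converse, suppose $G=G^{**}$; I would observe that $G^{**}$ is automatically in the trichotomy. If $G^*(d)=+\infty$ for all $d$, then $G^{**}\equiv-\infty$, case~(3). Otherwise $G^*(d_0)\in\reals$ for some $d_0$, so $G^{**}(v)\ge d_0(v)-G^*(d_0)>-\infty$ everywhere; moreover $G^{**}$ is the pointwise supremum of the nonempty family of continuous affine functions $\{\,v\mapsto d(v)-G^*(d) : G^*(d)<\infty\,\}$, hence convex and \lsc. If that supremum is $+\infty$ identically we are in case~(2); otherwise $G^{**}$ is finite somewhere, never $-\infty$, convex and \lsc, i.e.\ proper \lsc convex, case~(1). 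Since $G=G^{**}$, $G$ inherits the classification, which completes the proof.

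The step I expect to be the main obstacle is the $s=0$ subcase of the Hahn--Banach argument: one must separately secure the existence of a single continuous affine minorant of $G$ and then tilt it, taking care that the $\extreals$-arithmetic conventions and the fact that $G\equiv+\infty$ off $\dom G$ make the tilted function a genuine minorant. Everything else is routine bookkeeping with infinities, which the paper's conventions are designed to absorb.
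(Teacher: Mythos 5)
The paper does not prove this theorem; it states it as a classic result and cites Ioffe and Lai for the proof, so there is no ``paper's proof'' to compare against. That said, your argument is the standard Hahn--Banach separation proof of Fenchel--Moreau and is essentially correct. The forward direction (the three easy observations, the epigraph separation, the $s>0$ case, and the ``vertical hyperplane'' $s=0$ case handled by first securing one affine minorant from a nearby interior point and then tilting) is laid out correctly, and you rightly identified the $s=0$ subcase as the technical crux.

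One small loose end in the converse: from ``not every $G^*(d)=+\infty$'' you conclude ``$G^*(d_0)\in\reals$ for some $d_0$,'' but the third possibility, $G^*(d_0)=-\infty$, is not ruled out at that point. It is harmless --- if $G^*(d_0)=-\infty$ then $G^{**}(v)\ge d_0(v)-G^*(d_0)=+\infty$ for all $v$, so you land in case~(2) directly --- but as written the family $\{v\mapsto d(v)-G^*(d): G^*(d)<\infty\}$ could contain the constant $+\infty$, which is not a continuous affine function, so the ``pointwise supremum of continuous affine functions'' claim needs the $G^*(d)=-\infty$ case split off first. After that split the trichotomy $G^{**}\in\{\equiv-\infty,\ \equiv+\infty,\ \text{proper \lsc\ convex}\}$ follows exactly as you argue, and $G=G^{**}$ transfers it to $G$. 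With that one sentence added, the proof is complete.
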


The following corollary will prove very helpful in our discussion of type duality below.  The proof follows from applying Theorem~\ref{thm:fenchel-moreau} (note that as $\reals$ is Hausdorff, $\V$ together with the product topology inherited from the dual pair is also Hausdorff and locally convex; see~\cite[\sect~7]{aliprantis2007infinite} for details), and then Lemma~\ref{lem:properties-dual-optimal} twice, once for $G$ and once for $G^*$.

\begin{corollary}
  \label{cor:properties-dual-subgrads}
  If $G$ is convex, proper, and \lsc, then $v^*\in\partial G_v \iff v\in\partial G^*_{v^*}$.
\end{corollary}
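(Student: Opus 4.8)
The plan is to follow the route suggested right before the statement: collapse both sides of the claimed equivalence onto the single Fenchel--Young equality $G(v) + G^*(v^*) = \left\langle v,v^*\right\rangle$, applying Lemma~\ref{lem:properties-dual-optimal} once to $G$ and once to $G^*$, with Theorem~\ref{thm:fenchel-moreau} bridging the two via $G^{**}=G$.

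First I would apply Lemma~\ref{lem:properties-dual-optimal} to $G$ itself. Since $G$ is convex, for every $v\in\V$ and $v^*\in\V^*$,
\[ v^*\in\partial G_v \iff G^*(v^*) = \left\langle v,v^*\right\rangle - G(v). \]
Next I would apply the same lemma to the conjugate $G^*$, reading $\V$ as the dual of $\V^*$ under the bilinear form (so that $(\V^*,\V)$ is again a dual pair, $G^*:\V^*\to\extreals$ has conjugate $G^{**}$ on $\V$, and the subgradients of $G^*$ at a point are taken inside $\V$ via the pairing --- this is the reading under which ``$v\in\partial G^*_{v^*}$'' is meaningful). One checks that $G^*$ is convex, being a pointwise supremum of affine functions, so the lemma applies and gives
\[ v\in\partial G^*_{v^*} \iff G^{**}(v) = \left\langle v,v^*\right\rangle - G^*(v^*). \]
Finally, invoke Theorem~\ref{thm:fenchel-moreau}: a dual pair is automatically Hausdorff and locally convex, and $G$ is proper, \lsc, and convex, hence $G^{**}=G$. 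Substituting $G^{**}(v)=G(v)$ into the second equivalence and rearranging terms shows its right-hand side coincides with the right-hand side of the first equivalence; chaining the two yields $v^*\in\partial G_v \iff v\in\partial G^*_{v^*}$.

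The one place requiring care, rather than routine algebra, is the dual-pair bookkeeping in the second step: confirming that Lemma~\ref{lem:properties-dual-optimal} may be applied ``in the other direction'' to $G^*$ --- i.e. that $\V$ legitimately plays the role of the (finite-valued) dual of $\V^*$, that the proof of the lemma goes through verbatim with the two spaces exchanged, and that the object written $\partial G^*_{v^*}$ is the subgradient set viewed inside $\V$ --- together with the (easy) remark that $G^*$ is proper whenever $G$ is, so that no $\infty-\infty$ ambiguities arise in $G^*(v^*)$ or in $G^{**}$. Once these conventions are pinned down, the argument is just the symmetry of the Fenchel--Young equality.
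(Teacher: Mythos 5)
Your proof is correct and follows exactly the route the paper takes: two applications of Lemma~\ref{lem:properties-dual-optimal} (one for $G$, one for $G^*$) bridged by Fenchel--Moreau to identify $G^{**}$ with $G$. The extra care you take over the dual-pair conventions and the properness of $G^*$ is a welcome elaboration of the paper's one-line remark, but does not change the argument.
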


We now introduce the concept of a \emph{dual property} $\Gamma^*$, which essentially swaps the type and the report.  That is, an agent has a ``true report'' $r$ and $\Gamma^*(r)$ encodes all the ``correct types'' $t$.  We then go on to show the relationship between the subgradient elicitability of dual properties.

\begin{definition}
  Let $\Gamma:\T\toto \R$ where $\R\subseteq \V^*$.  Then the \emph{dual} of $\Gamma$, written $\Gamma^*:\R\toto \T$, is defined by $\Gamma^* \defeq \Gamma^{-1}$.  In other words, $\Gamma^*$ satisfies $r \in \Gamma(t) \iff t \in \Gamma^*(r)$.
\end{definition}

\begin{theorem}
  \label{thm:dual-type-char}
  For dual pair $(\V,\V^*)$, let $\Gamma:\T\toto \D$ be given with $\T\subseteq\V$ and $\D\subseteq\V^*$.  Let convex proper and \lsc\ $G$ be given.
  Then $G$ elicits $\Gamma$ if and only if $G^*$ elicits $\Gamma^*$.
\end{theorem}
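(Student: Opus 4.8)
The plan is to unwind both sides of the biconditional into plain subgradient containments and then invoke Corollary~\ref{cor:properties-dual-subgrads}, which is exactly the needed bridge: for $G$ convex, proper, and \lsc, one has $r\in\partial G_t \iff t\in\partial G^*_r$. Recall that ``$G$ elicits $\Gamma$'' means, by the definition of direct elicitation, that $G(\T)\subseteq\reals$ and $\Gamma(t)\subseteq\partial G_t$ for every $t\in\T$. Since $\Gamma^*=\Gamma^{-1}$, the set of pairs $(t,r)$ with $r\in\Gamma(t)$ is identical to the set of pairs with $t\in\Gamma^*(r)$; so the subgradient condition can be rewritten as: for all $r\in\D$ and all $t\in\Gamma^*(r)$, $r\in\partial G_t$. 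Dually, ``$G^*$ elicits $\Gamma^*$'' means $G^*(\D)\subseteq\reals$ together with: for all $r\in\D$ and all $t\in\Gamma^*(r)$, $t\in\partial G^*_r$. Applying Corollary~\ref{cor:properties-dual-subgrads} term by term to each such pair $(t,r)$ shows the two subgradient conditions are equivalent; it then remains only to supply a few preliminary facts and to reconcile the two regularity conditions.

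First I would record that $G^*$ is itself convex, proper, and \lsc. Convexity and lower semicontinuity are automatic, as $G^*$ is a pointwise supremum of the affine functions $d\mapsto\inprod{v,d}-G(v)$, which are continuous in the dual-pair topology, ranging over $v\in\dom G$ (nonempty since $G$ is proper, and there $G$ is real-valued). Properness follows because $G^*>-\infty$ everywhere (evaluate the supremum at any such $v$, noting $\inprod{v,d}\in\reals$ as $(\V,\V^*)$ is a dual pair), while Theorem~\ref{thm:fenchel-moreau} forbids $G^*\equiv+\infty$: that would give $G^{**}\equiv-\infty$, contradicting $G^{**}=G$ proper. The same theorem gives $G^{**}=G$. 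Consequently the whole statement is invariant under $(G,\Gamma)\mapsto(G^*,\Gamma^*)$ (using $(\Gamma^*)^*=(\Gamma^{-1})^{-1}=\Gamma$), so once the forward implication is in hand the reverse implication is simply the forward one applied to $G^*$ and $\Gamma^*$.

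For the regularity conditions: whenever $\Gamma^*(r)$ is nonempty --- which holds for all $r\in\D$ once $\Gamma^*$ is regarded as a property, or, if one prefers, after deleting from $\D$ any report value with empty level set --- pick $t\in\Gamma^*(r)$. Then $r\in\Gamma(t)\subseteq\partial G_t$, and Lemma~\ref{lem:properties-dual-optimal} gives $G^*(r)=\inprod{t,r}-G(t)\in\reals$, both terms being finite. Hence $G(\T)\subseteq\reals$ already forces $G^*(\D)\subseteq\reals$, and the symmetric argument (now using $G^{**}=G$) gives the converse among the regularity conditions. Putting the pieces together yields the theorem.

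The step I expect to need the most care is the bookkeeping at the boundary of the argument rather than any deep point: confirming that $\Gamma^*$ genuinely qualifies as a property (is nonempty-valued) so that $G^*(\D)\subseteq\reals$ comes for free, and carefully tracking the $\pm\infty$ conventions when appealing to Corollary~\ref{cor:properties-dual-subgrads} and Lemma~\ref{lem:properties-dual-optimal} --- in particular the standing hypothesis that $G$ is proper (hence $G>-\infty$), which is precisely what keeps the conjugate and the subgradient transfer well-behaved. The genuine mathematical content, namely that the subgradient relations of $G$ and $G^*$ are mutual inverses, is already packaged in Corollary~\ref{cor:properties-dual-subgrads}.
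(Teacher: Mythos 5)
Your proof is correct and takes the same route as the paper's: both reduce to the equivalence $r\in\Gamma(t)\iff t\in\Gamma^*(r)$ together with Corollary~\ref{cor:properties-dual-subgrads} to pass between $\partial G$ and $\partial G^*$. You are in fact more careful than the paper's two-line proof: you explicitly verify that $G^*$ is itself convex, proper, and \lsc\ (needed if one runs the argument symmetrically via $G^{**}=G$, as you do) and, more importantly, you check the $G^*(\D)\subseteq\reals$ regularity requirement built into the definition of direct elicitation, a point the paper's proof silently omits.
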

\begin{proof}
  We apply Corollary~\ref{cor:properties-dual-subgrads} to obtain $d\in\partial G_t \iff t \in \partial G^*_d$.  If $G$ subgradient-elicits $\Gamma$, then we have
  \begin{equation*}
    t \in \Gamma^*(d) \iff d \in \Gamma(t) \iff d\in\partial G_t \iff t \in \partial G^*_d,
  \end{equation*}
  so $G^*$ subgradient-elicits $\Gamma^*$.  Clearly the above may be applied in the reverse direction as well, yielding the result.
\end{proof}

Note that when $G$ and $G^*$ elicit $\Gamma$ and $\Gamma^*$, respectively, we have by the above discussion that $\AS{d,t} = \inprod{t,d} - G^*(d)$ elicits $\Gamma$ and $\AS[^*]{t,d} = \inprod{t,d} - G(t)$ elicits $\Gamma^*$.  Moreover, the ``consumer surplus'' functions of $\ASS$ and $\ASS^*$ are $G$ and $G^*$, respectively.  This curious relationship, combined with the notion of report duality, can be visualized as shown in Table~\ref{tab:quadrangle}, the general version of Table~\ref{tab:quadrangle-simple}.
Note that traveling around the table does not necessarily mean arriving at the same choice of $G$, nor does it imply that $G^{**} = G$.  However, when $G^{**} = G$ does hold, the diagram ``commutes'' in a certain sense.

We conclude with a few remarks.

\renewcommand{\myp}{\textcolor{dist}{t}}
\renewcommand{\myq}{\textcolor{util}{d}}
\renewcommand{\mypp}{\textcolor{dist}{t'}}
\renewcommand{\myqq}{\textcolor{util}{d'}}
\begin{table}[t]
  \centering
  \def\dualitycellwidth{160pt}
  \begin{tabular}{c|c|@{}c@{}|@{}c@{}|}
    \multicolumn{1}{r}{} & \multicolumn{1}{r}{} & \multicolumn{2}{c}{Private Type}
    \\[4pt]
    \cline{3-4}
    \multicolumn{1}{r}{} & \multicolumn{1}{r|}{} &
    \textcolor{dist}{Primal ($\myp$)} \bigstrut & \textcolor{util}{Dual ($\myq$)}\\[2pt]
    \cline{2-4}
    \multirow{2}{*}{\parbox[c][100pt]{10pt}{\rotatebox{90}{Report}}} &
    \parbox[c]{10pt}{\rotatebox{90}{\textcolor{dist}{Primal ($\mypp$)}}} &
    \parbox[c][70pt]{\dualitycellwidth}{\centering $\AS{\mypp,\myp}$ $=$ $G(\mypp) + \inprod{\myp-\mypp,dG_{\mypp}}$} &
    \parbox[c][70pt]{\dualitycellwidth}{\centering $\AS[^*]{\mypp,\myq}$ $=$ $\inprod{\mypp,\myq} - G(\mypp)$}\\ 
    \cline{2-4}
    &
    \parbox[c]{10pt}{\rotatebox{90}{\textcolor{util}{Dual ($\myqq$)}}} &
    \parbox[c][70pt]{\dualitycellwidth}{\centering $\AS{\myqq,\myp}$ $=$ $\inprod{\myp,\myqq} - G^*(\myqq)$} &
    \parbox[c][70pt]{\dualitycellwidth}{\centering $\AS[^*]{\myqq,\myq}$ $=$ $G^*(\myq) + \inprod{dG^*_{\myqq},\myq-\myqq}$}\\
    \cline{2-4}\noalign{\vskip 5pt}
    \multicolumn{1}{c}{}& \multicolumn{1}{c}{}&
    \multicolumn{1}{c}{$\sup\: \AS{\cdot,\myp} = G(\myp)$} &
    \multicolumn{1}{c}{$\sup\: \AS[^*]{\cdot,\myq} = G^*(\myq)$}
    \vspace{5pt}
  \end{tabular}
  \caption{The general duality quadrangle.}
  \label{tab:quadrangle}
\end{table}

\subsubsection*{Identities}

Table~\ref{tab:quadrangle-simple} shows that elicitation inherits a lot of structure from convex duality.  Ignoring boundary and regularity concerns for the moment, and looking at general types $t$, we obtain some nice identities:
\begin{align}
  \AS{d,t} + \AS[^*]{t,d} &\geq \inprod{t,d}\\
  \AS{d,t} - \AS[^*]{t,d} &= G(t) - G^*(d).
\end{align}
The first follows from the classic Fenchel-Young inequality~\cite{rockafellar1997convex}, the proof of which for $G$ proper follows directly from the definition of the conjugate (Definition~\ref{def:properties-conjugate}).
\begin{lemma}[Fenchel-Young inequality]
  \label{lem:properties-fenchel-inequality}
   $\forall\; v\in\V,\,v^*\in\V^*$,\; $G(v)+G^*(v^*) \geq \inprod{v,v^*}$.
\end{lemma}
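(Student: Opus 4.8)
The plan is to read the inequality off the definition of the convex conjugate directly. Fix any $v \in \V$ and $v^* \in \V^*$. By Definition~\ref{def:properties-conjugate},
\begin{equation*}
  G^*(v^*) = \sup_{v' \in \V}\; \inprod{v',v^*} - G(v'),
\end{equation*}
where we have written the bilinear form of the dual pair as $\inprod{\cdot,\cdot}$ in place of the evaluation $v^*(v')$. Since $v$ is a particular point of $\V$, the supremum is at least the value of the objective at $v'=v$, i.e.\ $G^*(v^*) \geq \inprod{v,v^*} - G(v)$. Adding $G(v)$ to both sides (which is legitimate and yields no indeterminate form as long as $G(v)$ is not $-\infty$) gives $G(v) + G^*(v^*) \geq \inprod{v,v^*}$, as claimed.

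The only point requiring a word of care is the handling of extended-real values, which is why the statement in the text is flagged ``for $G$ proper.'' When $G$ is proper we have $G(v) > -\infty$ for all $v$, so the rearrangement above never encounters $\infty - \infty$; if moreover $G(v) = +\infty$ the inequality is trivial, and otherwise the sup defining $G^*(v^*)$ is a well-defined element of $\extreals$ bounded below by a finite quantity. (If $G \equiv +\infty$ the claim is vacuous, and if $G \equiv -\infty$ then $G^* \equiv +\infty$ and again it holds trivially, so properness loses nothing essential.) There is no hard part here: the lemma is an immediate unpacking of the definition of $G^*$, and it is included only because it underpins the first of the two duality-quadrangle identities $\AS{d,t} + \AS[^*]{t,d} \geq \inprod{t,d}$ preceding it.
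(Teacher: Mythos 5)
Your proof is correct and matches the paper's (unstated but indicated) approach exactly: the paper remarks that the inequality ``follows directly from the definition of the conjugate,'' and your argument is precisely that unpacking, with the substitution $v'=v$ in the supremum defining $G^*(v^*)$ followed by rearrangement. The extra care you take with extended-real edge cases is sound and, if anything, slightly more explicit than the paper.
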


\subsubsection*{Score divergences}

The score divergence $\AS{t,t} - \AS{t',t}$ is a natural notion of ``regret'' which arises frequently in the scoring rules literature (cf. \cite{gneiting2007strictly}).  Our score divergence, as we define below, is reminiscent of a Bregman divergence.
\begin{equation}\label{eq:properties-score-div}
  D_{G,dG}(t,t') \defeq \AS{t,t} - \AS{t',t}  = G(t) - G(t') - \inprod{t-t',dG_{t'}}.
\end{equation}
Note that the first argument to $D$ is the true type, as opposed to our $\ASS$ notation.  Also note the subscripts to $D$, which specify both the convex function $G$ and a selection of subgradients.  A Bregman divergence requires $G$ to be continuously differentiable, but our definition~\eqref{eq:properties-score-div} is a natural extension, and has been studied before (cf.~\cite{iyer2013lovasz}).

Score divergences have many nice properties, like convexity in the first argument, and (directional) differentiability at $t'=t$.
Score divergences also enable reasoning about the magnitude of off-equilibrium payoffs, which can be important in practice, when externalities are often present.  For example, Fiat et al.~\shortcite{fiat2013approaching} introduce the notion of ``strong truthfulness'', where the payoff decays as $\|t-t'\|^2$, to design mechanisms that are robust even when agents care about the utility of other agents.

Turning to our various notions of duality, the following are four divergences corresponding to the duality quadrangle, starting in the (primal,primal) setting and moving counter-clockwise.
\begin{align}
  D_{G,dG}(t,t') &= G(t) - G(t') - \inprod{ t-t',dG_{t'}}\\
  D_{G}(t,d') &= G(t) + G^*(d') - \inprod{t,d'}\\
  D_{G^*,dG^*}(d,d') &= G^*(d) - G^*(d') - \inprod{dG_{d'}^*, d-d'}\\
  D_{G^*}(d,t') &= G^*(d) + G(t') - \inprod{t',d}.
\end{align}
Interestingly, we see that $D_G(t,d) = D_{G^*}(d,t)$ for all $t,d$ (not just dual points).  In other words, the loss of reporting $d$ in the primal but having type $t$ is the same as reporting $t$ in the dual but having ``type'' $d$.

\section{Bregman Voronoi digrams and the role of $\|\cdot\|^2$}
\label{sec:breg-vor}

The squared norm seems fundamental to our derivation in \sect~\ref{sec:props-finite}; let us dig further to see if this is indeed the case.  Observe that the form~\eqref{eq:finite-char} is simply
\begin{align*}
  \label{eq:properties-6}
  \AS{r,t} &= 2\inprod{t_r,t} - \|t_r\|^2 + w_r,
\end{align*}
where $t_r = \alpha p_r + p_0$.  Consider the case where $w_r = 0$ for all $r$, which corresponds to $\Gamma$ being a \emph{Voronoi diagram}.  In this case, could think of $\ASS$ as being a special case of the ``Brier score'' $\AS[^B]{t',t} = 2\inprod{t,'t} - \|t'\|^2$, so that $\AS{r,t} = \AS[^B]{t_r,t}$.  In other words, we can think of our finite-report case as just restricting the allowed reports in a general direct-revelation \scorename.  Note that the score divergence for $\ASS^B$ is just $D_G(t',t) = \|t'-t\|^2$, where $G(t) = \|t\|^2$ is just the square norm.
This raises the following interesting question: what do we get when we replace $G = \|\cdot\|^2$ with another convex function on $\T$, and restrict the reports from $\T$ to just a few points $\{t_r\}_\R$?  That is, take $\AS[^G]{t',t} = G(t') - dG_{t'}(t-t')$ and set $\AS{r,t} = \AS[^G]{t_r,t}$.  Surely, for any such $G$, whatever $\Gamma$ is elicited by such a modified \scorename would have to be a diagram by Theorem~\ref{thm:power-diag}.  But then why does the squared norm seem so fundamental?

As it happens, we are touching on precisely the notion of a \emph{Bregman Voronoi diagram}, introduced by Boissonnat et al.~\citeyear[\sect~4]{boissonnat2007bregman}.  There, instead of defining $\mathrm{cell}_i = \{ t : i\in\argmin_j \|t_j - t\| \}$, the squared norm is replaced by any Bregman divergence $D_G$, so that $\mathrm{cell}_i = \{ t : i\in\argmin_j D_G(t,t_j) \}$.\footnote{In Boissonnat et al.~\citeyear{boissonnat2007bregman}, three types of diagrams are introduced; here we refer to the first type.}  Our conclusion that such diagrams coincide with power diagrams corresponds to their Theorem 8.

Framed in terms of our report duality from \sect~\ref{chap:propertiesreport-duality}, we can see this yet another way.  We can rewrite the Bregman Voronoi cell as
\begin{equation}
  \label{eq:properties-7a}
  \mathrm{cell}_i = \left\{ t : i\in\argmax_j \; G(t_j) + dG_{t_j}(t-t_j) \right\}.
\end{equation}
By Lemma~\ref{lem:properties-dual-optimal}, this can in turn be written
\begin{equation}
  \label{eq:properties-7b}
  \mathrm{cell}_i = \left\{ t : i\in\argmax_j \; \inprod{\tilde t_j,t} - G^*(\tilde t_j) \right\},
\end{equation}
where $\tilde t_j = dG_{t_j}$.  Hence, for any convex function $G$, the sites $\{p_j\}$ and weights $w$ of a power diagram corresponding to the $D_G$ Bregman Voronoi diagram with sites $\{t_j\}$ are given by $p_j= \frac 1 2 dG_{t_j}$ and $w_j = \frac 1 4 \|dG_{t_j}\|^2 - G^*(dG_{t_j})$.

\section*{Acknowledgments}
We would like to thank
Aaron Archer,
Gabriel Carroll,
Yiling Chen,
Philip Dawid,
Felix Fischer,
Hu Fu,
Philippe Jehiel,
Nicolas Lambert,
Peter Key,
David Parkes,
Colin Rowat,
Greg Stoddard,
Matus Telgarsky,
and
Rakesh Vohra
for helpful discussions,
and anonymous reviewers for helpful feedback.
This work was supported in part by National Science Foundation Grant CCF-1657598.

\let\section\oldsection
\bibliographystyle{acmsmall}
{
\bibliography{diss,extra}
}

\end{document}